% SIAM Article Template
\documentclass[a4paper]{article}

% Information that is shared between the article and the supplement
% (title and author information, macros, packages, etc.) goes into
% ex_shared.tex. If there is no supplement, this file can be included
% directly.

\usepackage{a4wide}
\usepackage{amsfonts}
\usepackage{graphicx}
\usepackage{epstopdf}
\usepackage{algorithm,algorithmic}
\usepackage{amsmath}
\usepackage[amsmath,thmmarks,hyperref]{ntheorem}
\usepackage{xcolor, hyperref}
\usepackage{booktabs}
\usepackage{mathrsfs}  
%\usepackage[labelfont=bf]{caption}
%\captionsetup[table]{labelsep=space,justification=raggedright,singlelinecheck=off}
%\captionsetup[figure]{labelsep=quad}
% Prevent itemized lists from running into the left margin inside theorems and proofs
\usepackage{enumitem}
\setlist[enumerate]{leftmargin=.5in}
\setlist[itemize]{leftmargin=.5in}
\graphicspath{{Figures}{./}}

% Add a serial/Oxford comma by default.

\newcommand{\proofbox}{{\rule{1.5ex}{1.5ex}}}
% Proof environment: No number, italic header, normal body, ends with proofbox.
\theoremstyle{nonumberplain}
\theoremheaderfont{\normalfont\itshape}
\theorembodyfont{\normalfont}
\theoremseparator{.}
\theoremsymbol{\proofbox}
\newtheorem{proof}{Proof}

% Theorem environment: Small-caps header, italize body.
\theoremstyle{plain}
\theoremheaderfont{\normalfont\sffamily}
\theorembodyfont{\normalfont\itshape}
\theoremseparator{.}
\theoremsymbol{}
\newtheorem{theorem}{Theorem}

% Command to create other theorem-like environments, to be numbered
% along with the theorems.
\newcommand{\newsiamthm}[2]{
  \theoremstyle{plain}
  \theoremheaderfont{\normalfont\sffamily}
  \theorembodyfont{\normalfont\itshape}
  \theoremseparator{.}
  \theoremsymbol{}
  \newtheorem{#1}[theorem]{#2}
}

% Other predefined theorem-like environments
\newsiamthm{lemma}{Lemma}
\newsiamthm{corollary}{Corollary}
\newsiamthm{proposition}{Proposition}
\newsiamthm{definition}{Definition}
\newtheorem{remark}[theorem]{Remark}

% Sets running headers as well as PDF title and authors
% \headers{Rating transitions forecasting: a filtering approach}{A. Cousin, J. Lelong, R. Norberg and T. Picard}

% Title. If the supplement option is on, then "Supplementary Material"
% is automatically inserted before the title.
\title{Mean-variance dynamic portfolio allocation with transaction costs: a Wiener chaos expansion approach}
% \funding{This work was funded by Nexialog Consulting}}}

% Authors: full names plus addresses.
\newcommand{\email}[1]{#1}
\author{Areski Cousin\thanks{Institut de Recherche en Math\'ematique Avanc\'ee, Universit\'e de Strasbourg,
7 rue Ren\'e Descartes, 67084 Strasbourg, cedex, France. (\email{a.cousin@unistra.fr}).} 
\and J. Lelong\thanks{Univ. Grenoble Alpes, CNRS, Grenoble INP, LJK, 38000 Grenoble, France. 
  (\email{jerome.lelong@univ-grenoble-alpes.fr}).}
\and T. Picard\thanks{Univ. Grenoble Alpes, LJK, 38000 Grenoble, France and Nexialog, 75011 Paris, France.
  (\email{tom.picard@grenoble-inp.org}).}}

%\usepackage{amsopn}

% \input{ex_shared}

%Minuscules

%\newcommand{\na}{\nabla}

\newcommand{\Cov}{{\mathbb{C}\mbox{ov}}}
\newcommand{\Var}{{\mathbb{V}\mbox{ar}}}

\newcommand{\bfA}{\mbox{\boldmath $A$}}

\newcommand{\E}{{\mathbb E}}

\newcommand{\bfF}{\mbox{\boldmath $F$}}

\newcommand{\Prob}{{\mathbb P}}

\newcommand{\cF}{{\cal F}}
\newcommand{\cG}{{\cal G}}

  %skal ikke brukes i fotskrift/toppskrift
\newcommand{\be}{\begin{equation}}
\newcommand{\ee}{\end{equation}}
\newcommand{\bea}{\begin{eqnarray}}
\newcommand{\eea}{\end{eqnarray}}
\newcommand{\beaa}{\begin{eqnarray*}}
\newcommand{\eeaa}{\end{eqnarray*}}

\def\cF{\mathcal{F}}
\def\cG{\mathcal{G}}

\def\cM{\mathcal{M}}

\def\cU{\mathcal{U}}

%% \mathbb Letters

\def\bP{\mathbb{P}}
\def\bQ{\mathbb{Q}}
\def\bR{\mathbb{R}}

%% \mathscr Letters

%%%% \mathsf for Matrices

%%%%%%%%%%%%%%%%%%%%%%%%%%%%%%%%%%%%%%%%%%%%%%%%%%%%%%%%%%%%%%%%%%%%%%%%%

%%%%%%%%%%%%%%%%%%%%%%%%%%%%%%%%%%%%%%%%%%%%%%%%%%%%%%%%%%%%%%%%%%%%%%%%%%%%%%%%%
%%%%%%%%%%%%%%%%%%   Nonstandard notations  %%%%%%%%%%%%%%%%%%%%%%%%%%%%%%%%%%%%%
      % partial derivative
%\newcommand{\1}{\mathbbm{1}}            % preferable way of writing indicator function
%\newcommand{\indFn}[1]{1 \! \! 1_{#1}} % indicator function, old version
            % set: {xyz} to be used for inline formulas
 % set: {xyz} to be used for seapare (not inline) formulas
       % mid bar with small spaces before and after: x | y
   % absolute value
%\newcommand{\d}{\mathrm{d}}
   % d in dt
           % sgn in formulas
         % BS\DeltaE
       % BS\DeltaE
\DeclareMathOperator{\spane}{span}
\DeclareMathOperator{\sharpe}{Sharpe}

\DeclareMathOperator{\sign}{sign}
\DeclareMathOperator{\Diag}{Diag}

\definecolor{myblue}{RGB}{58, 110, 255}
\definecolor{blue}{HTML}{1F77B4}
\definecolor{orange}{HTML}{FF7F0E}
\definecolor{green}{HTML}{2CA02C}

           % ask price
           % bid price

\usepackage{dsfont}
\usepackage{hhline} 
\usepackage{scalerel,stackengine}
\usepackage[toc,page]{appendix}
\stackMath
\newcommand\reallywidehat[1]{%
\savestack{\tmpbox}{\stretchto{%
  \scaleto{%
    \scalerel*[\widthof{\ensuremath{#1}}]{\kern-.6pt\bigwedge\kern-.6pt}%
    {\rule[-\textheight/2]{1ex}{\textheight}}%WIDTH-LIMITED BIG WEDGE
  }{\textheight}% 
}{0.5ex}}%
\stackon[1pt]{#1}{\tmpbox}%
}
\parskip 1ex

\begin{document}

\maketitle

% REQUIRED
\begin{abstract}
This paper studies the  multi-period mean-variance portfolio allocation problem with transaction costs. Many methods have been proposed these last years to challenge the famous uni-period Markowitz strategy.
But these methods cannot integrate transaction costs or become computationally heavy and hardly applicable. In this paper, we try to tackle this allocation problem by proposing an innovative approach which relies on representing the set of admissible portfolios by a finite dimensional Wiener chaos expansion. This method is able to find an optimal strategy for the allocation problem subject to transaction costs. To complete the study, the link between optimal portfolios submitted to transaction costs and the underlying risk aversion is investigated. Then a competitive and compliant benchmark based on the sequential uni-period Markowitz strategy is built to highlight the efficiency of our approach. \\

\noindent {\bf Keywords}: multi-period portfolio allocation, mean-variance formulation, Wiener chaos expansion. \\
\noindent{\bf Classification}: 62L20, 91G10, 91G60, 93E20
 
\end{abstract}

\section{Introduction}

Dynamic portfolio selection is one of the most studied topic in financial economics. The problem consists in allocating the wealth of an investor, among a basket of assets, over time. Finding the optimal portfolio is a difficult challenge since it depends on the objective of the investor. The Markowitz mean-variance formulation represents a first answer to this problem by providing fundamental basics for static portfolio allocation in a uni-period case (see \cite{markowits1952portfolio}). 
Mean-variance framework offers to build a portfolio of assets such that the expected return is maximized for a given level of risk. This method is easy to apply and has the favor of asset managers. Nevertheless, when the time horizon increases, this myopic strategy which cannot see ahead of the next time period, cannot challenge the dynamic optimal portfolio obtained from the multi-period version of the problem.  
\cite{merton1969lifetime} is one of the first paper studying multi-period portfolio investment in a dynamic programming framework. In this seminal paper, the authors consider a problem with one risky asset
and one risk-free asset. At each date, the investor can re-balance its wealth between the two assets, seeking to maximize an utility of the final time horizon wealth. They derive a simple closed-form expression for the optimal policy when there are no constraint or transaction cost. In a companion paper, \cite{samuelson1975lifetime} derives the discrete-time analog approach. The results presented in those studies and the innovative and promising aspect of the multi-period portfolio selection have stimulated the interest of the related scientific community. In the following years, the literature in multi-period portfolio selection has considerably grown, dominated by maximizing expected utility of terminal wealth of elementary forms as logarithm, exponential or CRRA functions. Dynamic programming techniques turn out to be the most suitable approaches to solve this kind of problems.
%Among the most noteworthy articles,  \cite{brandt2005simulation} compares and highlights the conditions of equivalence between dynamic approaches and myopic strategies with CRRA utility functions. }
\\
However, important difficulties due the non separability of the problem in the sense of dynamic programming, have been reported in finding the optimal portfolio issued from the multi-period mean-variance approach. Nevertheless, \cite{li2000optimal} and \cite{zhou2000continuous} provide explicit formulation for the unconstrained multi-period mean–variance optimal portfolio both in a discrete and continuous time setting. \cite{li2002dynamic} derives the optimal portfolio policy for the continuous-time mean–variance model with no-shorting constraint. \cite{cui2014optimal} extends this work to provide a discrete framework. Even if these last studies have become increasingly realistic, the ignorance of transaction costs, hinders their efficient applications in real life. Transaction costs have a major impact on the optimal policy and cannot be ignored. 
%Nevertheless it remains difficult to integrate these constraints in this kind of dynamic algorithm. 
\\
The integration of transaction costs has been widely studied in the uni-period mean-var case (see \cite{best2007algorithm}, \cite{lobo2007portfolio}, \cite{pogue1970extension}, \cite{xue2006mean}, \cite{yoshimoto1996mean}). In a continuous time setting, the problem is not recent, especially when the time horizon is considered infinite (see \cite{dai2008penalty}, \cite{davis1990portfolio}, \cite{dumas1991exact}, \cite{morton1995optimal}, \cite{muthuraman2006multidimensional}). In a mean-var and finite time horizon setting, \cite{dai2010continuous} studies the properties of the optimal strategies and boundaries which define the buy, sell and no trade-regions. Discrete time allocation strategies submitted to transaction costs have also been widely pursued. \cite{constantinides1979multiperiod} and \cite{holden2013optimal} investigate optimal investment policies with proportional costs, accompanied by fixed costs for the second. They also describe them in terms of a no-trade region in which it is optimal to leave the portfolio allocation unchanged. 
But many difficulties have been reported in the literature to design efficient and accurate methods to compute the optimal solutions. Furthermore, when solutions are proposed, they remain computationally heavy and hardly applicable. Some of them tractably solve the problem in several special cases (see \cite{boyd2013performance}, \cite{draviam2002generalized} and \cite{peng2011new}) or search for sub-optimal policies
(see \cite{bertsimas2008robust}, \cite{calafiore2008multi},   \cite{li2022multi} and \cite{topcu2008multistage}). %rely a rolling horizon philosophy, with the latter utilizing a Model Predictive Control (MPC) approach to identify sub-optimal strategies.}
\\
The most prevalent methods to tackle this kind of problem are based on stochastic control techniques. However, they also suffer from the same drawbacks. 
In order to limit the dimensionality, \cite{cai2013numerical} uses Chebyshev polynomials to interpolate the value functions on a sparse grid of the space. 
Many of other related studies such as \cite{gennotte1994investment}, \cite{steinbach1999markowitz} and \cite{wang2013multi} rely on trees for modeling rates of return. \cite{al2020multi} goes further by assuming that every future rates of return is known by the investor. Under the same assumption, \cite{brown2011dynamic} derives an upper bound to measure and highlight the good performances of heuristic strategies.  
\cite{boyd2013performance} performs an ADP (Approximate dynamic programming) by using sub-optimal solutions to approximate value functions. These sub-optimal solutions are issued from a quadratic version of the problem or from MPC method (model predictive control). \cite{cong2016multi} uses an other sub-optimal solution, called multi-stage strategy to tune the exploring phase of its backward recursion algorithm. This method provides a solution at least, as good as the sub-optimal one. Recently, \cite{pun2022optimal} has also adopted dynamic programming but is forced to handle different cases separately, which makes their method hardly computationally applicable.
\\
\\
In this paper, we attempt to fill this gap by
providing a new computational scheme to solve multi-periods portfolio allocation problem submitted to transaction costs. We address this problem by proposing an innovative approach which relies on representing the set of admissible portfolios by their finite dimensional Wiener chaos expansion. This numerical method estimates optimal portfolios submitted to proportional transaction costs. The policies are computed thanks to a stochastic gradient descent algorithm, and require no exploration framework, which was a major step of the methods based on stochastic control algorithms (see \cite{bachouch2021deep}). 
Then a competitive benchmark, based on the sequential uni-period Markowitz strategy is built to highlight the efficiency of our approach. This benchmark relies on the independence between the Sharpe ratio and the risk aversion. 
\\
The main contribution of this paper is threefold. We introduce an innovative and efficient numerical method to get optimal portfolios submitted to transaction costs. Then, we study the links between risk aversion and multi-period optimal portfolios submitted to transaction costs. Finally, we provide a reliable benchmark with sequential mean-variance uni-period models in the context of transaction costs mentioned above. To the best of our knowledge, our paper is the first to provide this kind of benchmark. 
\\
The remaining of this paper is organized as follows. Section \ref{sec:framework} is dedicated to describe the mean-variance problem for multi-period portfolio allocation submitted to transaction costs. Our methodology which aims at finding optimal portfolios in this context, is presented in Section \ref{sec:optimal-solution}.
In section \ref{sec:resolution}, the link between risk aversion and optimal solutions is investigated. 
Finally, in Section \ref{sec:result}, we show the efficiency of our solution and examine the impact of transaction cost by comparing performances of the presented models with benchmark models such as the sequential uni-period Markowitz approach described in Appendix \ref{sec:benchmark}. 
%In Appendix \ref{sec:grid}, we discuss the use of a finer time grid for the Wiener chaos expansion.

\section{Environment}
\label{sec:framework}
This section defines the environment in which we address the mean-variance allocation problem. In particular, we present the general mean-variance formulation for multi-period portfolio allocation with transaction costs. Then, we specify the dynamics of risky assets and reformulate the initial problem. 

\subsection{Mean-variance formulation for multi-period portfolio allocation}
%We define the filtered probability space $(\Omega,\bfA,\Gamma = (\cG_t)_{t \in [0,T]},\mathbb{Q})\,,$ with $\Gamma=\sigma(W)$, where $W=(W^1,\dots,W^{d})$ is a Brownian motion defined on $[0,T]$ with values in $\bR$.  For $N$ $\in \mathbb{N}$, $N>0$, we introduce the discrete-time grid $0=t_0<t_1<\dots<t_{N}=T$.  Let $ \bfF$, be the discrete time filtration generated by the Brownian increments on this grid, $\cF_n=\sigma(W_{t_k}, \ k\leq n)$ for $0 \le n \le N$.
%Let consider a time horizon $[0,T]$ and 
%for $N$ $\in \mathbb{N}$, $N>0$, a discrete-time grid $0=t_0<t_1<\dots<t_{N}=T$. 
%We consider a filtered probability space $(\Omega,\bfA,\bfF = (\cF_n)_{n \in \ \{0,\ldots,N\}},\mathbb{P})\,.$ 
We define the filtered probability space $(\Omega,\bfA,\Gamma = (\cG_t)_{t \in [0,T]},\mathbb{P})\,,$ with $\Gamma=\sigma(W)$, where $W=(W^1,\dots,W^{d})$ is a Brownian motion defined on $[0,T]$ with values in $\bR$.  For $N$ $\in \mathbb{N}$, $N>0$, we introduce the discrete-time grid $0=t_0<t_1<\dots<t_{N}=T$.  Let $ \bfF$, be the discrete time filtration generated by the Brownian increments on this grid, $\cF_n=\sigma(W_{t_k}, \ k\leq n)$ for $0 \le n \le N$. We denote as 
%$\mathcal{L}=\left\{X\mid\ \forall p\leq1, \ X\in {L}^p(\Omega,\cF_N,\Prob)\right\}$. 
$\mathcal{L}=\bigcap_{p\geq 1}{L}^p(\Omega,\cF_N,\Prob)$. 
\\
\\
We consider a portfolio $V$, of initial wealth $v_0$, composed of the $d$ risky assets $S=(S^1,\dots,S^d)$ and the risk-free asset $S^0$. An investor, with a risk aversion $\gamma$ can re-allocate its portfolio $V$ at discrete times $t_1<\dots<t_{N}$. We define ($\alpha_n$, $\alpha_n^0)_{n\in\{1,N\}}$, the quantities of risky and risk free assets hold in the portfolio, such that $\forall n \in\{0,\dots,N\},\ V_n=\alpha_n \cdot S_n +\alpha_n^0S_n^0$. The processes $\alpha$ and $\alpha^0$ are assumed to be $\bf F$-predictable. The agent aims to find the strategy ($\alpha_n$, $\alpha_n^0)_{n\in\{1,N\}}$, such that the generated portfolio $V$, maximizes $\E_{\Prob}[V_N]-\gamma \E_{\Prob}\left[(V_N-\E_{\Prob}[V_N])^2\right]$.
\\
At time $0$, we assume that the owner has all its wealth in cash, ie $\alpha^0_0S_0^0=v_0$ and $\forall i\in\{1,\dots,d\}, \ \alpha^i_0=0 $. 
At time $n\in\{1,\dots,N-1\}$, the investor re-balances his portfolio from $(\alpha_n,\alpha_n^0)$ to $(\alpha_{n+1},\alpha_{n+1}^0)$. He pays the transaction costs, proportional to the trade cash volume per asset and equal to
$\sum_{i=1}^d \nu|\alpha_{n+1}^i-\alpha_{n}^i|S_n^i$.
The self-financing condition is 
$$
{V}_n=\alpha_{n+1}\cdot{S}_n +{\alpha}_{n+1}^0 S_n^0 + \sum_{i=1}^d\nu|\alpha_{n+1}^i-\alpha_{n}^i|S_n^i.
$$
The dynamic mean-variance portfolio allocation with transaction costs can be written as
\be
\label{sys:cost2}
\tag{$E^{\gamma}$}
\begin{aligned}
 & \sup_{(\alpha_n,\alpha_n^0)_n}
& & \E_{\Prob}\left[V_{N}-\gamma\left(V_N-\E_{\Prob}[V_N]\right)^2\right] \\
& \text{s.t.}
& & {V}_0=v_0,  \ (\alpha_n,\alpha_n^0)_n \ \bfF-Pred\\
%& & &  {V}_{n+1}={V}_{n}+\alpha_{n+1}\cdot\Delta {S}_{n+1}+ \alpha_{n+1}^0\Delta S_{n+1}^0 - \sum_{i=1}^d\nu|\alpha_{n+1}^i-\alpha_{n}^i|S_n^i
& & & {V}_n=\alpha_nS_n +\alpha_n^0S_n^0=\alpha_{n+1}\cdot{S}_n +{\alpha}_{n+1}^0 S_n^0 + \sum_{i=1}^d\nu|\alpha_{n+1}^i-\alpha_{n}^i|S_n^i.
\\
\end{aligned}
\ee
%Ou
%\be
%\label{sys:init}
%\tag{$E^{\gamma}$}
%\begin{aligned}
% & \sup_{(\alpha_n,\alpha_n^0)_n}
%& & \E_{\Prob}\left[(V_{N}-\gamma\left(V_N-%\E_{\Prob}[V_N]\right)^2\right] \\
%& \text{s.t.}
%& & {V}_0=v_0,  \ (\alpha_n,\alpha_n^0)_n \ \bfF-Pred\\
%& & &  {V}_{n+1}={V}_{n}+\alpha_{n+1}\cdot\Delta {S}_{n+1}+ \alpha_{n+1}^0\Delta S_{n+1}^0 - \sum_{i=1}^d\nu|\alpha_{n+1}^i-\alpha_{n}^i|S_n^i
%& & & {V}_n=\alpha_{n+1}\cdot{S}_n +{\alpha}_{n+1}^0 S_n^0 + \sum_{i=1}^d\nu|\alpha_{n+1}^i-\alpha_{n}^i|S_n^i \\
%& & & V_{n+1}=\alpha_{n+1}\cdot S_{n+1} + \alpha_{n+1}^0 S_{n+1}^0
%\\
%\end{aligned}
%\ee
%Let $\Delta$ be the operator which associates its increments to a process, for $k,n 
%\in \mathds{N}, \ k<n$ and $\Delta \Theta_n=\Theta_{n}-\Theta_{n-1}$.
%For any discrete time process $(\Theta_n)_n$, we write the associated discounted process $(\tilde{\Theta}_n)_n=(\frac{\Theta_n}{S_n^0})_n$. 
%For any discrete time process $(\Theta_n)_n$, we write  $\Theta_{k:n}=(\Theta_k,\Theta_{k+1},\dots,\Theta_{n-1},\Theta_n)$. 
%\\

\subsection{Framework and assumptions}
\label{sec:environment}
For any discrete time process $\Theta$, we write for $k,n 
\in \mathds{N}, \ k<n$ $\Theta_{k:n}=(\Theta_k,\Theta_{k+1},\dots,\Theta_{n-1},\Theta_n)$. Let $\Delta$ be the operator which associates its increments process $\Delta \Theta_n=\Theta_{n}-\Theta_{n-1}$.
We also define the normalized Brownian increments by 
$
\Delta \widehat{W_k}=\left(\frac{W^j_{t_{k}}-W^j_{t_{k-1}}}{\sqrt{t_{k}-t_{k-1}}}\right)_{j\in \{1,\ldots,d\}}$ for $1 \le k \le N$. 
\\
\\
%We consider $d$ $\bfF-$adapted risky assets $S=(S^1, \dots, S^d)$ and a risk free asset $S^0$. 
We assume that the financial market they represent is complete. 
%During the time period $[0,T]$, an investor, with an initial wealth $v_0>0$, can invest and re-balance its portfolio among $(S,S^0)$, at discrete times $t_0<t_1<,\dots,<t_{N} $. 
The risk free rate $(r_n)_{n\leq N}$ is assumed to be deterministic and $\forall n \in\{0,\dots,N\}$,  $ \log\left(\frac{S^0_{n+1}}{S^{0}_n}\right)=r_n(t_{n+1}-t_n)$. We assume that the risky assets are defined by
 \be
 \label{eq:asset}
\forall n\in\{1,\dots,N\}, \ i \in \{1,\dots,d\}, \,  \log\left(\frac{S_{n+1}^i}{S^{i}_n}\right)=\left(\mu_n^i-\frac{(\sigma_n^i)^T\sigma_n^i}{2}\right)(t_{n+1}-t_n)+ \sigma^i_n\cdot \Delta W_{n+1},
\ee
%$$
%\forall n\in\{1,\dotsN\}, \ i \in \{1,\dots,d\}, \  \Delta S_{n+1}^i=S^{i}_n\mu_n^i(t_{n+1}-t_n)+S_n^i \sigma^i_n\cdot \Delta W_{n+1},
%$$
where $\mu$ and $\sigma$ are $\bfF$-adapted processes with values in $\mathbb{R}^d$ and $\mathbb{R}^{d\times d}$ respectively. The process $\sigma$ is called the volatility process and we assume that $\forall n \in\{1,\dots,N\},$ $\sigma_n$ is a.s. invertible, $\E_{\Prob}\left[|\sigma_n|^2\right]<\infty$
and that $S_n\in\mathcal{L}$. 
This model implies that knowing $\cF_n$, the returns $\left\{\frac{S_{n+1}^i}{S_n^i}, \ i\in\{1,\dots,d\}\right\}$ are independent of $\cF_n$. Note that local volatility models fit in this framework by considering the Euler scheme of $\log(S)$. 
Let $\bfF^S = (\cF_n^S)_{0 \le n \le N}$ be the natural filtration of the risky assets, $\cF_n^S=\sigma(S_{t_k}, \ k\leq n)$ . 
Since $(\mu^i)_{i\in\{1,\dots,d\}}$ and $(\sigma^i)_{i\in\{1,\dots,d\}}$ are $\bfF$-adapted processes and $\sigma_n$ is a.s. invertible, we can easily prove by induction that $\bfF=\bfF^S$. 
\\
\\
For $0 \le n\leq N$ we define $\Phi_n=\sigma_n^{-1}(\mu_n-r_n)$. Let  $\tau(t)=\sup\{t_n < t \,:\, n\leq N\}$. We assume that  $\E_{\Prob}\left[\exp\left(\int_{0}^T\frac{1}{2}|\Phi_{\tau(u)}|^2du\right)\right]<\infty$ such that the process
$\left(\exp\left(-\int_{0}^t\Phi_{\tau(u)}\cdot dW_u-\frac{1}{2}\int_{0}^t|\Phi_{\tau(u)}|^2du\right)\right)_{0 \le t \le T}$ is a martingale. Therefore, we know from Girsanov's theorem that there exists a probability measure $\mathbb{Q}$ equivalent to $\bP$ , defined by 
$$\frac{d\mathbb{Q}}{d\mathbb{P}}=\exp\left(-\int_{0}^T\Phi_{\tau(u)}\cdot dW_u-\frac{1}{2}\int_{0}^T|\Phi_{\tau(u)}|^2du\right).
$$
We denote as $\mathscr{Z}=\left(\exp\left(-\int_{0}^{t_n}\Phi_{\tau(u)}\cdot dW_u-\frac{1}{2}\int_{0}^{t_n}|\Phi_{\tau(u)}|^2du\right)\right)_{0\leq n\leq N}.$ 
Let $(W_t^\bQ)_{t \in [0,T]}$ be a $\Gamma$-adapted process defined by $W_0^\bQ = 0$ and
$$
dW_{t}^\bQ=dW_t+\Phi_{\tau(t)}dt,
$$
then $W^\bQ$ is a Brownian motion under $\bQ$. 
Note that
\be
 \label{eq:changement}
 \forall i\in\{1,\dots,d\},\ \Delta W_{n+1}^{\bQ,i}\stackrel{\Prob}{=}\Delta W_{n+1}^{i}+ \Phi^i_n(t_{n+1}-t_n).
 \ee
Let $ \bfF^\bQ = (\cF^\bQ_n)_{0 \le n \le N} $, be the discrete time filtration generated by $W^\bQ$, $\cF_n^\bQ=\sigma(W_{t_k}^\bQ, \ k\leq n)$.  
We notice that 
\begin{equation*}
\forall n\in\{1,\dots, N\}, \ i \in \{1,\dots,d\}, \  \log\left(\frac{S_{n+1}^i}{S^{i}_n}\right)=r_n(t_{n+1}-t_n)+ \sigma^i_n\cdot \Delta W_{n+1}^\bQ,
\end{equation*}
We can see by induction that we also have $\bfF^\bQ=\bfF^S$.
\\
We use the tilde notation to denote discounting and we notice that $\tilde{S}^i = S^i/S^0$ is a $\bfF$-martingale under the probability $\mathbb{Q}$. Lastly, we assume that the process $\E_\bQ\left[\Delta \tilde{S}_{n+1} \Delta \tilde{S}_{n+1}^T|\cF_{n}\right]$ is a.s. invertible and 
%$\E_\bQ\left[\Delta \tilde{S}_{n+1} \Delta \tilde{S}_{n+1}^T|\cF_{n}\right]^{-1}$ is square integrable. 
$\E_{\bQ}\left[\Delta \tilde{S}_{n+1}\Delta \tilde{S}_{n+1}^T|\cF_{n}\right]^{-1}, \ \mathscr{Z}_n \in\mathcal{L}$.

%\section{Mean-variance Portfolio allocation}
%\label{sec:mean-var}
\subsection{Reformulation of the problem}
%We consider a portfolio $V$, of initial wealth $v_0$, composed of the $d$ risky assets $S=(S^1,\dots,S^d)$ and the risk-free asset $S^0$. An investor, with a risk aversion $\gamma$ can re-allocate its portfolio $V$ at discrete times $t_1<\dots<t_{N}$. We define ($\alpha_n$, $\alpha_n^0)_{n\in\{1,N\}}$, the quantities of risky and risk free assets hold in the portfolio, such that $\forall n \in\{0,\dots,N\},\ V_n=\alpha_n \cdot S_n +\alpha_n^0S_n^0$. The processes $\alpha$ and $\alpha^0$ are assumed to be $\bf F$-predictable. The agent aims to find the strategy ($\alpha_n$, $\alpha_n^0)_{n\in\{1,N\}}$, such that the generated portfolio $V$, maximizes $\E_{\Prob}[V_N]-\gamma \E_{\Prob}\left[(V_N-\E_{\Prob}[V_N])^2\right].$
%\\
%At time $0$, we assume that the owner has all its wealth in cash, ie $\alpha^0_0S_0^0=v_0$ and $\forall i\in\{1,\dots,d\}, \ \alpha^i_0=0 $. 
%At time $n\in\{1,\dots,N-1\}$, the investor re-balances his portfolio from $(\alpha_n,\alpha_n^0)$ to $(\alpha_{n+1},\alpha_{n+1}^0)$. He pays the transaction costs, proportional to the trade cash volume per asset and equal to
%$\sum_{i=1}^d \nu|\alpha_{n+1}^i-\alpha_{n}^i|S_n^i$.
%The self-financing condition is 
%$$
%\tilde{V}_n=\alpha_n \cdot \tilde{S}_n + \alpha_n^0=\alpha_{n+1}\cdot\tilde{S}_n +{\alpha}_{n+1}^0 + \sum_{i=1}^d\nu|\alpha_{n+1}^i-\alpha_{n}^i|\tilde{S}_n^i.
%$$
%So we have
With the notations and assumptions of Section\ref{sec:environment}, the self-financing condition can be reformulated as
\be
\label{eq:dyn-cout}
\Delta \tilde{V}_{n+1}=\alpha_{n+1}\cdot\Delta \tilde{S}_{n+1}-\sum_{i=1}^d\nu|\alpha_{n+1}^i-\alpha_{n}^i|\tilde{S}_n^i.
\ee
%Note that $\Delta \tilde V$ depends only $\alpha$. Therefore, it is sufficient to determine $\alpha$ as $\alpha^0$ follows the self-financing condition and $v_0$.
%\\
%\\
%Note that $\tilde{V}$ is not a $\bfF$-martingale but a  $\bfF$-super-martingale under $\mathbb{Q}$. 
We define the cumulative cost process $\mathcal{C}$ by
\be
\forall n \in \{1,\dots,N\}, \ \mathcal{C}_n=\sum_{k=0}^{n-1}\sum_{i=1}^d\nu|\alpha_{k+1}^i-\alpha_{k}^i|\tilde{S}_k^i, \ \mathcal{C}_0=0.
\ee
We decide to work with the $\bfF$-martingale under $\mathbb{Q}$, $\tilde{X}_n= \tilde{V}_n + \mathcal{C}_n$.
%\be
%\label{sys:cost2}
%\tag{$E^{\gamma}$}
%\begin{aligned}
% & \sup_{(\alpha_n)_n}
%& & \E_{\Prob}\left[(\tilde{X}_{N}-{\mathcal{C}}_N)S_N^0-\gamma\left((\tilde{X}_{N}-{\mathcal{C}}_N)S_N^0-\E_{\Prob}[(\tilde{X}_{N}-{\mathcal{C}}_N)S_N^0]\right)^2\right] \\
%& \text{subject to}
%& & \tilde{X}_0=v_0,  \ (\alpha_n)_n \ \bfF-Pred\\
%& & &  \tilde{X}_{n+1}=\tilde{X}_{n}+\alpha_{n+1}\cdot\Delta \tilde{S}_{n+1} \\
%& & &  {\mathcal{C}}_{n+1}={\mathcal{C}}_n+\sum_{i=1}^d \nu\frac{|\alpha_{n+1}^i-\alpha_{n}^i|S_n^i}{S_n^0}\\
%\end{aligned}
%\ee
Let $\mathcal{M}$ be the space of squared integrable $\bfF$-martingales under $\mathbb{Q}$ and $\mathcal{M}_S$ the sub-space of $\mathcal{M}$ defined by 
$$
 \mathcal{M}_S=\left\{
    \begin{array}{ll}
       M\in\mathcal{M}: \ \exists \ (\alpha_{k})_{1\leq k\leq N} \ \bfF-\text{predictable} \ s.t \ \forall n\in\{0,\dots,N\}, \\
       M_n=h+\sum_{k=0}^{n-1}\alpha_{k+1}\cdot\Delta \tilde{S}_{k+1} 
       %\ \text{and} \ \E[|\alpha_{k+1}\sigma_k|^2]<\infty
    \end{array}
\right \}. 
$$
$\mathcal{M}_S$ is the set of martingales which are a martingale transformations of $\tilde{S}$. The space $\mathcal{M}_S$ also represents the space of admissible portfolios in discrete-time.
%Equivalently, \eqref{sys:cost2} can be rewritten as
The mean-variance multi-period portfolio allocation problem with transaction costs \eqref{sys:cost2}, can be reformulated as
\be
\label{system_martingale2}
\tag{$E_{\mathcal{M}_S}^{\gamma}$}
\begin{aligned}
& \sup_{\tilde{X} \in \mathcal{M}_S}
& & \E_{\Prob}\left[(\tilde{X}_{N}-{\mathcal{C}}_N)S_N^0-\gamma\left((\tilde{X}_{N}-{\mathcal{C}}_N)S_N^0-\E_{\Prob}[(\tilde{X}_{N}-{\mathcal{C}}_N)S_N^0]\right)^2\right] \\
& \text{s.t.}
& & \tilde{X}_0=v_0,
\\
& & &  {\mathcal{C}}_{n+1}={\mathcal{C}}_n+\sum_{i=1}^d \nu|\alpha_{n+1}^i-\alpha_{n}^i|\tilde{S}_n^i\\
\end{aligned}
\ee
In the next section, we present an innovative approach to solve the dynamic mean-var allocation problem in presence of transaction costs.

\section{Main results}
\label{sec:optimal-solution}
The main contribution of this paper is to propose and study a numerically tractable approximation of~\eqref{system_martingale2}. First, we embed the original problem into a more standard stochastic optimization framework~\eqref{system_chaosT2}. Then, we use Wiener chaos polynomials to obtain a finite dimensional formulation~\eqref{system_chaos_K}.

\subsection{Embedding representation}
The main difficulty remains to parameterize $\mathcal{M}_S$. We propose to find optimal portfolios on $\mathcal{M}_S$ by exploring ${\mathcal{M}}$. Furthermore, the objective function $(\tilde{X}_{N}-{\mathcal{C}}_N)S_N^0-\gamma\left((\tilde{X}_{N}-{\mathcal{C}}_N)S_N^0-\E_{\Prob}[(\tilde{X}_{N}-{\mathcal{C}}_N)S_N^0]\right)^2$ depends on the law of the portfolio value. This non-separable form is not easy to manipulate.  
In order to solve \eqref{system_martingale2}, we would like to embed it into a tractable equivalent one.
%The following step of the method consists in finding optimal solutions on $\mathcal{M}_S$ by exploring $\mathcal{M}$. 
%To do so, we will use the function derived in the following proposition.  
\begin{proposition}
\label{prop:main}
The problem \eqref{system_martingale2} is equivalent to 
%\be
%\label{system_chaosT2}
%\tag{$\mathcal{E}_\mathcal{M}^{\gamma}$}
%\begin{aligned}
%& \sup_{Z \in {\mathcal{M}}, \ \theta\in\mathbb{R}}
%& & \E_{\Prob}\left[(Pr({Z})_{N}-{\mathcal{C}}_N(Z))S_N^0-\gamma\left((Pr({Z})_{N}-{\mathcal{C}}_N(Z))S_N^0-\theta\right)^2\right] \\
%& \text{s.t.}
%& & Z_0=v_0,
% \\
%& & &  \mathcal{C}_{n+1}(Z)={\mathcal{C}}_n(Z)+\sum_{i=1}^d \nu|\alpha_{n+1}^i(Z)-\alpha_{n}^i(Z)|\tilde{S}_n^i\\
%\end{aligned}
%\ee
%With  
%\be
%\label{eq:projection}
%Pr(Z) = \left(Z_0+\sum_{k=0}^{n-1}\alpha_{k+1}(Z)\cdot\Delta \tilde{S}_{k+1}  \right)_{0\leq n\leq N},
%\ee 
%\begin{equation}
%\label{eq:control_X}
%\alpha_{n+1}(Z) = \left(\E_\bQ\left[\Delta \tilde{S}_{n+1} \Delta \tilde{S}_{n+1}^T|\cF_{n}\right]\right)^{-1} \E_\bQ\left[\Delta Z_{n+1}\Delta \tilde{S}_{n+1}|\cF_{n}\right].
%\end{equation} 

\be
\label{system_chaosT2}
\tag{$\mathcal{E}_\mathcal{M}^{\gamma}$}
\begin{aligned}
& \sup_{Z \in {\mathcal{M}}, \ \theta\in\mathbb{R}}
& & \E_{\Prob}\left[F^{\gamma}(Z,\theta)\right] \\
& \text{s.t.}
& & Z_0=v_0,
 \\
& & &  \mathcal{C}_{n+1}(Z)={\mathcal{C}}_n(Z)+\sum_{i=1}^d \nu|\alpha_{n+1}^i(Z)-\alpha_{n}^i(Z)|\tilde{S}_n^i\\
\end{aligned}
\ee
with 
\be
F^{\gamma}(Z,\theta)=\mathcal{R}(Z)S_N^0-\gamma\left((\mathcal{R}(Z)-\theta)S_N^0\right)^2,
\ee
\be
\mathcal{R}(Z)=\Pr\left(Z\right)_N-\mathcal{C}_N(Z),
\ee
\be
\label{eq:projection}
\Pr(Z) = \left(Z_0+\sum_{k=0}^{n-1}\alpha_{k+1}(Z)\cdot\Delta \tilde{S}_{k+1}  \right)_{0\leq n\leq N},
\ee 
\begin{equation}
\label{eq:control_X}
\alpha_{n+1}(Z) = \left(\E_\bQ\left[\Delta \tilde{S}_{n+1} \Delta \tilde{S}_{n+1}^T|\cF_{n}\right]\right)^{-1} \E_\bQ\left[\Delta Z_{n+1}\Delta \tilde{S}_{n+1}|\cF_{n}\right].
\end{equation} 

\end{proposition}
The proof of Proposition \ref{prop:main} relies on the following Lemma. 
\begin{lemma}
\label{prop:equivalent}
The problem \eqref{system_martingale2} is equivalent to
\be
\label{mathcal_E_gamma}
\tag{$\mathcal{E}^{\gamma}_{\mathcal{M}_S}$}
\begin{aligned}
& \sup_{\tilde{X} \in \mathcal{M}_S, \ \theta\in\mathbb{R}}
& & \E_{\Prob}\left[(\tilde{X}_{N}-{\mathcal{C}}_N)S_N^0-\gamma\left((\tilde{X}_{N}-{\mathcal{C}}_N-\theta)S_N^0\right)^2\right] \\
& \text{s.t.}
& & \tilde{X}_0=v_0,
\\
& & &  {\mathcal{C}}_{n+1}={\mathcal{C}}_n+\sum_{i=1}^d \nu|\alpha_{n+1}^i-\alpha_{n}^i|\tilde{S}_n^i\\
\end{aligned}
\ee
\end{lemma}
\begin{proof}[of Lemma \ref{prop:equivalent}]
Let $(T_n)_n$ be a $\bfF$-adapted process. We define the function
\beaa
\begin{aligned}
\cU\ :\ & \mathbb{R}\rightarrow \mathbb{R}\\
&\theta\longmapsto\E_{\Prob}[T_N]-\gamma\E_{\Prob}\left[\left((T_N-\theta)S_N^0\right)^2\right]
\end{aligned}
\eeaa
The function $\cU$ is a second order polynomial such that $\lim_{|\theta| \to \infty} \cU(\theta) = -\infty$. Then, $\cU$ attains its maximum at a unique point $\theta^\star$ defined by $\nabla \cU(\theta^{\star})=0$, $\theta^{\star}=\E_{\Prob}[T_N]$.
\end{proof}

\begin{proof}[of Proposition \ref{prop:main}]
%\label{prop:control}
 Let $M \in \mathcal{M}$. 
%  We show that the $L^2(\bQ)$ orthogonal projection of $M$ on $\mathcal{M}_S$ is defined by
% \beaa
% Pr(M) = \left(M_0+\sum_{k=0}^{n-1}\left(\E_\bQ\left[\Delta \tilde{S}_{k+1} \Delta \tilde{S}_{k+1}^T|\cF_{k}\right]^{-1}\E_\bQ\left[\Delta M_{k+1}\Delta \tilde{S}_{k+1}|\cF_{k}\right]\right)\cdot\Delta \tilde{S}_{k+1}  \right)_{0\leq n\leq N},
% \eeaa 
% where
% \begin{equation*}
% \E_\bQ\left[\Delta \tilde{S}_{n+1} \Delta (\tilde{S}_{n+1})^T|\cF_{n}\right]=\left[\tilde{S}_{n}^i \tilde{S}_{n}^j \left(e^{(\sigma_n^i)^T\sigma_n^j (t_{n+1}-t_n)}-1\right)\right]_{i,j\in\{1,\dots,d\}}.
% \end{equation*}
The projection of $M$ on $\cM_S$ writes as $$\left(h+\sum_{k=0}^{n-1} \alpha_{k+1}\cdot\Delta \tilde{S}_{k+1}\right)_{0\leq n\leq N},$$ with $\alpha_{k}^j$ $\bfF$-predictable, $h \in \mathbb{R}$. Let $B\in\mathcal{M}_S$ such that for $n \in\{0,\dots,N\}$, \ 
$B_n= g+\sum_{k=0}^{n-1} c_{k+1}\cdot\Delta \tilde{S}_{k+1}, \ g\in\mathbb{R}, \ c_{k+1}^j \ \bfF-predictable$. Then we have for $n\in\{1,\dots,N\}$,
\be
\E_\bQ\left[\left(M_n-h-\sum_{k=0}^{n-1} \alpha_{k+1}\cdot\Delta \tilde{S}_{k+1} \right)\left(g+\sum_{k=0}^{n-1} c_{k+1}\cdot\Delta \tilde{S}_{k+1}\right)\right]=0.
\ee
By taking $c_{k+1}^j=0$, for all $j \in\{1,\dots,d\}$, we get 
$h=M_0$. Setting $g=0$, we have
\be
\label{eq=0}
\E_\bQ\left[\left(\sum_{l=0}^{n-1}\Delta M_{l+1}-\sum_{k=0}^{n-1} \alpha_{k+1}\cdot\Delta \tilde{S}_{k+1} \right)\left(\sum_{k=0}^{n-1} c_{k+1}\cdot\Delta \tilde{S}_{k+1}\right)\right]=0.
\ee
By developing the expression and using that $M$ and $\tilde{S}^j$ are $\bfF$-martingale, we obtain for $l\neq k$, 
\be
\E_\bQ\left[ c_{k+1}^j \Delta M_{l+1} \Delta \tilde{S}_{k+1}^j \right]=\E_\bQ\left[ c_{k+1}^j \E_\bQ\left[\Delta M_{l+1} \Delta \tilde{S}_{k+1}^j|\cF_{l\vee k}\right] \right]=0.
\ee
Then, for $1\leq k\leq n-1$,
\be
\label{eqDelta}
\E_\bQ\left[\sum_{l=0}^{n-1} \Delta M_{l+1} c_{k+1}^j\Delta \tilde{S}^j_{k+1}\right]=\E_\bQ\left[c_{k+1}^j\Delta M_{k+1} \Delta \tilde{S}^j_{k+1}\right].
\ee
Finally, inserting (\ref{eqDelta}) in (\ref{eq=0}) yields 
\be
\E_\bQ\left[\sum_{k=0}^{n-1}  c_{k+1}\cdot\Delta \tilde{S}_{k+1} \Delta M_{k+1}-\sum_{k_1=0}^{n-1}\sum_{k_2=0}^{n-1} (\alpha_{k_1+1}\cdot\Delta \tilde{S}_{k_1+1}) (c_{k_2+1}\cdot \Delta \tilde{S}_{k_2+1})  \right]=0
\ee
%\be
%\E_\bQ\left[\sum_{k=0}^{N-1} c_{k+1}^T\Delta \tilde{S}_{k+1} \Delta M_{k+1}-\sum_{k_1=0}^{N-1}\sum_{k_2=0}^{N-1} \alpha_{k_1+1}^T \Delta \tilde{S}_{k_1+1}\tilde{S}_{k_2+1}^T c_{k_2+1}  \right]=0
%\ee
In the same way, for $k_1 \neq k_2$, 
\be
\E_\bQ\left[ (\alpha_{k_1+1}\cdot\tilde{S}_{k_1+1})( c_{k_2+1}\cdot\Delta \tilde{S}_{k_2+1})  \right]=\E_\bQ\left[\E\left[ (\alpha_{k_1+1}\cdot\Delta\tilde{S}_{k_1+1}) (c_{k_2+1}\cdot\Delta\tilde{S}_{k_2+1})|\cF_{k_1 \vee k_2} \right] \right]=0.
\ee
Then 
\be
\sum_{k=0}^{n-1}\E_\bQ\left[ c_{k+1}\cdot\E_\bQ\left[\Delta \tilde{S}_{k+1} \Delta M_{k+1} |\cF_{k}\right]-  c_{k+1}\cdot\left(\E_\bQ\left[\Delta \tilde{S}_{k+1} \Delta \tilde{S}_{k+1}^T|\cF_k \right]\alpha_{k+1}\right)   \right]=0.
\ee
So
\be
\sum_{k=0}^{n-1}\E_\bQ\left[ c_{k+1}\cdot\left(\E_\bQ\left[\Delta \tilde{S}_{k+1} \Delta M_{k+1} |\cF_{k}\right]- \E_\bQ\left[\Delta \tilde{S}_{k+1} \Delta \tilde{S}_{k+1}^T|\cF_k \right]\alpha_{k+1}  \right) \right]=0.
\ee
As this equation holds for all $(c_k)_{1 \le k \le N}$, we deduce that $(\alpha_n)_{n}$ solves 
\begin{equation*}
\E_\bQ\left[\Delta M_{n+1}\Delta \tilde{S}_{n+1}|\cF_{n}\right]=\E_\bQ\left[\Delta \tilde{S}_{n+1} \Delta \tilde{S}_{n+1}^T|\cF_{n}\right] \alpha_{n+1}.
\end{equation*}
Then, we conclude that $(\alpha_n)_n$ is defined by~\eqref{eq:control_X} and that \eqref{mathcal_E_gamma} can be reformulated as~\eqref{system_chaosT2}.
\end{proof}

\subsection{Wiener chaos parametrization}
\label{sec:Wiener chaos expansion of the solution}
We will parametrize the elements of $\mathcal{M}$ through the Wiener chaos expansion of their terminal value. The basic theory of Wiener chaos expansion and the properties used here are presented in Appendix \ref{sec:chaos}.
 According to \cite[Theorem 2.1]{akahori2017discrete}, every element $Y$ of $L^2(\Omega,\cF_N,\mathbb{Q})$ can be represented by its Wiener chaos expansion as
\begin{equation}
Y = \E_\bQ[Y]+\sum_{\substack{\lambda \in (\mathbb{N}^{{N}})^d}} \beta_{\lambda} H_{\lambda}^{\bigotimes}(\Delta \widehat{W}^\bQ),
\end{equation}
where $$H_{\lambda}^{\bigotimes}(\Delta \widehat{W}^\bQ) = \prod_{j=1}^d\prod_{i= 1}^N H_{\lambda_{i}^j}\left(\frac{W^{\bQ,j}_{t_{i}}-W^{\bQ,j}_{t_{i-1}}}{\sqrt{t_{i}-t_{i-1}}}\right).$$
We define the truncated expansion of $Y$ of order $K$ under $\mathbb{Q}$ by $\mathscr{C}_K(Y)$ 
such that 
$$
\mathscr{C}_K(Y)=\E_\bQ[Y]+\sum_{\substack{\lambda \in (\mathbb{N}^{N})^d \\ |\lambda|_1\leq K}} \beta_{\lambda} H_{\lambda}^{\bigotimes}(\Delta \widehat{W}^\bQ).
$$
It is well-known that $\lim_{K \to +\infty} \E_\bQ\left[|Y-\mathscr{C}_K(Y)|^2\right]=0.$ 
Proposition \ref{prop:tronc} states that for $n\leq N$, $\E_{\bQ}\left[\mathscr{C}_K(Y)|\cF_{n}\right] = \mathscr{C}_K(\E_{\bQ}[Y|\cF_n])$, which can be obtained by removing the non $\cF_n$-measurable terms from the chaos expansion of $Y$. By identifying the martingale $Z\in\mathcal{M}$ with the chaos expansion of its terminal value $Z_N$,  
we slightly abuse the notation $\mathscr{C}_K$ 
to define the process $\mathscr{C}_K(Z)=(\mathscr{C}_K(Z_n))_n$.
%We  also call $C_K(Y)$ the random variable of the form
%\beaa
%C_K(Y)=\E_\bQ[Y]+\sum_{\substack{\lambda \in (\mathbb{N}^{N})^d \\ |\lambda|_1 \leq K}} \eta_{\lambda}  H_{\lambda}^{\bigotimes}(\Delta \widehat{W}^\bQ),
%\eeaa
%which has the same law than $C_K(Y)$ under $\mathbb{Q}$. 
% By abuse of language, we call this variable, the Wiener chaos expansion of $Y$ under $\mathbb{Q}$.
%  \begin{remark}
% We have the following representation,
%$
%Y\approx\E_\bQ[Y]+\sum_{\substack{\lambda \in (\mathbb{N}^{N})^d \\ |\lambda|_1 \leq K}} \eta_{\lambda}  H_{\lambda}^{\bigotimes}(\Delta \reallywidehat{W^P+\Phi}).$
%\end{remark}
%%%%%%%%%%%%%%%%%%%%%%%%%%%%%%%%%%%%%%
%Preuve de convergence
\\
Let's consider the analog problem for $K>0$
\be
\label{system_chaos_K}
\tag{$\mathcal{E}_K^{\gamma}$}
\begin{aligned}
& \sup_{Z \in {\mathcal{M}}, \ \theta\in\mathbb{R}}
& & \E_{\Prob}\left[F_{K}^{\gamma}(Z,\theta)\right] \\
& \text{s.t.}
& & Z_0=v_0, \\
& & &  \mathcal{C}_{n+1}(\mathscr{C}_K(Z))={\mathcal{C}}_n(\mathscr{C}_K(Z))+\sum_{i=1}^d \nu|\alpha_{n+1}^i(\mathscr{C}_K(Z))-\alpha_{n}^i(\mathscr{C}_K(Z))|\tilde{S}_n^i
\end{aligned}
\ee
\beaa
\begin{aligned}
\text{with} \ F_{K}^{\gamma}(Z,\theta) =F^{\gamma}(\mathscr{C}_K(Z),\theta)=\mathcal{R}(\mathscr{C}_K(Z))S_N^0-\gamma\left((\mathcal{R}(\mathscr{C}_K(Z))-\theta)S_N^0\right)^2.
\end{aligned}
\eeaa 
%\beaa
%\forall X \in \mathcal{L}, \ \sup_{K>1}\left[X\mathscr{C}_K(Z^{\star})^2\right]<\infty. 
%\eeaa
The following proposition states that the optimum of \eqref{system_chaos_K} converges to the optimum of \eqref{system_chaosT2}. As a result, we can approach the optimum of  \eqref{system_chaosT2} by solving \eqref{system_chaos_K}. 
\begin{proposition}
\label{prop:convergence}
Let $(Z^{\star},\theta^{\star})$ be a solution to \eqref{system_chaosT2}. If $\forall \mathcal{X} \in \mathcal{L}, \ \sup_{K>1}\left[\mathcal{X}\mathscr{C}_K(Z^{\star})^2\right]<\infty$, then, there exists $\eta>0$ such that for all $K>0$ and $(Z^K,\theta^K)$ solution to \eqref{system_chaos_K}, we have 
$$
|\E_{\Prob}\left[F^{\gamma}(Z^{\star},\theta^{\star})\right]-\E_{\Prob}\left[F^{\gamma}_K(Z^K,\theta^K)\right]|\leq\eta\E_{\Prob}\left[|Z^{\star}_N-\mathscr{C}_K(Z^{\star}_N)|^2\right]^{\frac{1}{2}}.
$$
\end{proposition}
The proof of the proposition is based on the following lemmas.
\begin{lemma}
\label{lem:convergence}
Let $(Z^{\star},\theta^{\star})$ be a solution to \eqref{system_chaosT2}. If $\forall \mathcal{X} \in \mathcal{L}, \ \sup_{K>1}\left[\mathcal{X}   \
\mathscr{C}_K(Z^{\star})^2\right]<\infty$, then there exists $\eta>0$ such that 
$$
\left|\E_{\Prob}\left[F^{\gamma}(Z^{\star},\theta^{\star})\right]-\E_{\Prob}\left[F^{\gamma}_K(Z^{\star},\theta^{\star})\right]\right|\leq \eta\E_{\Prob}\left[|Z^{\star}_N-\mathscr{C}_K(Z^{\star}_N)|^2\right]^{\frac{1}{2}}.
$$
\end{lemma}
\begin{lemma}
Let $ 1 \leq p < 2$, then $\exists \ \zeta>0$, such that $\forall \ Z,\ Z^{'}\in \mathcal{M}$, $\forall n\in \{0,\ldots,N-1\}$, 
\be
\label{eq:alpha}
\E_{\Prob}\left[|\alpha_{n+1}(Z)-\alpha_{n+1}(Z^{'})|^p\right]\leq \zeta \E_{\bQ}\left[|\Delta Z_{n+1}-\Delta Z^{'}_{n+1}|^2\right]^{\frac{p}{2}},  
\ee
\be
\label{eq:Pr}
\E_{\Prob}\left[|Pr(Z^{\star})_N-Pr(\mathscr{C}(Z^{\star}))_N|\right]\leq \zeta \E_{\bQ}\left[\left| Z_{N}- Z^{'}_{N}\right|^2\right]^{\frac{1}{2}}.
\ee
\label{lem:alpha}
\end{lemma}

\begin{proof}[of Lemma \ref{lem:alpha}]
Using consecutively Jensen's inequality and Holder's inequality with the coefficients $\frac{2}{2-p}$ and $\frac{2}{p}$, we have 
\beaa
\begin{aligned}
  &\E_{\Prob}\left[|\alpha_{n+1}(Z)-\alpha_{n+1}(Z^{'})|^p\right] =  \E_{\Prob}\left[\left|\E_{\bQ}\left[\Delta \tilde{S}_{n+1}\Delta \tilde{S}_{n+1}^T|\cF_{n}\right]^{-1}\E_{\bQ}\left[\left(\Delta Z_{n+1}-\Delta Z_{n+1}^{'}\right)\Delta \tilde{S}_{n+1}|\cF_n\right]\right|^p\right]\\
  & \leq \E_{\bQ}\left[\E_{\bQ}\left[\left|\mathscr{Z}_n\right|\left|\left(\Delta Z_{n+1}-\Delta Z_{n+1}^{'}\right)\E_{\bQ}\left[\Delta \tilde{S}_{n+1}\Delta \tilde{S}_{n+1}^T|\cF_{n}\right]^{-1}\Delta \tilde{S}_{n+1}\right|^{p}|\cF_n\right]\right]
  \\
  & \leq \E_{\bQ}\left[\left|\mathscr{Z}_n^{\frac{1}{p}}\E_{\bQ}\left[\Delta \tilde{S}_{n+1}\Delta \tilde{S}_{n+1}^T|\cF_{n}\right]^{-1}\Delta \tilde{S}_{n+1}\right|^{\frac{2p}{2-p}}\right]^{\frac{2-p}{2}}\E_{\bQ}\left[\left|\Delta Z_{n+1}-\Delta Z_{n+1}^{'}\right|^2\right]^{\frac{p}{2}}.
\end{aligned}
\eeaa
By recalling that $\tilde{S}_{n+1}, \ \E_{\bQ}\left[\Delta \tilde{S}_{n+1}\Delta \tilde{S}_{n+1}^T|\cF_{n}\right]^{-1}, \ \mathscr{Z}_n \in\mathcal{L}$, we find $\zeta>0$ such that 
\\ 
$\E_{\bQ}\left[\left|\mathscr{Z}_n^{\frac{1}{p}}\E_{\bQ}\left[\Delta \tilde{S}_{n+1}\Delta \tilde{S}_{n+1}^T|\cF_{n}\right]^{-1}\Delta \tilde{S}_{n+1}\right|^{\frac{2p}{2-p}}\right]^{\frac{2-p}{2}}<\zeta$, and obtain (\ref{eq:alpha}). 
For the second inequality (\ref{eq:Pr}), we use Holder's inequality with the coefficients $p=\frac{3}{2}$, $q=3$, and (\ref{eq:alpha}). The constant $\zeta'>0$ may change over the inequalities.
\beaa
\begin{aligned}
& \E_{\Prob}\left[|Pr(Z^{\star})_N-Pr(\mathscr{C}(Z^{\star}))_N|\right]\leq \sum_{k=0}^{N-1}\E_{\Prob}\left[\left|\alpha_{k+1}(Z^{\star})\cdot\Delta \tilde{S}_{k+1}-\alpha_{k+1}(\mathscr{C}_K(Z^{\star}))\cdot\Delta \tilde{S}_{k+1}\right|\right]\\
& \leq \sum_{k=0}^{N-1}\E_{\Prob}\left[\left|\alpha_{k+1}(Z^{\star})-\alpha_{k+1}(\mathscr{C}_K(Z^{\star}))\right|^p\right]^{\frac{1}{p}}\E_{\Prob}\left[\left|\Delta \tilde{S}_{k+1}\right|^q\right]^{\frac{1}{q}} \leq \sum_{k=0}^{N-1 }\zeta^{'} \E_{\bQ}\left[\left|\Delta Z_{k+1}-\Delta\mathscr{C}_K(Z^{\star})_{k+1}\right|^2\right]^{\frac{1}{2}}.\\
%\left(\text{using Cauchy Swartz' inequality and noticing that $Z^{\star}-\mathscr{C}_K(Z^{\star}$ is  martingales}\right)\\ 
& \leq \zeta^{'}\left(\sum_{k=0}^{N-1 } \E_{\bQ}\left[\left|\Delta Z_{k+1}-\Delta\mathscr{C}_K(Z^{\star})_{k+1}\right|^2\right]\right)^{\frac{1}{2}}=\zeta^{'}\E_{\bQ}\left[\left| Z_{N}-\mathscr{C}_K(Z^{\star})_{N}\right|^2\right]^{\frac{1}{2}}.
\end{aligned}
\eeaa
The last equality stems from the martingale property of $Z^{\star}-\mathscr{C}_K(Z^{\star})$. 
\end{proof}
\begin{proof}[of Lemma \ref{lem:convergence}]
%We first notice that $\forall\ 1\leq i \leq d, 1\leq k \leq N$, \  $\exists \ \mathscr{A}^{ik}$ a $\cF_{k-1}$-measurable, linear function on $\mathcal{M}$, such that $\forall Z\in\mathcal{M}$, $\E_{\Prob}[|\mathscr{A}^{ik}(Z)|^2]<\infty$ and $\mathcal{C}_N(Z)=\sum_{i=1}^d\sum_{k=1}^N \nu|\mathscr{A}^{ik}(Z)|$. 
We have 
\beaa
\label{eq:utility}
\begin{aligned}
\E_{\Prob}\left[|F^{\gamma}(Z^{\star},\theta^{\star})-F^{\gamma}_K(Z^{\star},\theta^{\star})|\right]  %\leq \E_{\Prob}\left[|Pr(Z^{\star})_N-\mathcal{C}_N(Z^{\star})-(Pr(\mathscr{C}(Z^{\star}))_N-\mathcal{C}_N(\mathscr{C}(Z^{\star})))|\right] S_N^0\\
%& + \gamma\E_{\Prob}\left[|\left(Pr(\mathscr{C}(Z^{\star}))_N-\mathcal{C}_N(\mathscr{C}(Z^{\star}))-\theta^{\star}\right)^2-(Pr(Z^{\star})_N-\mathcal{C}_N(Z^{\star})-\theta^{\star})^2|\right](S_N^0)^2. 
& \leq \E_{\Prob}\left[|\mathcal{R}(Z^{\star})-\mathcal{R}(\mathscr{C}_K(Z^{\star}))|\right] S_N^0\\
& + \gamma\E_{\Prob}\left[|\left(\mathcal{R}(\mathscr{C}_K(Z^{\star}))-\theta^{\star}\right)^2-(\mathcal{R}(Z^{\star})-\theta^{\star})^2|\right](S_N^0)^2.
\end{aligned}
\eeaa
%By using the Cauchy Schwartz' inequality and Doob's martingale inequality and by noticing that $Pr$ and $\alpha_k, \ k=1,\ldots,N$, are linear, we find $\eta_1>0$, such that
For the first term, we have 
\beaa
\begin{aligned}
& \E_{\Prob}\left[|Pr(Z^{\star})_N-\mathcal{C}_N(Z^{\star})-\left(Pr(\mathscr{C}_K(Z^{\star}))_N-\mathcal{C}_N(\mathscr{C}_K(Z^{\star}))\right)|\right] \\
%& \leq \E_{\Prob}\left[|Pr(Z^{\star})_N-Pr(\mathscr{C}(Z^{\star}))_N|\right]  +\sum_{i=1}^d \sum_{k=1}^N\nu \E_{\Prob}\left[|\mathscr{A}^{ik}(\mathscr{C}_K(Z^{\star}))-\mathscr{A}^{ik}(Z^{\star})|\right]\\
& \leq \E_{\Prob}\left[|Pr(Z^{\star})_N-Pr(\mathscr{C}_K(Z^{\star}))_N|\right]  +
\sum_{i=1}^d \sum_{k=0}^{N-1}\nu \E_{\Prob}\left[|\alpha_{k+1}^i(\mathscr{C}_K(Z^{\star}))-\alpha_{k}^i(\mathscr{C}_K(Z^{\star}))-(\alpha_{k+1}^i(Z^{\star})-\alpha_{k}^i(Z^{\star}))||\tilde{S}^i_k|\right]\\
\end{aligned}
\eeaa
%By using Lemma \ref{lem:alpha}, we find $\zeta^{'}$ such that
%\beaa
%\begin{aligned}
%& \E_{\Prob}\left[|Pr(Z^{\star})_N-Pr(\mathscr{C}(Z^{\star}))_N|\right]
%\leq \zeta^{'} \E_{\bQ}\left[\left| Z_{N}- Z^{'}_{N}\right|^2\right]^{\frac{1}{2}}.
%\end{aligned}
%\eeaa
Using Lemma \ref{lem:alpha} and the same arguments as to prove (\ref{eq:Pr}), we find $\eta_1>0$, such that
%$$
%\sum_{k=0}^{N-1}\nu \E_{\Prob}\left[|\alpha_{k+1}^i(\mathscr{C}_K(Z^{\star}))-\alpha_{k}^i(\mathscr{C}_K(Z^{\star}))-(\alpha_{k+1}^i(Z^{\star})-\alpha_{k}^i(Z^{\star}))||\tilde{S}^i_k|\right]\leq \sum_{k=0}^{N-1 }\zeta_k \E_{\bQ}\left[\left|\Delta Z_{k+1}-\Delta\mathscr{C}_K(Z^{\star})_{k+1}\right|^{2}\right]^{\frac{1}{2}}.
%$$
%Finally, using Doob's martingale inequality, we find $\eta_1>0$ such that 
\be
\label{eq:first-term}
\E_{\Prob}\left[|Pr(Z^{\star})_N-\mathcal{C}_N(Z^{\star})-\left(Pr(\mathscr{C}_K(Z^{\star}))_N-\mathcal{C}_N(\mathscr{C}_K(Z^{\star}))\right)|\right]\leq \eta_1\E_{\bQ}\left[\left|Z_N-\mathscr{C}_K(Z^{\star}_N)\right|^2\right]^{\frac{1}{2}}.
\ee
%%%%%%%%%%%%%%%%%%%%%%%%%%%%%%%%%%%%%%%%%%%%%%%%%
For the second term, we rewrite 
\beaa
\begin{aligned}
 & \E_{\Prob}\left[|\left(\mathcal{R}(\mathscr{C}_K(Z^{\star}))-\theta^{\star}\right)^2-(\mathcal{R}(Z^{\star})-\theta^{\star})^2|\right]=\E_{\Prob}\left[|\mathcal{R}(\mathscr{C}_K(Z^{\star}))+\mathcal{R}(Z^{\star})-2\theta^{\star}||\mathcal{R}(\mathscr{C}_K(Z^{\star}))-\mathcal{R}(Z^{\star})|\right].\\
 & \leq \E_{\Prob}\left[|\mathcal{R}(\mathscr{C}_K(Z^{\star}))+\mathcal{R}(Z^{\star})||\mathcal{R}(\mathscr{C}_K(Z^{\star}))-\mathcal{R}(Z^{\star})|\right]+ 2|\theta^{\star}|\E_{\Prob}\left[|\mathcal{R}(\mathscr{C}_K(Z^{\star}))-\mathcal{R}(Z^{\star})|\right]
\end{aligned}
\eeaa
Let $ i,j\in\{1,\ldots,d\}, \ k, l \in \{0,\ldots,N-1\}$ and $\mathcal{Y}\in\{\Delta \tilde{S}_{k+1}^i\Delta\tilde{S}_{l+1}^j,\Delta\tilde{S}_{k+1}^i\tilde{S}_{l+1}^j, \tilde{S}_{k+1}^i\tilde{S}_{l+1}^j, \tilde{S}_{k+1}^i\Delta\tilde{S}_{l+1}^j  \}$. By using the same decomposition and arguments as to prove Lemma \ref{lem:alpha} and Cauchy Schwartz' inequality, we find $\mathcal{X}\in \mathcal{L}$ such that 
\beaa
\begin{aligned}
& \E_{\Prob}\left[|\alpha_{k+1}^i(Z^{\star})-\alpha_{k+1}^i(\mathscr{C}_K(Z^{\star}))||\alpha_{l+1}^j(\mathscr{C}_K(Z^{\star}))||\mathcal{Y}|\right]=\E_{\Prob}\left[|\Delta Z^{\star}_{k+1}-\Delta \mathscr{C}_K(Z^{\star})_{k+1}||\Delta \mathscr{C}_K(Z^{\star})_{l+1}||\mathcal{X}|\right]\\
& \leq \E_{\Prob}\left[|\Delta Z^{\star}_{k+1}-\Delta \mathscr{C}_K(Z^{\star})_{k+1}|^2\right]^{\frac{1}{2}}\E_{\Prob}\left[|\Delta \mathscr{C}_K(Z^{\star})_{l+1}|^2 |\mathcal{X}|^2\right]^{\frac{1}{2}}.
\end{aligned}
\eeaa
The same result can be obtain for 
$\E_{\Prob}\left[|\alpha_{k+1}^i(Z^{\star})-\alpha_{k+1}^i(\mathscr{C}_K(Z^{\star}))||\alpha_{l+1}^j(Z^{\star})||\mathcal{Y}|\right]$.
By using the main assumption of the proposition, and considering the form of $R(\mathscr{C}_K(Z^{\star}))$ and $R(Z^{\star})$, we find $\zeta>0$ such that
$$\E_{\Prob}\left[|\mathcal{R}(\mathscr{C}_K(Z^{\star}))+\mathcal{R}(Z^{\star})||\mathcal{R}(\mathscr{C}_K(Z^{\star}))-\mathcal{R}(Z^{\star})|\right]\leq \zeta \E_{\Prob}\left[|Z^{\star}_N-\mathscr{C}_K(Z^{\star}_N)|^2\right]^{\frac{1}{2}}. 
$$
%\beaa
%\begin{aligned}
%& \E_{\Prob}\left[|\mathcal{R}(\mathscr{C}_K(Z^{\star}))+\mathcal{R}(Z^{\star})||\mathcal{R}(\mathscr{C}_K(Z^{\star}))-\mathcal{R}(Z^{\star})|\right]
%\\=\E_{\Prob}\left[|\mathcal{X}_n(\mathscr{C}_K(Z^{\star}_N)+Z^{\star})||\mathscr{C}_K(Z^{\star})-Z^{\star}|\right]\\
% & \leq  \E_{\Prob}\left[|\mathcal{X}_n(\mathscr{C}_K(Z^{\star})+Z^{\star})|^2|\right]^{\frac{1}{2}}\E_{\Prob}\left[\mathscr{C}_K(Z^{\star})-Z^{\star}|^2\right]^{\frac{1}{2}}. 
%\end{aligned}
%\eeaa
Finally, by using Lemma \ref{lem:alpha} to bound the term $2|\theta^{\star}|\E_{\Prob}\left[|\mathcal{R}(\mathscr{C}_K(Z^{\star}))-\mathcal{R}(Z^{\star})|\right]$, we find $\eta_2>0$ such that 
\beaa
\E_{\Prob}\left[|Pr(\mathscr{C}_K(Z^{\star}))_N-\mathcal{C}_N(\mathscr{C}_K(Z^{\star}))-Pr(Z^{\star})_N+\mathcal{C}_N(Z^{\star})|^2\right] \leq \eta_2\E_{\Prob}\left[|Z^{\star}_N-\mathscr{C}_K(Z^{\star}_N)|^2\right]^{\frac{1}{2}}.
\eeaa
Finally setting $\eta=\eta_1S_N^0+ \gamma\eta_2(S_N^0)^2$ gives the result. 
\end{proof}

\begin{proof}[of Proposition \ref{prop:convergence}]
Let $(Z^K,\theta^K)$ be a solution to \eqref{system_chaos_K}. $(\mathscr{C}_K(Z^K),\theta^K)$ is an admissible solution for \eqref{system_chaosT2}, then 
\be
\E_{\Prob}\left[F^{\gamma}_K(Z^{\star},\theta^{\star})\right]\leq \E_{\Prob}\left[F^{\gamma}_K(Z^K,\theta^K)\right]\leq \E_{\Prob}\left[F^{\gamma}(Z^{\star},\theta^{\star})\right]. 
\ee
According to Lemma \ref{lem:convergence}, we find $\eta>0$ independent of $K$ such that
\be
\begin{aligned}
|\E_{\Prob}\left[F^{\gamma}(Z^{\star},\theta^{\star})\right]-\E_{\Prob}\left[F^{\gamma}_K(Z^K,\theta^K)\right]|
& \leq |\E_{\Prob}\left[F^{\gamma}(Z^{\star},\theta^{\star})\right]-\E_{\Prob}\left[F^{\gamma}_K(Z^{\star},\theta^{\star})\right]| \\
& \leq \E_{\Prob}\left[|F^{\gamma}(Z^{\star},\theta^{\star})-F^{\gamma}_K(Z^{\star},\theta^{\star})|\right]\\
& \leq \eta \E_{\Prob}\left[|Z^{\star}_N-\mathscr{C}_K(Z^{\star}_N)|^2\right]^{\frac{1}{2}}. 
\end{aligned}
\ee
\end{proof}
%%%%%%%%%%%%%%%%%%%%%%%%%%%%%%%%%%%%%%%%%%
We denote $m$ as the number of coefficients $\beta_{\lambda}$ appearing 
 in the chaos expansion of order $K$, 
$m=\#\{\lambda \in (\mathbb{N}^{N})^d |\lambda|_1\leq K\}$. By identifying the martingale $Z\in\mathcal{M}$ with the coefficients of the chaos expansion of its terminal value $Z_N$,  
we slightly abuse the notation $\mathscr{C}_K$ to write
$$
\forall \beta\in\mathbb{R}^m, \ \mathscr{C}_K(\beta)=\sum_{\substack{\lambda \in (\mathbb{N}^{N})^d \\ |\lambda|_1\leq K}} \beta_{\lambda}  H_{\lambda}^{\bigotimes}(\Delta \widehat{W}^\bQ).
$$
We can also slightly abuse the definition of the controls in (\ref{eq:control_X}). Recalling that $(\beta_\lambda)_\lambda$ define the expansion of $Z_N$, we define
\be
\label{eq:control_chaos}
\alpha_{n+1}(\beta) = \E_\bQ\left[\Delta \tilde{S}_{n+1} \Delta (\tilde{S}_{n+1})^T|\cF_{n}\right]^{-1}\E_\bQ\left[\mathscr{C}_K(\Delta {Z}_{n+1})\Delta \tilde{S}_{n+1}|\cF_{n}\right].
\ee   
% We can also extend the function defined in (\ref{eq:projection}) by 
% \be
% Pr(C_K(.)): \beta\rightarrow \left(v_0+\sum_{k=0}^{n-1}\sum_{i=1}^d\alpha_{k+1}^j(\beta)\Delta \tilde{S}_{k+1}^j\right)_{1\leq n\leq N}  .
% \ee
We notice that $\alpha_{n+1}$ can be expressed as a linear combination of the chaos coefficients of $Z_N$. 
Similarly, we also extend the cost function to
\be
 \mathcal{C}_n(\beta)=\sum_{k=0}^{n-1}\sum_{i=1}^d \nu|\alpha^i_{k+1}(\beta)-\alpha^i_k(\beta)|\tilde{S}_k^i.
\ee
Let's extend in this context, the portfolio value function as
\be
\forall \beta\in\mathbb{R}^m, \ \mathcal{R}(\beta)=Pr\left(\mathscr{C}_K(\beta)\right)_N-\mathcal{C}_N(\beta).
\ee
Note that $\beta\rightarrow \mathcal{R}(\beta)-v_0$ is a positive homogeneous function, ie $\forall \ u>0, \ \mathcal{R}(u\beta)-v_0= u\left(\mathcal{R}(\beta)-v_0\right)$. 
We also express the objective function as 
\be
\forall \beta\in\mathbb{R}^m, \ F^{\gamma}(\beta,\theta)=\mathcal{R}(\beta)S_N^0-\gamma\left(\left(\mathcal{R}(\beta)-\theta\right)S_N^0\right)^2.
\ee
Note that $F^\gamma$ is a random function. With these new functions, we can approximate the original problem~\eqref{system_martingale2} by a finite dimensional optimisation problem
\be
\label{system_chaos}
\tag{$J^{\gamma}$}
\begin{aligned}
& \sup_{\beta\in\mathbb{R}^m, \theta\in\mathbb{R}}
& & \E_{\Prob}\left[F^{\gamma}(\beta,\theta)\right] \\
& \text{s.t.}
& & \E_\bQ\left[\mathscr{C}_{K}(\beta)\right]=v_0
\end{aligned}
\ee
The constraint $\E_\bQ\left[\mathscr{C}_{K}(\beta)\right]=v_0$ is easily satisfied by setting the first coefficient equal to $v_0$,  $\beta_{0}=v_0$.
%We denote $(\beta_{\lambda}^{\star})_{\lambda}$ as a solution to \eqref{system_chaos}. By fixing $\tilde{V}_N^{\star}=R(\beta^{\star}_{\lambda},\Delta W)$ and rebuild the martingale $\tilde{V}^{\star}$, an optimal solution to the problem.
\begin{proposition}
The problem \eqref{system_chaos} admits a solution. 
\end{proposition}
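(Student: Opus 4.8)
The plan is to turn \eqref{system_chaos} into the maximisation of a continuous real-valued function on a finite-dimensional space and to show that this function is coercive once one quotients out the directions along which it is constant; a continuous function that tends to $-\infty$ at infinity on a finite-dimensional space attains its maximum.

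First I would dispose of the constraint by freezing $\beta_0 = V_0$, so the feasible set is an affine copy of $\mathbb{R}^{m-1}$, and set $g(\beta) := \E_{\Prob}[F_\gamma(\beta)]$; since $S_N^0$ is deterministic this equals $V_0 S_N^0 + S_N^0\,\E_{\Prob}[\mathcal{R}(\beta)-V_0] - \gamma (S_N^0)^2\,\Var_{\Prob}(\mathcal{R}(\beta))$. By Proposition~\ref{prop:R}, the map $\beta \mapsto \mathcal{R}(\beta) = V_0 + D\cdot\beta - \sum_{i,n}\nu\,|K_n^i\cdot\beta|$ is globally Lipschitz from $\mathbb{R}^m$ into $L^2(\Omega,\Prob)$ (because $D$ and the $K_n^i$ lie in $L^2$ and $a\mapsto|a|$ is $1$-Lipschitz), hence $\beta\mapsto\E_{\Prob}[\mathcal{R}(\beta)]$ and $\beta\mapsto\E_{\Prob}[\mathcal{R}(\beta)^2]$, and therefore $g$, are continuous.

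Next I would examine $g$ along a ray $\beta_t = \beta^{(0)} + t u$, where $\beta^{(0)}$ is the cash-only coefficient vector (first coordinate $V_0$, others $0$) and $u$ is a unit vector with vanishing first coordinate. Since $t\mapsto\mathcal{C}_N(\beta_t)$ is positively homogeneous of degree one up to a correction that is bounded in $L^2$, one gets $t^{-1}(\mathcal{R}(\beta_t)-V_0)\to \mathcal{R}^{\infty}(u)$ in $L^2(\Prob)$, uniformly in $u$ over the unit sphere, where $\mathcal{R}^{\infty}(u) = \sum_{n}\alpha_{n+1}(u)\cdot\Delta\tilde{S}_{n+1} - \sum_{i,n}\nu\,|\alpha_{n+1}^i(u)-\alpha_n^i(u)|$ is the discounted terminal wealth generated by the strategy $\alpha(u)$ started from zero wealth. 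Dividing $g(\beta_t)$ by $t^2$ shows $t^{-2}g(\beta_t)\to -\gamma(S_N^0)^2\,\Var_{\Prob}(\mathcal{R}^{\infty}(u))$; hence $g(\beta_t)\to-\infty$ whenever $\Var_{\Prob}(\mathcal{R}^{\infty}(u))>0$, and by compactness of the sphere together with continuity of $u\mapsto\Var_{\Prob}(\mathcal{R}^{\infty}(u))$ this divergence is uniform away from the closed set of directions $u$ with $\Var_{\Prob}(\mathcal{R}^{\infty}(u))=0$.

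The crux — and what I expect to be the main obstacle — is a ``no riskless admissible portfolio'' lemma: if $\mathcal{R}^{\infty}(u)$ is $\Prob$-a.s.\ constant, then $\alpha(u)\equiv 0$. I would prove it by backward induction on $n$: if the discounted wealth $\tilde{V}$ of $\alpha(u)$ is a.s.\ a constant $c$, then at the last relevant step the self-financing identity $\tilde{V}_n = \tilde{V}_{n-1} + \alpha_n(u)\cdot\Delta\tilde{S}_n - \Delta\mathcal{C}_n$ expresses $\alpha_n(u)\cdot\Delta\tilde{S}_n$ as an $\cF_{n-1}$-measurable variable (the cost increment $\Delta\mathcal{C}_n$ is $\cF_{n-1}$-measurable because $\alpha$ is predictable); conditioning under $\bQ$ and using that $\tilde{S}$ is a $\bQ$-martingale forces this variable to be $0$, and the a.s.\ positive-definiteness of $\E_\bQ[\Delta\tilde{S}_n\Delta\tilde{S}_n^{T}\mid\cF_{n-1}]$ — which follows from the invertibility of $\sigma_{n-1}$ and the Schur product theorem applied to the explicit matrix of Proposition~\ref{prop:control} — gives $\alpha_n(u)=0$; substituting back and using that $\{1\}\cup\{\exp(\sigma_{n-1}^i\cdot\Delta W_n^\bQ)\}_{i\le d}$ are linearly independent functions of the (full-support, conditionally Gaussian) increment forces $\alpha_{n-1}^i(u)-\nu|\alpha_{n-1}^i(u)|=0$, hence $\alpha_{n-1}(u)=0$ since $\nu<1$, and the induction proceeds. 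Because $\beta\mapsto\alpha(\beta)$ is linear, the set $\mathcal{N}$ of feasible directions producing the null strategy is a linear subspace, and $g$, depending on $\beta$ only through $\mathcal{R}(\beta)$, is constant along $\mathcal{N}$; combined with the previous paragraph, $g$ descends to a continuous function on the finite-dimensional quotient $\mathbb{R}^{m-1}/\mathcal{N}$ that tends to $-\infty$ at infinity, hence attains its maximum, and any lift of a maximiser solves \eqref{system_chaos}. The delicacy is entirely in this lemma, since $g$ is in general neither concave nor coercive on all of $\mathbb{R}^{m-1}$, so the degenerate riskless directions genuinely have to be identified and quotiented out.
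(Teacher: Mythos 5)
Your proposal is correct, and it shares the paper's skeleton --- use the positive homogeneity of $\beta\mapsto\mathcal{R}(\beta)-V_0$ from Proposition~\ref{prop:R} to reduce coercivity of $\beta\mapsto\E_{\Prob}[F_\gamma(\beta)]$ to a strictly positive lower bound on the variance over a unit sphere, then invoke continuity and compactness --- but it diverges exactly where the paper is weakest. The paper sets $v=\inf_{|\beta|=1}\Var[\mathcal{R}(\beta)]$ and asserts $v>0$ in one line ``since the market is complete.'' As stated this is false: the pure-cash direction, and more importantly any top-order coefficient $\beta_\lambda$ with $|\lambda|_1=K$, drops out of every control $\alpha_{n+1}(\beta)$ (in \eqref{eq:evalProdChaos} the pairing requires $|\lambda|_1+|\lambda'|_1\le K$ with $\lambda'_{n+1}=\lambda_{n+1}\neq 0$, which is impossible when $|\lambda|_1=K$), so $\mathcal{R}$ is constant along those directions and $v=0$. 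Your two additional ingredients --- the backward-induction ``no riskless admissible portfolio'' lemma (a.s.\ constant terminal discounted wealth forces the null strategy, via the $\bQ$-martingale property, positive definiteness of $\E_\bQ[\Delta\tilde S_{n+1}\Delta\tilde S_{n+1}^T\mid\cF_n]$, and linear independence of $\{1\}\cup\{\mathrm{e}^{\sigma^i\cdot x}\}$ under the conditionally Gaussian increment, using $\nu<1$), together with the quotient by the linear null space $\mathcal{N}=\{\beta:\alpha(\beta)\equiv 0\}$ on which $F_\gamma$ is constant --- are precisely what is needed to make the compactness argument close, since coercivity only holds transversally to $\mathcal{N}$. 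Two cosmetic points: the hypothesis of your lemma should be that the \emph{terminal} value $\mathcal{R}^\infty(u)$ is a.s.\ constant (your induction then recovers the intermediate wealths, which need not be constant because of the accumulated costs), and no limit in $t$ is needed in your ray computation since $\mathcal{R}(\beta^{(0)}+tu)-V_0=t\bigl(\mathcal{R}(u)-V_0\bigr)$ exactly. Net effect: same strategy as the paper, but your key lemma replaces, and in fact repairs, the paper's appeal to completeness.
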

\begin{proof}
%According to Lemma \ref{prop:equivalent}, if $(\beta^{\star},\theta^{\star})$ is a solution of \eqref{system_chaos} then $\theta^{\star}=\E_{\Prob}[\mathcal{R}(\beta^{\star})]$. 
According to Lemma \ref{prop:equivalent}, for $\beta\in \mathbb{R}^m$, $\underset{\theta\in\mathbb{R} }{\sup} \ \E_{\Prob}\left[F^{\gamma}(\beta,\theta)\right]$ always exists and is attained for $\theta^{\star}=\E_{\Prob}[\mathcal{R}(\beta)]$.
%%%%%%%%%%%%%%%%%%%%%%%%%%%%%%%%%%%%%%%%%%%%%%%%%%
Let $\beta \in  \mathbb{R}^m \neq 0_{\mathbb{R}^m}$. 
%Since the market is complete, it is not possible to reproduce a risk free portfolio with risky assets. 
%Then $Var[R(\beta,\Delta W)]>0$. 
%By using the decomposition of Lemma \ref{prop:R}, we can deduce the following property on the function $\mathcal{R}$, 
%\be
%\forall \ \beta\in\mathbb{R}^m,  \ \beta\neq 0_{\mathbb{R}^m},  \ \mathcal{R}(\beta)-v_0=\left|\beta\right|\left(\mathcal{R}\left(\frac{\beta}{\left|\beta\right|}\right)-v_0\right). 
%\ee
%$ \forall \ \beta\in\mathbb{R}^m \neq 0_{\mathbb{R}^m}$, 
We have
\beaa
\begin{aligned}
\E_{\Prob}\left[F^{\gamma}(\beta,\E_{\Prob}[\mathcal{R}(\beta)])-v_0\right] &=\left|\beta\right|\E_{\Prob}\left[\mathcal{R}\left(\frac{\beta}{\left|\beta\right|}\right)-v_0\right]-\gamma \left|\beta\right|^2\Var\left[\mathcal{R}\left(\frac{\beta}{\left|\beta\right|}\right)-v_0\right]
\\
& = \left|\beta\right|\E_{\Prob}\left[F^{\gamma}\left(\frac{\beta}{\left|\beta\right|},\E_{\Prob}[\mathcal{R}(\frac{\beta}{\left|\beta\right|})]\right)-v_0\right]+\gamma \left(\left|\beta\right|-\left|\beta\right|^2\right)\Var\left[\mathcal{R}\left(\frac{\beta}{\left|\beta\right|}\right)\right].
\end{aligned}
\eeaa
By considering the optimums of continuous functions on compact sets, we define 
\beaa
\begin{aligned}
v=& \inf_{ \left|\beta\right|=1}
& & \ \Var\left[\mathcal{R}(\beta)\right]
\end{aligned}, \ 
\begin{aligned}
u=& \sup_{ \left|\beta\right|=1}
& & \ \E_{\Prob}\left[F^{\gamma}\left(\frac{\beta}{\left|\beta\right|},\E_{\Prob}[\mathcal{R}(\frac{\beta}{\left|\beta\right|})]\right)-v_0\right].
\end{aligned}
\eeaa
Since the market is assumed to be complete, it is not possible to build a risk free portfolio with risky assets, $v>0$.
Finally, $\forall \ \beta \in\mathbb{R}^m, \ \left|\beta\right|>1$,
\beaa
\E_{\Prob}\left[F^{\gamma}(\beta,\E_{\Prob}[\mathcal{R}(\beta)])-v_0\right] \leq \left|\beta\right| u+\gamma\left(\left|\beta\right|-\left|\beta\right|^2\right) v.
\eeaa
With this inequality, we deduce that
$
\lim \limits_{
\begin{array}{l} 
\left|\beta\right| \to +\infty
\end{array}}\E_{\Prob}\left[F^{\gamma}(\beta,\E_{\Prob}[\mathcal{R}(\beta)])\right]=-\infty.$ 
The objective function $\beta\rightarrow \E_{\Prob}\left[F^{\gamma}(\beta,\E_{\Prob}[\mathcal{R}(\beta)])\right]$ is continuous and $\E_{\Prob}\left[F^{\gamma}(0,\E_{\Prob}[\mathcal{R}(0)])\right]=v_0>0$, then it attains its maximum. 
We can conclude that \eqref{system_chaos} admits a solution. 
\end{proof}

\section{Resolution}
\label{sec:resolution}
In this section, we describe the numerical framework to solve \eqref{system_chaos}. We define the space where optimal strategies are investigated and we study the link between risk aversion and those solutions. In particular, we insure that risk aversion parametrization has no impact on performances. 
We assume for the numerical resolution that $\sigma$ is deterministic (Black-Scholes case). 

\subsection{Stochastic descent gradient algorithm}
We aim to apply a stochastic descent gradient algorithm to solve \eqref{system_chaos}. 
Let's introduce
$$ 
\mathcal{Z}=\left\{\beta\in\mathbb{R}^m, \ \forall n \in \{1,\ldots,N\}, i\in\{1,\ldots,d\} \ \alpha_n^i(\beta)\neq 0 \ a.s.\right\}
$$
$\mathbb{R}^m\setminus\mathcal{Z}$ is a closed set of $\mathbb{R}^m$ of measure null. Elements of this set represents strategies where there is almost surely, no investment, during a time period between $t_1$ and $t_{N-1}$, in a risky asset. These strategies are not realistic, not optimal and therefore are ignored. Optimal strategies are investigated on $\mathcal{Z}$.
%$\exists i, n \in \{1,\ldots,d\}\times\{1,\ldots,N\}$ such that $\bQ(\alpha_{n}^i(\beta)=0)$ 
It can be shown that $(\beta,\theta)\longmapsto\E_{\Prob}[F^{\gamma}(\beta,\theta)]=\E_{\Prob}\left[\mathcal{R}(\beta)S_N^0-\gamma\left(\left(\mathcal{R}(\beta)-\theta\right)S_N^0\right)^2\right]$ is differentiable almost everywhere, on $\mathcal{Z}\times \mathbb{R}$. We refer to Appendix \ref{sec:differentiability} for a detailed study on differentiability.
%$(\beta,\theta) \longmapsto \E_{\Prob}[F^{\gamma}(\beta,\theta)]$ is differentiable on $\mathcal{Z}\times\mathbb{R}$.
%$\mathcal{Z}\times\mathbb{R}$ is an open set of $\mathbb{R}^m\times\mathbb{R}$. Elements of $\mathbb{R}^m\setminus\mathcal{Z}$ involve no investment during a time period between $t_1$ and $t_{N-1}$. Consequently they are not coherent strategies and are not optimal strategies. 
%Therefore if $(\beta^{\star},\theta^{\star})$ is solution to \eqref{system_chaos} then $\E_{\Prob}\left[\nabla_{\beta,\theta}F^{\gamma}(\beta^{\star},\theta^{\star})\right]=0$. 
Furthermore, we have seen that
$
\E_{\Prob}\left[\nabla_{\theta}F^{\gamma}(\beta,\theta^{\star})\right]=0 \iff \theta^{\star}=\E_{\Prob}\left[\mathcal{R}(\beta)\right].
$
We can then deduce that if $\beta^{\star}\in\mathcal{Z}$ is the chaos expansion decomposition of an optimal portfolio then $\beta^{\star}$ is solution to 
\be
\label{eq:grad}
\tag{$T^{\gamma}$}
\E_{\Prob}\left[\nabla_{\beta,\theta}F^{\gamma}\left(\beta^{\star},\E_{\Prob}\left[\mathcal{R}(\beta^{\star})\right]\right)\right]=0.
\ee
\begin{proposition}
The solutions to \eqref{eq:grad} are locally optimal.
\end{proposition}
\begin{proof}
$\mathcal{R}$ is concave w.r.t. $\beta$. $F_{y}:(Y,\theta) \in \bR^2 \rightarrow\E_{\Prob}\left[YS_N^0-\gamma\left(\left(Y-\theta\right)S_N^0\right)^2\right]$ is concave. For $\theta\in\mathbb{R}$, if $\E_{\Prob}[F_{y}(.,\theta)]$ is monotone increasing then $\E_{\Prob}[F^{\gamma}(\beta,\theta)]$ is concave in $\beta$. 
$\E_{\Prob}[F_{y}(.,\theta)]$ is increasing in $Y$ if and only if $\E_{\Prob}[Y]\leq \frac{1}{2\gamma}+\theta$. 
\\
Let $\beta^{\star}$ be a solution to \eqref{eq:grad}. 
$\mathcal{R}$ is continuous then we can find $\varepsilon>0$ such that $\forall \ \beta\in\mathbb{R}^m$ s.t. $|\beta-\beta^{\star}|\leq \varepsilon$, we have $|\mathcal{R}(\beta)-\mathcal{R}(\beta^{\star})|\leq \frac{1}{4\gamma}$. 
Then $\forall  (\beta,\theta)$ such that 
$|\beta-\beta^{\star}|\leq \varepsilon$, $|\theta-\E_{\Prob}[\mathcal{R}(\beta^{\star})]| \leq \frac{1}{4\gamma}$, we have 
$$
|\theta-\E_{\Prob}[\mathcal{R}(\beta)]|\leq |\theta-\E_{\Prob}[\mathcal{R}(\beta^{\star})]| + \E_{\Prob}[|\mathcal{R}(\beta^{\star})-\mathcal{R}(\beta)|]\leq \frac{1}{2\gamma}. 
$$
Then $\E_{\Prob}[F^{\gamma}(\beta,\theta)]$ is concave on $\{ (\beta, \theta) \, : \, |\beta - \beta^\star| \le \varepsilon, \; |\theta - \E_{\Prob}[\mathcal{R}(\beta^{\star})]| \le \frac{1}{4 \gamma} \}$.
\end{proof}
Finally we can apply a gradient descent algorithm to find a local optimum of \eqref{system_chaos}.

\subsection{Influence of risk aversion}
\label{sec:aversion}
In this section, we discuss the influence of the risk aversion on optimal multi-period portfolios submitted to transaction costs. 
 One main result claims that the Sharp ratio of a zeros gradient portfolio submitted to transaction costs is independent from its risk aversion. A second main result affirms that all optimal portfolios have the same Sharp ratio. This study insures that the choice of risk aversion parameter does not influence performances. 
 \\
%For this theoretical section, we work with the following formulation of the allocation problem.  
In this section, we consider the following formulation of the allocation problem
\be
\label{system_chaos2}
\tag{$\mathcal{J}^{\gamma}$}
\begin{aligned}
& \sup_{\beta\in\mathcal{Z}}
& & \E_{\Prob}\left[G^{\gamma}(\beta)\right] \\
& \text{s.t.}
& & \E_\bQ\left[\mathscr{C}_{K}(\beta)\right]=v_0
\end{aligned}
\ee
where $G^{\gamma}(\beta)=F^{\gamma}(\beta,\E_{\Prob} [\mathcal{R}(\beta)])$. 
We have seen that $(\beta^{\star},\theta^{\star})$ solves \eqref{system_chaos} if and only if $\theta^{\star}=\E_{\Prob}[\mathcal{R}(\beta^{\star})]$ and $\beta^{\star}$ solves \eqref{system_chaos2}. 

\begin{definition}
A $\gamma$-optimal strategy is a strategy $(\alpha_n(\beta^{\star}))_{0 \le n \le N}$ such that $\beta^{\star}$ solves \eqref{system_chaos2}. The portfolio associated to a $\gamma$-optimal strategy
is called a $\gamma$-optimal portfolio.
% We refer as optimal strategies (respectively optimal portfolios) the set of all $\gamma$-optimal strategies (respectively the set of  all $\gamma$-optimal portfolios), $\gamma>0$.
\end{definition}
\begin{definition}
A $\gamma$-zero gradient strategy or $\gamma$-locally optimal strategy is a strategy $(\alpha_n(\beta^{\star}))_{0 \le n \le N}$ such that $\beta^{\star}$ solves \eqref{eq:grad}. The portfolio associated to a $\gamma$-zero gradient strategy
is called a $\gamma$-zero gradient portfolio. 
%We refer as zero gradient strategies (respectively zero gradient portfolios) the set of all $\gamma$-zero gradient strategies (respectively the set of  all $\gamma$-zero gradient portfolios), $\gamma>0$. 
\end{definition}  
%\begin{remark}
%\label{prop:multi-sharp}
%Using the efficient frontier derived in \cite{li2000optimal}, we can directly deduce that this result holds true when there are no transaction costs.
%\end{remark}
\begin{proposition}
\label{prop:sharp-gradient}
 The risk aversion, the Sharpe ratio and the volatility of a $\gamma$-zero gradient portfolio $V^{\star}$ submitted to transaction costs are related by 
\be
\label{eq:sharp}
\gamma=\frac{\sharpe(V_N^{\star})}{2\Var[V_N^{\star}]^{\frac{1}{2}}}. 
\ee
\end{proposition}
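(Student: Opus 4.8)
The plan is to turn the first‑order condition \eqref{eq:grad} into the claimed relation by exploiting the explicit, positively homogeneous structure of $\mathcal{R}$ supplied by Proposition~\ref{prop:R}. That proposition writes $\mathcal{R}(\beta)-V_0 = D\cdot\beta - \sum_{i=1}^d\sum_{n=0}^{N-1}\nu\,|K_n^i\cdot\beta|$, which is, path by path, positively homogeneous of degree one in $\beta$ (each summand is). Hence, at any $\beta$ where $\mathcal{R}$ is differentiable, Euler's identity gives $\beta\cdot\nabla_\beta\mathcal{R}(\beta)=\mathcal{R}(\beta)-V_0$. This is the algebraic engine of the proof; the analytic side — that $\mathcal{R}$ is a.s.\ differentiable on $\mathcal{Z}$ and that one may differentiate under $\E_{\Prob}$ — is exactly the content of the standing hypothesis and of Appendix~\ref{sec:differentiability}, which I would invoke directly.

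Next I would spell out \eqref{eq:grad}. Since $G_\gamma(\beta,\theta)=\mathcal{R}(\beta)S_N^0-\gamma (S_N^0)^2(\mathcal{R}(\beta)-\theta)^2$, its $\theta$‑derivative gives back $\theta^\star=\E_{\Prob}[\mathcal{R}(\beta^\star)]$, while $\nabla_\beta G_\gamma(\beta,\theta)=\bigl(S_N^0-2\gamma (S_N^0)^2(\mathcal{R}(\beta)-\theta)\bigr)\nabla_\beta\mathcal{R}(\beta)$, so the $\beta$‑part of \eqref{eq:grad} is $\E_{\Prob}\bigl[(S_N^0-2\gamma (S_N^0)^2(\mathcal{R}(\beta^\star)-\theta^\star))\nabla_\beta\mathcal{R}(\beta^\star)\bigr]=0$. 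Taking the scalar product of this vector identity with the deterministic vector $\beta^\star$ and substituting $\beta^\star\cdot\nabla_\beta\mathcal{R}(\beta^\star)=\mathcal{R}(\beta^\star)-V_0$ from Euler's identity yields
\[
\E_{\Prob}\Bigl[\bigl(S_N^0-2\gamma (S_N^0)^2(\mathcal{R}(\beta^\star)-\theta^\star)\bigr)\bigl(\mathcal{R}(\beta^\star)-V_0\bigr)\Bigr]=0 .
\]
Here I use that $S_N^0$ is deterministic (the risk‑free rate is), so it factors out of the expectation; dividing by $S_N^0>0$ and using $\theta^\star=\E_{\Prob}[\mathcal{R}(\beta^\star)]$ — whence $\E_{\Prob}[(\mathcal{R}(\beta^\star)-\theta^\star)\,c]=0$ for every constant $c$, in particular $c=V_0$, so $\E_{\Prob}[(\mathcal{R}(\beta^\star)-\theta^\star)(\mathcal{R}(\beta^\star)-V_0)]=\Var[\mathcal{R}(\beta^\star)]$ — one obtains
\[
\E_{\Prob}[\mathcal{R}(\beta^\star)]-V_0 = 2\gamma\,S_N^0\,\Var[\mathcal{R}(\beta^\star)] .
\]

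To finish, I would translate back to the undiscounted terminal wealth $V_N^\star=\mathcal{R}(\beta^\star)S_N^0$: since $S_N^0$ is deterministic, $\E_{\Prob}[V_N^\star]=S_N^0\,\E_{\Prob}[\mathcal{R}(\beta^\star)]$ and $\Var[V_N^\star]=(S_N^0)^2\Var[\mathcal{R}(\beta^\star)]$, so multiplying the last display by $S_N^0$ gives $\E_{\Prob}[V_N^\star]-S_N^0V_0 = 2\gamma\,\Var[V_N^\star]$. With the definition of the Sharpe ratio of a terminal wealth against the risk‑free benchmark $S_N^0V_0$, namely $\sharpe(V_N^\star)=(\E_{\Prob}[V_N^\star]-S_N^0V_0)/\Var[V_N^\star]^{1/2}$, this reads $\sharpe(V_N^\star)=2\gamma\,\Var[V_N^\star]^{1/2}$, and dividing by $2\Var[V_N^\star]^{1/2}$ (which is $>0$ since the market is complete, so $\Var[V_N^\star]>0$ for a nontrivial strategy) gives \eqref{eq:sharp}.

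The main obstacle is not the algebra but justifying the differentiation step: one must stay on $\mathcal{Z}$, where the kink terms $|K_n^i\cdot\beta|$ do not vanish with positive probability, so that $\mathcal{R}$ is a.s.\ differentiable and Euler's identity holds $\omega$‑by‑$\omega$, and one must legitimately interchange $\nabla_\beta$ with $\E_{\Prob}$; both are covered by the hypothesis ``$\mathcal{R}$ a.s.\ differentiable'' together with Appendix~\ref{sec:differentiability}. A secondary point worth making explicit is the normalisation convention for $\sharpe$, since \eqref{eq:sharp} is tied to the ``excess terminal gain over terminal standard deviation'' form used above.
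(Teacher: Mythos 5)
Your proof is correct and follows essentially the same route as the paper's: both derive Euler's identity $\beta^\star\cdot\nabla\mathcal{R}(\beta^\star)=\mathcal{R}(\beta^\star)-V_0$ from the explicit positively homogeneous decomposition of Proposition \ref{prop:R}, contract the first-order condition \eqref{eq:grad} with $\beta^\star$, and reduce the resulting identity to $\E_{\Prob}[V_N^\star]-V_0S_N^0=2\gamma\Var[V_N^\star]$. Your write-up is in fact slightly more explicit than the paper's about the determinism of $S_N^0$, the covariance computation $\E_{\Prob}[(\mathcal{R}(\beta^\star)-\theta^\star)(\mathcal{R}(\beta^\star)-V_0)]=\Var[\mathcal{R}(\beta^\star)]$, and the Sharpe-ratio normalisation, but these are presentation differences rather than a different argument.
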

\begin{lemma}
\label{prop:R}
 There exist a $\cF_N$-measurable random variable $D$ and $\bfF$-adapted processes $B^i,K^i$ for $i\in\{1,\dots,d\}$ and taking values in $\mathbb{R}^m$, such that $\forall i\in\{1,\dots,d\}, \ n\in\{1,\dots,N\}$, $\E_{\Prob}\left[|B_n^i|^p\right] + \E_{\Prob}\left[|K_n^i|^p\right] + \E_{\Prob}\left[|D|^p\right] <\infty$ for any $p \ge 1$ and
 \be
\alpha_{n}^i(\beta)=B_n^i\cdot\beta; \ \mathcal{R}(\beta)=v_0+D\cdot\beta-\sum_{i=1}^d\sum_{n=0}^{N-1}\nu |K_n^i\cdot \beta|.
 \ee
\end{lemma}
\begin{proof}[of Lemma \ref{prop:R}]
According to the linearity in $\beta$ of the control functions, defined in (\ref{eq:control_chaos}), there exists a $\cF_N$-measurable random variable $D$ with values in $\mathbb{R}^m$ such that
$
\Pr\left(\mathscr{C}_K(\beta)\right)_N=v_0+D\cdot\beta$
and  $\exists \ \{B^i, i\in\{1,\dots, d\}\}$,  $\bfF$-adapted processes with values in $\mathbb{R}^m$ such that $\forall i\in\{1,\dots,d\}, \ n\in\{0,\dots,N-1\}$, $\alpha_{n}^i(\beta)=B_n^i\cdot\beta.$
By calling $K_n^i=B_{n+1}^i-B_n^i$, we obtain
$\mathcal{C}_N(\beta^{\star})=\sum_{i=1}^d\sum_{n=0}^{N-1}\nu |K_n^i\cdot \beta|.$ 
\\
$\forall i\in\{1,\dots,d\}, \ n\in\{0,\dots,N-1\}$, $j\in\{1,\dots,m\}$, $(B_n^i)_j\in \spane\left\{H_{\lambda}^{\bigotimes}(\Delta \widehat{W}^\bQ), \lambda\in \mathbb{R}^m\right\}$,  $(K_n^i)_j\in \spane\left\{H_{\lambda}^{\bigotimes}(\Delta \widehat{W}^\bQ), \lambda\in \mathbb{R}^m\right\}$ so $\forall \ p\geq 1$,
$\E_{\Prob}\left[|B_n^i|^p\right]$ and $\E_{\Prob}\left[|K_n^i|^p\right]<\infty$. And we have $\forall \ j\in\{1,\dots,m\}$, $D_j=\sum_{n=0}^{N-1}\sum_{i=1}^d (B_{n+1}^i)_j\Delta \tilde{S}_{n+1}^i$. $\tilde{S}_n\in{L}^p(\Omega,\cF_n,\Prob)$, then $\E_{\Prob}[|D_j|^p]<\infty$. 
\end{proof}
\begin{proof}[of Proposition \ref{prop:sharp-gradient}]
We have seen in (\ref{eq:grad}) that if $V^{\star}$ is a $\gamma$-zero gradient portfolio, then $\exists$ $\beta^{\star}\in \mathcal{Z}$ such that $V_N^{\star}=\mathcal{R}(\beta^{\star})S_N^0$ and $
\E_{\Prob}\left[\Psi\left(\beta^{\star},\E_{\Prob}\left[\mathcal{R}(\beta^{\star})\right]\right)\right]=0.
$ 
This equality leads to 
$$
\E_{\Prob}\left[\nabla \mathcal{R}\left(\beta^{\star}\right)S_N^0\left(1-2 \gamma S_N^0\left(\mathcal{R}\left(\beta^{\star}\right)-\E_{\Prob}\left[\mathcal{R}\left(\beta^{\star}\right)\right]\right)\right)\right]=0.
$$
We recall that 
$\mathcal{R}(\beta^{\star})=\Pr\left(\mathscr{C}_K(\beta^{\star})\right)-\mathcal{C}_N(\beta^{\star})$.
As  $\mathcal{R}$ is a.s differentiable and using the decomposition of Lemma \ref{prop:R},
\be
\forall \ \beta\in\mathcal{Z}, \ \nabla \mathcal{R}(\beta)=D-\sum_{i=1}^d\sum_{n=0}^{N-1}\nu \times \sign(K_n^i\cdot\beta)K_n^i.
\ee 
We deduce that 
$$
\forall \ \beta\in\mathcal{Z}, \ \beta\cdot\nabla \mathcal{R}(\beta)=\mathcal{R}(\beta)-v_0.
$$
Then with
$$ \beta^{\star}\cdot\E_{\Prob}\left[\nabla \mathcal{R}\left(\beta^{\star}\right)S_N^0\left[1-2 \gamma S_N^0\left(\mathcal{R}\left(\beta^{\star}\right)-\E_{\Prob}\left[\mathcal{R}\left(\beta^{\star}\right)\right]\right)\right]\right]=0,
$$
we obtain 
$$
\E_{\Prob}\left[ \mathcal{R}\left(\beta^{\star}\right)S_N^0\left[1-2 \gamma S_N^0\left(\mathcal{R}\left(\beta^{\star}\right)-\E_{\Prob}\left[\mathcal{R}\left(\beta^{\star}\right)\right]\right)\right]\right]=v_0 S_N^0.
$$
We deduce that 
$$
\E_{\Prob}\left[\mathcal{R}\left(\beta^{\star} \right)S_N^0\right]-2\gamma \E_{\Prob}\left[\left(\mathcal{R}\left(\beta^{\star}\right)S_N^0\right)^2\right]+2\gamma \E_{\Prob}\left[\mathcal{R}\left(\beta^{\star} \right)S_N^0\right]^2=v_0 S_N^0.
$$
Finally,
\be
\gamma=\frac{\sharpe(V_N^{\star})}{2\Var[V_N^{\star}]^{\frac{1}{2}}}.
\ee
\end{proof}
\begin{corollary}
\label{prop:global-sharp}
All $\gamma$-optimal portfolios have the same Sharpe ratio. 
\end{corollary}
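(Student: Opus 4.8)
The plan is to combine Proposition~\ref{prop:sharp-gradient} with a monotonicity argument showing that along the family of zero‑gradient portfolios the Sharpe ratio is in fact constant in $\gamma$, and then to invoke the concavity argument used in the previous proof to conclude that every $\gamma$-optimal portfolio is among these. First I would recall that by the definition of a $\gamma$-optimal portfolio, $\beta^\star$ solves \eqref{system_chaos}, hence is a maximiser of a differentiable function on the open set $\mathcal{Z}\times\mathbb{R}$, so it satisfies the first‑order condition \eqref{eq:grad}; therefore every $\gamma$-optimal portfolio is a $\gamma$-zero gradient portfolio, and Proposition~\ref{prop:sharp-gradient} applies, giving $\gamma = \sharpe(V_N^\star)/\bigl(2\Var[V_N^\star]^{1/2}\bigr)$.

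The core step is to show that the Sharpe ratio does not depend on the particular $\gamma$-optimal (or zero gradient) portfolio chosen. I would exploit the scaling identity already established in the existence proof, namely $\mathcal{R}(\beta)-V_0 = |\beta|\bigl(\mathcal{R}(\beta/|\beta|)-V_0\bigr)$, which says that $\mathcal{R}(\beta)-V_0$ is positively homogeneous of degree one in $\beta$ and hence the ``excess'' discounted terminal wealth lives on rays through the origin. Consequently, writing $\beta = \rho\, \omega$ with $|\omega|=1$ and $\rho = |\beta|>0$, one has $\E_\Prob[\mathcal{R}(\beta)S_N^0] - V_0 S_N^0 = \rho\bigl(\E_\Prob[\mathcal{R}(\omega)S_N^0]-V_0 S_N^0\bigr)$ and $\Var[\mathcal{R}(\beta)S_N^0] = \rho^2 \Var[\mathcal{R}(\omega)S_N^0]$, so the Sharpe ratio $\sharpe(V_N^\star) = (\E_\Prob[V_N^\star]-V_0 S_N^0)/\Var[V_N^\star]^{1/2}$ depends only on the direction $\omega$. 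It then remains to argue that all zero‑gradient portfolios point in the \emph{same} direction: substituting $\beta=\rho\omega$ into \eqref{eq:grad} and using $\beta\cdot\nabla\mathcal{R}(\beta)=\mathcal{R}(\beta)-V_0$, the equation \eqref{eq:grad} projected onto $\omega$ determines $\rho$ as a function of $\gamma$ and $\omega$, while the components orthogonal to $\omega$ force $\omega$ to be a critical direction of $\beta\mapsto\E_\Prob[F(\beta/|\beta|)-V_0]-\gamma\,\mathrm{something}$; by the homogeneity this reduces to a $\gamma$-independent condition on the sphere $|\omega|=1$, so the optimal direction — and with it the Sharpe ratio — is the same for all $\gamma$. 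Plugging a fixed Sharpe value back into \eqref{eq:sharp} then only rescales $\rho$ and leaves the claim that all $\gamma$-optimal portfolios share this common Sharpe ratio.

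The main obstacle I anticipate is the last point: showing that the optimal \emph{direction} $\omega^\star$ is genuinely independent of $\gamma$, rather than merely that once a direction is fixed the Sharpe ratio is determined. One has to be careful that \eqref{eq:grad} may have several solutions and that the variance term $\Var[\mathcal{R}(\omega)]$ and mean term $\E_\Prob[\mathcal{R}(\omega)-V_0]$ enter the stationarity condition in a way that, after using degree‑one homogeneity of $\mathcal{R}(\cdot)-V_0$, the radial variable $\rho$ factors out of the angular equations. Concretely, I would differentiate $\rho\mapsto \E_\Prob[F_\gamma(\rho\omega)]$ to get $\rho^\star(\gamma,\omega)$ from the first‑order condition in $\rho$, substitute back, and observe that the reduced problem $\sup_{|\omega|=1}$ of the resulting profile is, up to the positive multiplicative constant $1/(4\gamma)\cdot(\E_\Prob[\mathcal{R}(\omega)-V_0])^2/\Var[\mathcal{R}(\omega)]$, independent of $\gamma$; hence the argmax set over $\omega$ is the same for every $\gamma>0$, and Proposition~\ref{prop:sharp-gradient} yields the common Sharpe ratio $\sharpe(V_N^\star)=2\gamma\,\Var[V_N^\star]^{1/2}$ evaluated along that fixed direction. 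If a fully direction‑independent statement turns out to be too strong, the fallback is to prove the weaker but still sufficient claim that \emph{every} stationary direction yields the same ratio $\E_\Prob[\mathcal{R}(\omega)-V_0]/\Var[\mathcal{R}(\omega)]^{1/2}$, which already forces equality of Sharpe ratios across all $\gamma$-optimal portfolios by Proposition~\ref{prop:sharp-gradient}.
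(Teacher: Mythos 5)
Your argument is correct in substance but follows a genuinely different route from the paper's. The paper's proof is a short bookkeeping argument: since every $\gamma$-optimal $\beta^{\star}$ satisfies the first-order condition, Proposition~\ref{prop:sharp-gradient} gives $\E_{\Prob}[V_N^{\star}]-V_0S_N^0=2\gamma\Var[V_N^{\star}]$; substituting this back into the mean-variance objective shows that the optimal value is an affine increasing function of $\E_{\Prob}[V_N^{\star}]$ alone, so any two $\gamma$-optimal portfolios must have the same mean, and equality of their objective values then forces equality of their variances, hence of their Sharpe ratios. Your route instead exploits the positive homogeneity of $\mathcal{R}(\cdot)-V_0$: writing $\beta=\rho\omega$ with $|\omega|=1$ and optimizing out $\rho$ yields the profile $V_0S_N^0+a(\omega)^2/(4\gamma v(\omega))$ with $a(\omega)=\E_{\Prob}[(\mathcal{R}(\omega)-V_0)S_N^0]$ and $v(\omega)=\Var[\mathcal{R}(\omega)S_N^0]$, so the angular problem is $\gamma$-free and its objective is exactly the squared Sharpe ratio. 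This is heavier than the paper's argument but buys more: it simultaneously establishes that the optimal Sharpe ratio does not depend on $\gamma$, i.e.\ it essentially proves Proposition~\ref{prop:multi-sharp-cost} as well. Two cautions on your write-up: (i) you do not need, and should not try to prove, that all optimal portfolios share the same direction $\omega$ --- what is needed is only that all globally optimal directions attain the same maximal value of $a(\omega)^2/v(\omega)$, which is automatic, with the sign fixed by $a(\omega)>0$ at any non-trivial optimum; (ii) your fallback claim that \emph{every stationary} direction yields the same ratio is too strong --- distinct local optima of $a^2/v$ on the sphere can carry different Sharpe ratios --- and since the corollary concerns global optima only, you should restrict that claim accordingly.
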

\begin{proof}
    Let $V_N^{\star}$ be such a $\gamma$-optimal portfolio. As an optimal portfolio, $V_N^{\star}$ is also a $\gamma$-zero gradient portfolio. Then according to Proposition \ref{prop:sharp-gradient},
    $$
    \E_{\Prob}\left[V_N^{\star}\right]-v_0S_N^0=2\gamma \Var\left[V_N^{\star}\right]
    $$ 
    Inserting this term in the mean-variance objective function leads to
    \be
    \label{eq:same-expectation}
    \E_{\Prob}\left[{V}_{N}^{\star}-\gamma\left({V}_{N}^{\star}-\E_{\Prob}[{V}_{N}^{\star}]\right)^2\right]=\frac{\E_{\Prob}\left[V_N^{\star}\right]}{2}-\frac{v_0S_N^0}{2}.
    \ee    
We deduce that if $V_N'$ is another $\gamma$-optimal portfolio then (\ref{eq:same-expectation}) gives $\E_{\Prob}[V_N^{\star}]=\E_{\Prob}[V_N']$. 
But the two portfolios also attain the same mean-variance value then
$$
\E_{\Prob}\left[{V}_{N}^{\star}\right]-\gamma \Var\left[{V}_{N}^{\star}\right]=\E_{\Prob}\left[{V}_{N}'\right]-\gamma \Var\left[{V}_{N}'\right].
$$
With this last equality, we can conclude they have the same variance and as a result, the same Sharpe ratio. 
\end{proof}
\begin{remark}
Maximizing the mean-variance objective function of zero gradient portfolios, is equivalent to maximizing their Sharpe ratio.
\end{remark}
Now, let's prove this important proposition on the impact of the risk aversion.
\begin{proposition}
\label{prop:beta_gamma-zero}
Let $(\gamma,\gamma')$ $\in$ $\mathbb{R}^{*}$, then $(\alpha_n^{\star})_n$ is a $\gamma$-zero gradient strategy if and only if ($\frac{\gamma}{\gamma'}\alpha_n^{\star})_n$ is a $\gamma'$-zero gradient strategy.
\end{proposition}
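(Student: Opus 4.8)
The plan is to pass from strategies to the chaos coefficients generating them and to exploit the positive homogeneity of $\mathcal{R}$ recorded in Proposition~\ref{prop:R}. Since $\beta\mapsto\alpha_n(\beta)=B_n\cdot\beta$ is linear and does not involve the constrained coefficient $\beta_0$ (the term $\lambda=0$ cancels in every increment $\Delta Z_{n+1}$), the scaled strategy $\bigl(\tfrac{\gamma}{\gamma'}\alpha_n^{\star}\bigr)_n$ is generated by the vector $\widehat\beta$ obtained from $\beta^{\star}$ by multiplying its free ($\lambda\neq0$) coefficients by $\tfrac{\gamma}{\gamma'}$ and keeping $\widehat\beta_0=V_0$; moreover $\widehat\beta\in\mathcal{Z}$ and $\mathcal{R}$ is a.s.\ differentiable at $\widehat\beta$ whenever it is at $\beta^{\star}$, as scaling by a nonzero constant preserves the non-vanishing of $K_n^i\cdot\beta$. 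So it suffices to prove that $\beta^{\star}$ solves \eqref{eq:grad} for $\gamma$ if and only if $\widehat\beta$ solves it for $\gamma'$. Throughout I take $\gamma,\gamma'$ with the same sign, so that $c:=\gamma/\gamma'>0$; this is the relevant case, risk aversions being positive.

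First I would unwind \eqref{eq:grad}. Differentiating $G_\gamma(\beta,\theta)=\mathcal{R}(\beta)S_N^0-\gamma\bigl((\mathcal{R}(\beta)-\theta)S_N^0\bigr)^2$ gives $\nabla_\beta G_\gamma=S_N^0\,\nabla\mathcal{R}(\beta)\bigl(1-2\gamma S_N^0(\mathcal{R}(\beta)-\theta)\bigr)$ and $\nabla_\theta G_\gamma=2\gamma(S_N^0)^2(\mathcal{R}(\beta)-\theta)$. Since $S_N^0$ is a deterministic constant, the $\theta$-component of $\E_{\Prob}\bigl[\Psi_\gamma(\beta^{\star},\E_{\Prob}[\mathcal{R}(\beta^{\star})])\bigr]$ vanishes automatically, and dividing the $\beta$-component by $S_N^0\neq0$ shows that \eqref{eq:grad} is equivalent to
\[
\E_{\Prob}\bigl[\nabla\mathcal{R}(\beta^{\star})\bigr]=2\gamma S_N^0\,\E_{\Prob}\bigl[\nabla\mathcal{R}(\beta^{\star})\,(\mathcal{R}(\beta^{\star})-\E_{\Prob}[\mathcal{R}(\beta^{\star})])\bigr],
\]
all expectations being finite because $\mathcal{R}(\beta)$ and $\nabla\mathcal{R}(\beta)$ lie in every $L^p$ by Proposition~\ref{prop:R}; call this identity $(\star_\gamma)$ at $\beta^{\star}$. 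From $\mathcal{R}(\beta)=V_0+D\cdot\beta-\sum_i\sum_n\nu|K_n^i\cdot\beta|$ and $\nabla\mathcal{R}(\beta)=D-\sum_i\sum_n\nu\,\sign(K_n^i\cdot\beta)K_n^i$ (with $D$ and $K_n^i$ having zero $\lambda=0$ entry), one reads off for $c>0$ that $\mathcal{R}(\widehat\beta)-V_0=c(\mathcal{R}(\beta^{\star})-V_0)$ and $\nabla\mathcal{R}(\widehat\beta)=\nabla\mathcal{R}(\beta^{\star})$; taking expectations and subtracting gives $\mathcal{R}(\widehat\beta)-\E_{\Prob}[\mathcal{R}(\widehat\beta)]=c\,(\mathcal{R}(\beta^{\star})-\E_{\Prob}[\mathcal{R}(\beta^{\star})])$.

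Finally I would substitute $\widehat\beta$ into $(\star_{\gamma'})$: its left-hand side is $\E_{\Prob}[\nabla\mathcal{R}(\widehat\beta)]=\E_{\Prob}[\nabla\mathcal{R}(\beta^{\star})]$, while its right-hand side is $2\gamma' S_N^0\,\E_{\Prob}\bigl[\nabla\mathcal{R}(\beta^{\star})\,c\,(\mathcal{R}(\beta^{\star})-\E_{\Prob}[\mathcal{R}(\beta^{\star})])\bigr]=2\gamma S_N^0\,\E_{\Prob}\bigl[\nabla\mathcal{R}(\beta^{\star})\,(\mathcal{R}(\beta^{\star})-\E_{\Prob}[\mathcal{R}(\beta^{\star})])\bigr]$ because $\gamma' c=\gamma$. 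Thus $(\star_{\gamma'})$ at $\widehat\beta$ is exactly the same equation as $(\star_{\gamma})$ at $\beta^{\star}$, which gives one implication; the converse follows by interchanging $\gamma$ and $\gamma'$ (scaling the free coefficients of $\widehat\beta$ by $\gamma'/\gamma$ recovers those of $\beta^{\star}$). The only substantive ingredient is the positive homogeneity used in the second paragraph — the rest is bookkeeping — and the point to watch is that these identities require $\gamma/\gamma'>0$ and that it is the free chaos coefficients that are scaled, the constrained one $\beta_0=V_0$ being untouched since $\alpha_n$, $D$ and the $K_n^i$ do not involve it.
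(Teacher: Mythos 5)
Your proof is correct and follows essentially the same route as the paper's: exploit the positive homogeneity of $\mathcal{R}-V_0$ and the invariance of $\nabla\mathcal{R}$ under positive scaling of the coefficients, then substitute the scaled vector into the zero-gradient identity with $\gamma' \cdot (\gamma/\gamma') = \gamma$. Your two extra precisions — that only the free ($\lambda\neq 0$) coefficients are scaled so the constraint $\beta_0=V_0$ is preserved, and that the argument genuinely needs $\gamma/\gamma'>0$ for the sign terms to be unchanged — are points the paper's proof glosses over, and are worth keeping.
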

%\begin{lemma}
%\label{lemma}
%Let $(\gamma,\gamma')$ $\in$ $\mathbb{R}^{*}$, then $(\beta_{\lambda}^{\star})_{\lambda}$ is solution to \eqref{eq:grad} if and only if ($\frac{\gamma}{\gamma'}\beta_{\lambda}^{\star})_{\lambda}$ is a solution to $(T^{\gamma'})$.
%\end{lemma}
\begin{proof}
Let $(\alpha_n^{\star})_n$ be a $\gamma$-zero gradient strategy with the associated chaos coefficients $\beta^{\star}$.
Then $\beta^{\star}$ is solution to \eqref{eq:grad}. Then with the previous notations
\be
\label{eq:grad_utility}
\E_{\Prob}\left[\nabla \mathcal{R}\left(\beta^{\star}\right)S_N^0\left[1-2 \gamma S_N^0\left(\mathcal{R}\left(\beta^{\star}\right)-\E_{\Prob}\left[\mathcal{R}\left(\beta^{\star} \right)\right]\right)\right]\right]=0.
\ee
By using the decomposition of Lemma \ref{prop:R}, 
$\mathcal{R}$ is a.s differentiable on $\mathcal{Z}$ and 
\be
\label{eq:grad_portfolio}
\nabla \mathcal{R}(\beta)=D-\sum_{i=1}^d\sum_{n=0}^{N-1}\nu \times \sign(K_n^i\cdot\beta)K_n^i.
\ee
Let's now consider the portfolio 
$$
\mathcal{R}(\frac{\gamma}{\gamma'}\beta^{\star})=v_0+\frac{\gamma}{\gamma'}D\cdot\beta^{\star}-\sum_{i=1}^{d}\sum_{n=0}^{N-1}\nu|K_n^i\cdot\frac{\gamma}{\gamma'}\beta^{\star}|.
$$
We obtain the following equality
$$
\mathcal{R}(\frac{\gamma}{\gamma'}\beta^{\star})-\frac{\gamma}{\gamma'}\mathcal{R}(\beta^{\star})=v_0(1-\frac{\gamma}{\gamma'}).
$$
We deduce that
\be
\label{eq:exp}
\frac{\gamma}{\gamma'}\left(\mathcal{R}(\beta^{\star})-\E_{\Prob}\left[\mathcal{R}(\beta^{\star})\right]\right)=\mathcal{R}(\frac{\gamma}{\gamma'}\beta^{\star})-\E_{\Prob}\left[\mathcal{R}(\frac{\gamma}{\gamma'}\beta^{\star})\right].
\ee
By using the expression (\ref{eq:grad_portfolio}), we also notice that 
\be
\label{eq:grad_gamma}
\nabla \mathcal{R}(\frac{\gamma}{\gamma'}\beta^{\star})=D-\sum_{i=1}^d\sum_{n=0}^{N-1}\nu \times \sign\left(K_n^i\cdot\frac{\gamma}{\gamma'}\beta^{\star}\right)K_n^i=\nabla \mathcal{R}(\beta^{\star}).
\ee
Finally, by inserting (\ref{eq:exp}) and (\ref{eq:grad_gamma}) in equality (\ref{eq:grad_utility}), we obtain
\be
\E_{\Prob}\left[\nabla \mathcal{R}\left(\frac{\gamma}{\gamma'}\beta^{\star}\right)S_N^0\left[1-2 \gamma'S_N^0\left(\mathcal{R}\left(\frac{\gamma}{\gamma'}\beta^{\star}\right)-\E_{\Prob}\left[\mathcal{R}\left(\frac{\gamma}{\gamma'}\beta^{\star} \right)\right]\right)\right]\right]=0,
\ee which finishes the proof.
\end{proof}
\begin{corollary}
\label{prop:beta_gamma}
Let $(\gamma,\gamma')$ $\in$ $\mathbb{R}^{*}$, then $(\alpha_n^{\star})_n$ is a $\gamma$-optimal strategy if and only if ($\frac{\gamma}{\gamma'}\alpha_n^{\star})_n$ is a $\gamma'$-optimal strategy.
\end{corollary}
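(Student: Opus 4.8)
The plan is to read the statement as a corollary of Proposition~\ref{prop:beta_gamma-zero} combined with the characterization of optimality among zero gradient strategies obtained in the proof of Corollary~\ref{prop:global-sharp}. The rescaling $\beta\mapsto\tfrac{\gamma}{\gamma'}\beta$ already transports $\gamma$-zero gradient strategies onto $\gamma'$-zero gradient ones; the only thing left is to certify that it also transports \emph{maximizers} onto maximizers, and for that I would exhibit the explicit affine transformation relating the two objective functions (or, equivalently, their common value on the stationary set).

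First I would recall the chain of reductions already established: a $\gamma$-optimal strategy corresponds to a $\beta^\star$ solving \eqref{system_chaos}, which is equivalent to \eqref{syst:chaos_cost}; the latter is an optimization over the open set $\mathcal{Z}\times\mathbb{R}$ on which $(\beta,\theta)\mapsto\E_{\Prob}[G_\gamma(\beta,\theta)]$ is differentiable, and points outside $\mathcal{Z}$ are not coherent, hence not optimal. Consequently any $\gamma$-optimal $\beta^\star$, paired with $\theta^\star=\E_{\Prob}[\mathcal{R}(\beta^\star)]$, solves \eqref{eq:grad}, i.e.\ is $\gamma$-zero gradient. Proposition~\ref{prop:beta_gamma-zero} then gives that the coefficient vector equal to $V_0$ in the constant slot and to $\tfrac{\gamma}{\gamma'}\beta^\star_\lambda$ elsewhere is $\gamma'$-zero gradient (it stays in $\mathcal{Z}$ since $\gamma/\gamma'\neq0$), and the associated strategy is exactly $(\tfrac{\gamma}{\gamma'}\alpha_n^\star)_n$ because, through \eqref{eq:control_chaos}, $\alpha_n(\cdot)$ is linear and depends only on the non-constant coefficients. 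So the claim reduces to upgrading ``zero gradient'' to ``optimal''.

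For that upgrade I would use two ingredients from the preceding material. On one hand, \eqref{eq:same-expectation} shows that on the set of $\gamma$-zero gradient strategies the objective value equals $\tfrac12\bigl(S_N^0\,\E_{\Prob}[\mathcal{R}(\beta)]-V_0S_N^0\bigr)$, so a $\gamma$-optimal strategy is precisely a $\gamma$-zero gradient strategy maximizing $\E_{\Prob}[\mathcal{R}(\beta)]$; the same holds for $\gamma'$. On the other hand, the decomposition of Proposition~\ref{prop:R} and $\gamma/\gamma'>0$ give $\mathcal{R}(\tfrac{\gamma}{\gamma'}\beta)-V_0=\tfrac{\gamma}{\gamma'}\bigl(\mathcal{R}(\beta)-V_0\bigr)$, hence $\E_{\Prob}[\mathcal{R}(\tfrac{\gamma}{\gamma'}\beta)]-V_0=\tfrac{\gamma}{\gamma'}\bigl(\E_{\Prob}[\mathcal{R}(\beta)]-V_0\bigr)$. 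Since $\tfrac{\gamma}{\gamma'}>0$ this is an increasing affine bijection, so it carries the maximizers of $\E_{\Prob}[\mathcal{R}(\cdot)]$ over $\gamma$-zero gradient points bijectively onto those over $\gamma'$-zero gradient points; as a $\gamma'$-optimal strategy exists and is such a maximizer, its preimage is $\gamma$-optimal, and conversely, which yields both implications (the converse also follows by symmetry, swapping $\gamma$ and $\gamma'$). An equivalent and slightly more self-contained route is to verify directly, using Proposition~\ref{prop:R}, that with $\hat\theta=V_0+\tfrac{\gamma}{\gamma'}(\theta-V_0)$ one has $G_{\gamma'}\bigl(\tfrac{\gamma}{\gamma'}\beta,\hat\theta\bigr)=V_0S_N^0+\tfrac{\gamma}{\gamma'}\bigl(G_{\gamma}(\beta,\theta)-V_0S_N^0\bigr)$, so \eqref{syst:chaos_cost} for $\gamma$ and for $\gamma'$ are related by an affine bijection of the feasible set and an increasing affine change of value, whence their argmaxima correspond.

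The step I expect to be the main obstacle is exactly this passage from the first-order (zero gradient) condition to genuine global optimality: Proposition~\ref{prop:beta_gamma-zero} only matches stationary points, and \eqref{eq:grad} may a priori have several solutions with different objective values, so one must invoke the explicit form \eqref{eq:same-expectation} of the objective on the stationary set (or the affine identity above) to exclude that the rescaling sends a global maximum to a merely stationary point. The remaining pieces—the equivalences of the reformulations, differentiability on the open set $\mathcal{Z}\times\mathbb{R}$, and the algebra of the rescaling—are all already available from the earlier results.
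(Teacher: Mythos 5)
Your proposal is correct, and its ``equivalent and slightly more self-contained route'' is in fact exactly the paper's own proof: the paper establishes the affine scaling identity $\E_{\Prob}[F_{\gamma'}(\tfrac{\gamma}{\gamma'}\beta)]-V_0S_N^0=\tfrac{\gamma}{\gamma'}\bigl(\E_{\Prob}[F_{\gamma}(\beta)]-V_0S_N^0\bigr)$ from the positive homogeneity of $\beta\mapsto\mathcal{R}(\beta)-V_0$ (Proposition~\ref{prop:R}) and concludes by the same contradiction argument you sketch, then transfers to the controls via linearity of $\alpha_n(\cdot)$ in $\beta$. Your primary route --- characterizing $\gamma$-optimal points as the $\gamma$-zero-gradient points maximizing $\E_{\Prob}[\mathcal{R}(\beta)]$ via \eqref{eq:same-expectation}, and then matching stationary sets through Proposition~\ref{prop:beta_gamma-zero} --- also works, and you correctly identify that the stationarity transport alone does not upgrade to global optimality; but it is a longer detour that additionally requires the a.s.\ differentiability of $\mathcal{R}$ (needed for ``optimal $\Rightarrow$ zero gradient'' and for \eqref{eq:same-expectation}) and the existence result for \eqref{system_chaos}, none of which the direct scaling identity needs. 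One shared caveat with the paper: both arguments divide by, and use the monotonicity of, $\tfrac{\gamma}{\gamma'}$, so the statement really requires $\gamma/\gamma'>0$ rather than merely $\gamma,\gamma'\in\mathbb{R}^{*}$; your explicit flagging of this, and your care with the constant coefficient $\beta_0=V_0$ under rescaling, are points the paper glosses over.
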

\begin{proof}
Let $(\alpha_n^{\star})_n$ be a $\gamma$-optimal strategy with the associated chaos coefficients $\beta^{\star}$.  
Then $\beta^{\star}$ is a solution to \eqref{system_chaos2}. 
Firstly, according to Proposition \ref{prop:beta_gamma-zero}, $\frac{\gamma}{\gamma'}\beta^{\star}$ is a solution to $(T^{\gamma'})$ with 
\beaa
\E_{\Prob}[G^{\gamma'}(\frac{\gamma}{\gamma'}\beta^{\star})]=\E_{\Prob}\left[\mathcal{R}(\frac{\gamma}{\gamma'}\beta^{\star})S_N^0-\gamma'\left(\left(\mathcal{R}(\frac{\gamma}{\gamma'}\beta^{\star})-\E_{\Prob}\left[\mathcal{R}(\frac{\gamma}{\gamma'}\beta^{\star})\right]\right)S_N^0\right)^2\right].
\eeaa
But as $\beta\rightarrow \mathcal{R}(\beta)-v_0$ is a positive homogeneous function, ie $\forall \ u>0, \ \mathcal{R}(u\beta)-v_0= u\left(\mathcal{R}(\beta)-v_0\right)$, then 
\beaa
\begin{aligned}
\E_{\Prob}[G^{\gamma'}(\frac{\gamma}{\gamma'}\beta^{\star})]-v_0S_N^0
%& =\E_{\Prob}\left[\left(R(\frac{\gamma}{\gamma'}\beta_{\lambda}^{\star},\Delta W)-v_0\right)S_N^0-\gamma'\left(\left(R(\frac{\gamma}{\gamma'}\beta_{\lambda}^{\star},\Delta W)-\E_{\Prob}\left[R(\frac{\gamma}{\gamma'}\beta_{\lambda}^{\star},\Delta W)\right]\right)S_N^0\right)^2\right]\\
& =\E_{\Prob}\left[\frac{\gamma}{\gamma'}\left(\mathcal{R}(\beta^{\star})-v_0\right)S_N^0-\gamma'\left(\frac{\gamma}{\gamma'}\left(\mathcal{R}(\beta^{\star})-\E_{\Prob}\left[\mathcal{R}(\beta^{\star})\right]\right)S_N^0\right)^2\right]\\
& =\frac{\gamma}{\gamma'}\left(\E_{\Prob}[G^{\gamma}(\beta^{\star})]-v_0S_N^0\right).
\end{aligned}
\eeaa
If $\exists \ \beta^{1}\in\mathbb{R}^m$ such that 
$\E_{\Prob}[G_{\gamma'}(\frac{\gamma}{\gamma'}\beta^{\star})]<\E_{\Prob}[G_{\gamma'}(\beta^{1})]
$ then using the previous equality $\E_{\Prob}[G^{\gamma}(\beta^{\star})]<\E_{\Prob}[G_{\gamma'}(\frac{\gamma'}{\gamma}\beta^{1})]$, which is impossible with the optimally of $\beta^{\star}$. We deduce that
$\frac{\gamma}{\gamma'}\beta^{\star}$ is a solution to $(\mathcal{J}^{\gamma'})$. The converse statement is obvious. 
\\
We have seen in (\ref{eq:control_chaos}), that the controls 
$(\alpha_n^{\star})_{n}$ can be expressed as  linear functions of $\beta^{\star}$. We can conclude that $(\alpha_n^{\star})_n$ is an optimal solution for \eqref{sys:cost2} if and only if ($\frac{\gamma}{\gamma'}\alpha_n^{\star})_n$ is an optimal solution for $(E^{\gamma'})$.  
\end{proof}
\begin{proposition}
\label{prop:same-sharp}
Let $(\alpha_n^{\star})_n$ be a $\gamma$-zero gradient strategy and $(V_n^{\star})_n$ the associated $\gamma$-zero gradient portfolio. Then for all $u>0$,
$\sharpe(V_N^{\star})=\sharpe(V_N^{u})$,
where $(V_n^u)_n$ is the portfolio associated to the strategy $(u\alpha_n^{\star})_n$.
\end{proposition}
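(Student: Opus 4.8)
The plan is to exploit the positive homogeneity (of degree one) of the map $\beta \mapsto \mathcal{R}(\beta)-V_0$, already used in the proof of Corollary~\ref{prop:beta_gamma}, together with the linearity in $\beta$ of the control map of~\eqref{eq:control_chaos}. Let $\beta^{\star}\in\mathbb{R}^m$ be the chaos coefficients associated with the $\gamma$-zero gradient strategy $(\alpha_n^{\star})_n$, so that $\alpha_n^{\star}=\alpha_n(\beta^{\star})$ and $V_N^{\star}=\mathcal{R}(\beta^{\star})S_N^0$. First I would note that, by linearity of~\eqref{eq:control_chaos}, $\alpha_n(u\beta^{\star})=u\,\alpha_n(\beta^{\star})=u\,\alpha_n^{\star}$ for every $n$ and every $u>0$; hence $u\beta^{\star}$ is the chaos representation of the strategy $(u\alpha_n^{\star})_n$, and the associated portfolio satisfies $V_N^{u}=\mathcal{R}(u\beta^{\star})S_N^0$.

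Next I would apply the decomposition of Proposition~\ref{prop:R}, $\mathcal{R}(\beta)=V_0+D\cdot\beta-\sum_{i=1}^d\sum_{n=0}^{N-1}\nu\,|K_n^i\cdot\beta|$, which for $u>0$ gives
\[
\mathcal{R}(u\beta^{\star})-V_0=u\bigl(\mathcal{R}(\beta^{\star})-V_0\bigr),
\]
so that, since $S_N^0$ is deterministic, $V_N^{u}=V_0 S_N^0+u\bigl(V_N^{\star}-V_0 S_N^0\bigr)$. Taking expectation and variance under $\Prob$ then yields $\E_{\Prob}[V_N^{u}]-V_0 S_N^0=u\bigl(\E_{\Prob}[V_N^{\star}]-V_0 S_N^0\bigr)$ and $\Var[V_N^{u}]=u^2\,\Var[V_N^{\star}]$, hence $\Var[V_N^{u}]^{1/2}=u\,\Var[V_N^{\star}]^{1/2}$ because $u>0$.

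Finally I would conclude by inserting these two identities into the definition of the Sharpe ratio $\sharpe(V_N)=\bigl(\E_{\Prob}[V_N]-V_0 S_N^0\bigr)\big/\Var[V_N]^{1/2}$ (the normalization made implicit in Proposition~\ref{prop:sharp-gradient}): the factor $u$ cancels between numerator and denominator, giving $\sharpe(V_N^{u})=\sharpe(V_N^{\star})$. I do not expect a genuine obstacle; the only points to handle carefully are that rescaling the strategy by $u>0$ rescales the chaos coefficients by $u$ (immediate from linearity of~\eqref{eq:control_chaos}) and that the transaction-cost term is positively homogeneous of degree one in $\beta$ for $u>0$ (this is exactly the homogeneity already recorded through Proposition~\ref{prop:R} and Corollary~\ref{prop:beta_gamma}). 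Note that the $\gamma$-zero gradient hypothesis is not actually needed for the argument — the claim holds for any admissible strategy — but I would keep the statement as is, since this is the case relevant to the discussion.
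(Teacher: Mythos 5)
Your proof is correct and follows essentially the same route as the paper: the key step in both is the positive homogeneity of $\beta\mapsto\mathcal{R}(\beta)-V_0$ from Proposition~\ref{prop:R}, which scales the excess mean by $u$ and the standard deviation by $u$, so the Sharpe ratio is unchanged. Your additional remarks — that $u\beta^{\star}$ represents $(u\alpha_n^{\star})_n$ by linearity of~\eqref{eq:control_chaos}, and that the zero-gradient hypothesis is not actually used — are accurate and only make explicit what the paper leaves implicit.
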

\begin{proof}
By using that $\beta\rightarrow \mathcal{R}(\beta)-v_0$ is a positive homogeneous function, we have
\beaa
\begin{aligned}
\sharpe(V_N^{\star}) &=\frac{\E_{\Prob}\left[\tilde{V}_{N}^{\star}S_N^0\right]-v_0S_N^0}{\Var\left[\tilde{V}^{\star}_{N}S_N^0\right]^{\frac{1}{2}}}
=\frac{u\E_{\Prob}\left[\mathcal{R}(\beta^{\star})S_N^0\right]}{u \Var\left[\mathcal{R}(\beta^{\star})S_N^0\right]^{\frac{1}{2}}}
%& =\frac{\E_{\Prob}\left[\left(v_0+\sum_{n=0}^{N-1}u\alpha_{n+1}^{\star}\cdot\Delta \tilde{S}_{n+1}-\sum_{i=1}^d\frac{|u\alpha_{n+1}^i-u\alpha_{n}^i|S_n^i}{S_n^0}\right)S_N^0\right]-v_0S_N^0}{\var\left[\left(v_0+\sum_{n=0}^{N-1}u\alpha_{n+1}^{\star}\cdot\Delta \tilde{S}_{n+1}-\sum_{i=1}^d\frac{|u\alpha_{n+1}^i-u\alpha_{n}^i|S_n^i}{S_n^0}\right)S_N^0\right]^{\frac{1}{2}}}\\
= \sharpe(V_N^{u}), 
\end{aligned}
\eeaa
that finishes the proof. 
\end{proof}
\begin{proposition}
\label{lemma2}
The Sharpe ratio of a zeros gradient portfolio is independent of its risk aversion.
\end{proposition}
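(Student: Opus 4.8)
The plan is to read this as a corollary of the two preceding results, Proposition~\ref{prop:beta_gamma-zero} (the rescaling $\alpha^\star\mapsto\frac{\gamma}{\gamma'}\alpha^\star$ is an equivalence between $\gamma$- and $\gamma'$-zero gradient strategies) and Proposition~\ref{prop:same-sharp} (rescaling a zero gradient strategy by any $u>0$ does not change the Sharpe ratio of the associated portfolio). Together these say that the canonical bijection between zero gradient strategies for different risk aversions is Sharpe-ratio preserving, which is exactly the claim. So I would first make precise what ``independent of its risk aversion'' means: the set of Sharpe ratios attained by $\gamma$-zero gradient portfolios is the same for every admissible $\gamma$.

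Concretely, I would fix $\gamma,\gamma'$ with the same sign (both positive in the financially relevant case) and start from a $\gamma$-zero gradient portfolio $V^\star$ associated with a strategy $(\alpha_n^\star)_n$ and chaos coefficients $\beta^\star$. Step one: by Proposition~\ref{prop:beta_gamma-zero}, $(\frac{\gamma}{\gamma'}\alpha_n^\star)_n$ is a $\gamma'$-zero gradient strategy; call $(V_n')_n$ its associated portfolio. Step two: since $u:=\frac{\gamma}{\gamma'}>0$, apply Proposition~\ref{prop:same-sharp} to $(\alpha_n^\star)_n$ with this $u$, giving $\sharpe(V_N^\star)=\sharpe(V_N')$. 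Step three: since Proposition~\ref{prop:beta_gamma-zero} is an equivalence, $\beta^\star\mapsto\frac{\gamma}{\gamma'}\beta^\star$ is a bijection from $\gamma$-zero gradient strategies onto $\gamma'$-zero gradient strategies, and by step two it preserves the Sharpe ratio; hence the two collections of attainable Sharpe ratios coincide, which proves the proposition.

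If a self-contained argument is preferred over quoting Proposition~\ref{prop:same-sharp}, I would reprove step two directly from the positive homogeneity of $\beta\mapsto\mathcal{R}(\beta)-V_0$ given by Proposition~\ref{prop:R}: from $\tilde V_N'S_N^0=\mathcal{R}(\frac{\gamma}{\gamma'}\beta^\star)S_N^0$ one gets $V_N'-V_0S_N^0=\frac{\gamma}{\gamma'}(V_N^\star-V_0S_N^0)$, so the excess mean is scaled by $\frac{\gamma}{\gamma'}$ and the variance by $(\frac{\gamma}{\gamma'})^2$, leaving $\sharpe(V_N')=\sharpe(V_N^\star)$; this is also consistent with Proposition~\ref{prop:sharp-gradient}, where the factor $\gamma$ is compensated by the factor $\frac{1}{\gamma'}$ carried by $\Var[V_N']^{1/2}$.

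I do not expect any genuine obstacle here — the statement is a short consequence of results already established. The only points needing a little care are (i) requiring $\gamma$ and $\gamma'$ to have the same sign so that $u=\frac{\gamma}{\gamma'}$ is admissible in Proposition~\ref{prop:same-sharp}, and (ii) phrasing the conclusion in terms of the explicit bijection between zero gradient strategies rather than about a single portfolio, since a fixed $\gamma$ may admit several zero gradient portfolios (only $\gamma$-optimal ones are forced to share one Sharpe ratio, by Corollary~\ref{prop:global-sharp}).
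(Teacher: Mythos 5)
Your proof is correct and follows essentially the same route as the paper: the paper also combines Proposition~\ref{prop:beta_gamma-zero} (rescaling by $\gamma/\gamma'$ maps $\gamma$-zero gradient strategies to $\gamma'$-zero gradient strategies) with Proposition~\ref{prop:same-sharp} (positive rescaling preserves the Sharpe ratio). Your added care about the sign of $\gamma/\gamma'$ and the bijection phrasing only makes the argument slightly more precise than the paper's version.
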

\begin{proof}
A direct consequence of Propositions \ref{prop:beta_gamma-zero} and \ref{prop:same-sharp}. 
\end{proof}
%Let $(\gamma,\gamma')$ $\in$ $\mathbb{R}^{\star}$. According to Proposition \ref{prop:beta_gamma-zero}, $(\alpha_n^{\star})_n$ is a $\gamma$-zero gradient strategy if and only if ($\frac{\gamma}{\gamma'}\alpha_n^{\star})_n$ is a $\gamma'$-zero gradient strategy. Proposition \ref{prop:same-sharp} claims that portfolios issued from these strategies have the same Sharpe ratio then this last is not a function of $\gamma$ nor $\gamma'$. 
%\end{proof}
\begin{proposition}
\label{prop:benchmark}
Let $\mathcal{V}>0$ and  $(V_n^{\star})_n$ be a $\gamma$-zero gradient portfolio. Then there exists $\gamma'$ and a $\gamma'$-zeros gradient portfolio $V'$ with $\sharpe(V_N')=\sharpe(V_N^{\star})$ and with volatility $\mathcal{V}$.
\begin{proof}
 According to Propositions \ref{prop:beta_gamma-zero} and \ref{prop:same-sharp}, for all $u>0$, $(u\alpha_{n}^{\star})_n$ is a $\frac{\gamma}{u}$-zeros gradient strategy
and the associated $\frac{\gamma}{u}$-zeros gradient portfolio $V_N^{u}$ has the same Sharpe ratio as $V_N^{\star}$. Consequently, according to Proposition \ref{prop:sharp-gradient}, $$
 \frac{\gamma}{u}=\frac{\sharpe(V_N^{\star})}{2\Var(V_N^u)^{\frac{1}{2}}}.
 $$
 Then by choosing $u=\frac{2\gamma\mathcal{V}}{\sharpe(V_N^{\star})}$, the portfolio $(u\alpha_{n}^{\star})_n$ verifies the conditions. 
\end{proof}
\end{proposition}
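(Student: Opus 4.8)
The plan is to realise $V'$ as a \emph{rescaling} of the given strategy. Starting from the $\gamma$-zero gradient strategy $(\alpha_n^{\star})_n$ with chaos coefficients $\beta^{\star}$, I would consider, for a parameter $u>0$ to be fixed later, the scaled strategy $(u\alpha_n^{\star})_n$. By Proposition \ref{prop:beta_gamma-zero} applied with $\gamma'=\gamma/u$, this is a $\tfrac{\gamma}{u}$-zero gradient strategy; denote by $(V_n^{u})_n$ its associated portfolio. Proposition \ref{prop:same-sharp} immediately gives $\sharpe(V_N^{u})=\sharpe(V_N^{\star})$, so the Sharpe ratio requirement holds for every admissible $u$, and it only remains to pin down $u$ so that the volatility of $V_N^{u}$ equals $\mathcal{V}$.

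Next I would compute $\Var[V_N^{u}]$. Using the positive homogeneity of $\beta\mapsto\mathcal{R}(\beta)-V_0$ established in Proposition \ref{prop:R}, together with $V_N^{u}=\mathcal{R}(u\beta^{\star})S_N^0$ and the fact that $S_N^0$ is deterministic, one gets $V_N^{u}=V_0 S_N^0+u\,(\mathcal{R}(\beta^{\star})-V_0)S_N^0$, hence $\Var[V_N^{u}]^{1/2}=u\,\Var[V_N^{\star}]^{1/2}$. On the other hand, applying Proposition \ref{prop:sharp-gradient} to the $\tfrac{\gamma}{u}$-zero gradient portfolio $V_N^{u}$ yields $\tfrac{\gamma}{u}=\sharpe(V_N^{\star})/\bigl(2\Var[V_N^{u}]^{1/2}\bigr)$. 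Solving $\Var[V_N^{u}]^{1/2}=\mathcal{V}$ then forces $u=\mathcal{V}/\Var[V_N^{\star}]^{1/2}$, which, after rewriting $\Var[V_N^{\star}]^{1/2}$ via Proposition \ref{prop:sharp-gradient} applied to $V^{\star}$ itself, becomes $u=2\gamma\mathcal{V}/\sharpe(V_N^{\star})$. Setting $\gamma':=\gamma/u=\sharpe(V_N^{\star})/(2\mathcal{V})$ and $V':=V^{u}$, this $V'$ is a $\gamma'$-zero gradient portfolio with the prescribed Sharpe ratio and volatility.

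The only delicate point — and the place where the argument could fail — is the admissibility of the chosen $u$: it must be strictly positive and finite. This is not an issue here. Indeed $\mathcal{V}>0$, $\gamma\neq 0$, and Proposition \ref{prop:sharp-gradient} applied to $V^{\star}$ gives $\sharpe(V_N^{\star})=2\gamma\,\Var[V_N^{\star}]^{1/2}$, so $\sharpe(V_N^{\star})$ has the same sign as $\gamma$ and is nonzero as soon as $\Var[V_N^{\star}]>0$; the latter holds because market completeness rules out a riskless portfolio built from the risky assets alone, which is exactly the strict positivity of the quantity $v$ used in the existence proof for \eqref{system_chaos}. Consequently $u=\mathcal{V}/\Var[V_N^{\star}]^{1/2}>0$, and the proof reduces to this short chaining of Propositions \ref{prop:sharp-gradient}, \ref{prop:beta_gamma-zero} and \ref{prop:same-sharp}; no substantial obstacle remains beyond bookkeeping the scaling constant and checking the sign of the Sharpe ratio.
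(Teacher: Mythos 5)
Your proposal is correct and follows essentially the same route as the paper's proof: rescale the strategy by $u$, invoke Propositions \ref{prop:beta_gamma-zero} and \ref{prop:same-sharp} to preserve the zero-gradient property and the Sharpe ratio, and use Proposition \ref{prop:sharp-gradient} to solve for $u=\frac{2\gamma\mathcal{V}}{\sharpe(V_N^{\star})}$. The only additions are the explicit verification that $\Var[V_N^u]^{1/2}=u\,\Var[V_N^{\star}]^{1/2}$ and the check that $u>0$, which the paper leaves implicit but which are worthwhile bookkeeping.
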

\begin{proposition}
\label{prop:multi-sharp-cost}
Let $\gamma, \ \gamma'>0$ s.t. $\gamma\neq \gamma'$. Let $V^{\star}$ (resp. $V'$) be a $\gamma$-optimal portfolio (resp. $\gamma'$-optimal portfolio). Then, $\sharpe(V_N^{\star})=\sharpe(V_N')$.
\end{proposition}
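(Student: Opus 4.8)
The plan is to chain together three structural facts already proved: Corollary~\ref{prop:beta_gamma}, which says rescaling a $\gamma$-optimal strategy by the constant $\gamma/\gamma'$ produces a $\gamma'$-optimal strategy; Proposition~\ref{prop:same-sharp}, which says that positively rescaling a zero-gradient strategy leaves the Sharpe ratio unchanged; and Corollary~\ref{prop:global-sharp}, which says all $\gamma'$-optimal portfolios share a common Sharpe ratio. The argument is essentially a three-line bookkeeping once these are in place.

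Concretely, I would fix a $\gamma$-optimal portfolio $V^{\star}$, with strategy $(\alpha_n^{\star})_n$ and chaos coefficients $\beta^{\star}$, and set $u = \gamma/\gamma'$. Since $\gamma,\gamma'>0$ we have $u>0$, which is the only place the positivity hypothesis (as opposed to merely $\gamma,\gamma'\neq 0$) is used. By Corollary~\ref{prop:beta_gamma}, $(u\,\alpha_n^{\star})_n$ is a $\gamma'$-optimal strategy; let $V^{u}$ denote the portfolio it generates (equivalently, the portfolio with chaos coefficients $u\,\beta^{\star}$). Because an optimal portfolio is in particular a zero-gradient portfolio — this is exactly the observation used at the start of the proof of Corollary~\ref{prop:global-sharp}, coming from the fact that a solution of~\eqref{system_chaos} solves~\eqref{eq:grad} — Proposition~\ref{prop:same-sharp} applies to $V^{\star}$ with scaling factor $u>0$, giving $\sharpe(V_N^{u}) = \sharpe(V_N^{\star})$.

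Finally, both $V^{u}$ and $V'$ are $\gamma'$-optimal portfolios, so Corollary~\ref{prop:global-sharp} yields $\sharpe(V_N^{u}) = \sharpe(V_N')$. Combining the two equalities gives $\sharpe(V_N^{\star}) = \sharpe(V_N')$, which is the claim. I do not anticipate a genuine obstacle here: the content has already been absorbed into the earlier propositions, and the only subtlety is invoking them in the correct order and checking that the scaling constant is strictly positive so that Proposition~\ref{prop:same-sharp} is applicable.
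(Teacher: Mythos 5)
Your proof is correct and follows essentially the same route as the paper's: both chain Corollary~\ref{prop:beta_gamma}, Proposition~\ref{prop:same-sharp}, and Corollary~\ref{prop:global-sharp}, the only (immaterial) difference being that you rescale $V^{\star}$ up to the $\gamma'$ level while the paper rescales $V'$ down to the $\gamma$ level. Your explicit remark that optimality implies the zero-gradient property (needed to invoke Proposition~\ref{prop:same-sharp}) is a welcome clarification that the paper leaves implicit.
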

\begin{proof}
Let $(\alpha_n')_n$ be the $\gamma'$-optimal strategy associated to $V'$. Let $V^{\#}$ the portfolio associated to the strategy $(\frac{\gamma'} {\gamma}\alpha_n')_n$. According to Proposition \ref{prop:beta_gamma}, $V^{\#}$ is a $\gamma$-optimal portfolio. Furthermore, using Proposition \ref{prop:same-sharp}, we have
$\sharpe(V_N')=\sharpe(V_N^{\#})$. 
According to Proposition \ref{prop:global-sharp}, $\gamma$-optimal portfolios have the same Sharpe ratio. So $\sharpe(V_N^{\star})=\sharpe(V_N^{\#})$, and finally $\sharpe(V_N^{\star})=\sharpe(V_N')$. 
\end{proof}

\section{Numerical illustration}
\label{sec:result}
In this section, we implement the approach presented above and the sequential uni-period Markowitz model which is used as benchmark, see~\ref{sec:uni}. We consider the cases where costs are ignored and considered. 
In order to maximise the rate of return of a portfolio while controlling its volatility at time $T$, an agent consecutively maximizes at time $0=t_0<\cdots<t_N=T$, its uni-period mean-var objective function. We assume that the agent has a uni-period risk aversion parameter $\gamma_u$. The agent has to consecutively solve for $n \in \{0,\dots,N-1\}$,
\be
\label{system_uniperiod}
\tag{$Y^{\gamma_u}_{n}$}
\begin{aligned}
\ & \sup_{\alpha_{n+1}, \alpha_{n+1}^0\in\mathbb{R}^d\times\mathbb{R}}
& & \E_{\Prob}\left[{V}_{n+1}|\cF_{n}\right]-\gamma_u \Var_{\Prob}\left[V_{n+1}|\cF_{n}\right] \\
& \text{subject to}
& & \Delta V_{n+1}=\alpha_{n+1}\cdot \Delta S_{n+1}+\alpha_{n+1}^0 \Delta S_{n+1}^0 -\sum_{i=1}^d\nu|\alpha_{n+1}^i-\alpha_{n}^i|S_{n}^i
\end{aligned}
\ee
Further details on the implemented benchmark model and its link with risk aversion are presented in Appendix \ref{sec:uni}.
%Propositions \ref{prop:uni-sharp} and \ref{prop:uni-sharp-cost} state that the optimal portfolios derived from the sequential uni-period Markowitz model are not affected by risk aversion. Then, we evaluate the performance of both models by comparing their Sharpe ratios without considering risk aversion.
\\
We assume that the risky assets follow the dynamics (\ref{eq:asset}) with constant parameters. Formally, the drift terms, the volatility matrices and the risk free rate are chosen deterministic and constant, ie. $\forall \ n\in\{0,\dots,N\} \ \mu_n=\mu,\ \sigma_n=\sigma, \ r_n=r$. 
Along a first part, we compare the performances by choosing the same risk aversion parameter $\gamma=\gamma_u$. As the Sharpe ratios of the estimated portfolios is independent of the aversion parameter (see Propositions \ref{lemma2}, \ref{prop:multi-sharp-cost} , \ref{prop:uni-sharp}, \ref{prop:uni-sharp-cost}), it can be used as an indicator of performance. Then, along a second part, we apply a framework for matching the risk aversion between uni-period and multi-period models. We are able to illustrate and compare the behaviour of our solutions on two realisations. 

\subsection{Model Parameters}
We consider $d=3$ assets, evolving during $N=368$ days. Transactions are only available every $92$ days. The model is described in Section \ref{sec:environment}. Instead of specifying a volatility matrix, we fix the marginal volatilities $(\hat{\sigma}^i)_{i\in\{1.2,3\}}$ and a correlation matrix $\rho$ as in Tables~\ref{tab:parameters} and~\ref{tab:correlation}. 
\begin{table}[H]
    \centering
    \caption{Model parameters}
    \begin{tabular}{|*{5}{c|}}
        \hhline{~*{3}{-}}
        \multicolumn{1}{c|}{} & $S_1$ & $S_2$ & $S_3$     \\ \hline
        % use \gradient{} in the cells with a value between \minval and \maxval!
        $\mu$ & {0.06} & {0.02} & {0.14}    \\ \hline
        $\hat{\sigma}$ & {0.1} & {0.06} & {0.2}   \\ \hline
    \end{tabular}
     \label{tab:parameters}
\end{table}
\begin{table}[H]
    \centering
    \caption{Correlation matrix $\rho$ of risky assets}
    \begin{tabular}{|*{5}{c|}}
        \hhline{~*{4}{-}}
        \multicolumn{1}{c|}{} & $S_1$ & $S_2$ & $S_3$     \\ \hline
        % use \gradient{} in the cells with a value between \minval and \maxval!
        $S_1$ & {1} & {-0.2} & {0.3}    \\ \hline
        $S_2$ & {-0.2} & {1} & {-0.2}   \\ \hline
       $S_3$ & {0.3} & {-0.2} & {1}   \\ \hline
    \end{tabular}
     \label{tab:correlation}
\end{table}
We fix a constant risk free rate $r=0.001$. The initial portfolio wealth is $v_0=100$. The implementation parameters are summarized in the following table
\begin{table}[H]
    \centering
    \caption{Implementation parameters}
    \begin{tabular}{|*{6}{c|}}
        \hhline{*{6}{-}}
%        \multicolumn{1}{c|}{} 
          nb.traj. calibration &
         nb.traj. test & N &  p & Chaos degree    \\ \hline
          {$10^5$} & {$10^5$} &368& 92 & 2  \\ \hline
    \end{tabular}
     \label{tab:implementation_parameters_multi}
\end{table}
%A sample of trajectories is necessary to estimate the expectations $\E_\bQ\left[H_{\lambda_{n+1}}(\Delta \widehat{W}_{n+1}^{\bQ})(\frac{\tilde{S}_{n+1}^j}{\tilde{S}_{n}^j}-1)\right]$. These terms appear in the computation of the controls (see for instance (\ref{eq:expectation})) and are estimated through Monte-Carlo simulations. 
We split our sample into two parts. The first part is used to run the descent gradient algorithm to calibrate and to find an optimal portfolio, while the second part is used to compute performance indicators.
In presence of costs, we assume that the first position is free of charge.

\subsection{Same risk aversion parameter}
\label{sec:result-with-cost}
In this first experiment, we choose the same risk aversion parameter $\gamma=\gamma_u$ for the different approaches. Since the objective functions are not the same between multi-period and uni-period models, the agents have not the same risk aversion. Therefore volatility and rates of return are not comparable. Nevertheless the Sharpe ratio is a relevant indicator for comparing performances of optimal portfolios based on different risk aversions. Indeed, the Sharpe ratios of estimated portfolios are independent from the risk aversion in every models according to Propositions \ref{lemma2}, \ref{prop:multi-sharp-cost}, \ref{prop:uni-sharp} and \ref{prop:uni-sharp-cost}. 
 \\
The implementation parameters are summarized in the following table.
\begin{table}[H]
    \centering
    \caption{Implementation parameters for Multiperiod model}
    \begin{tabular}{|*{6}{c|}}
        \hhline{*{6}{-}}
%        \multicolumn{1}{c|}{} 
          risk aversion & batch size & iteration  & learning rate & cost(\%)  \\ \hline
        % use \gradient{} in the cells with a value between \minval and \maxval!
          0.05 & {100}  & 1000 & 8.5 & 1 \\ \hline
    \end{tabular}
     \label{tab:implementation_parameters_multi_cost}
\end{table}
We compare five models; two sequential uni-period versions and two multi-period versions, where cost are on one hand ignored and on the other, considered. Moreover, we add in the benchmark the famous equal weight portfolio to measure the performance of the other approaches. We refer to our approach as \textit{Multi-period with costs}.  
We refer to Appendix~\ref{sec:benchmark} for further details on the four other benchmark models. 
Formally, our aim is to evaluate the performance of our method compared to the more basic approaches commonly used.
The main results can be found in Table \ref{tab:res_square_form}.
\begin{table}[H]
    \centering
    \caption{Estimated metrics for evaluating models performances}
    \begin{tabular}{|*{6}{c|}}
        \hhline{~*{5}{-}}
        \multicolumn{1}{c|}{} & rate of return($\%$) & vol($\%$) & Min-Var & Sharpe ratio      \\ \hline
        % use \gradient{} in the cells with a value between \minval and \maxval!
        Multi-period ignoring cost & {13.24 } & {12.17 } & {105.83087 } &  {1.07939 }    \\ \hline
 Multi-period with costs & {12.31 } & {11.00 } & {106.26365 } &  {1.11033 }    \\ \hline

 Sequential uni-period ignoring cost& {10.44 } & {10.00 } & {105.43191 } &  {1.03316 }    \\ \hline
 Sequential uni-period with costs & {11.00 } & {10.72 } & {105.24599 } &  {1.01606 }    \\ \hline
equal weight & {5.69 } & {5.70 } & {104.06893 } &  {0.98125 }      \\ \hline
    \end{tabular}
     \label{tab:res_square_form}
\end{table}
We analyse here, the difference in Sharpe ratios. 
By focusing on uni-period models, it is interesting to notice that ignoring costs seems to be better than considering them. The myopic effect, joined with the consideration of costs, may make the optimal strategy rigid and inflexible. The agent's myopic behavior explains why they do not see the benefit of paying costs for short-term positions.
Therefore considering cost almost freezes the strategy and can explain that performances are not as good as if we have ignored them. 
Obviously these remarks, are dependent of the chosen parameter $\nu$, which determines the weight of costs in transactions. 
\\
In contrast, considering costs in the multi-period version is a significant improvement.  According to the choices of parameters, a difference of $0.03$ in Sharpe ratio is not negligible.
A multi-period model targets a final value and must adapt its positions according to the variations of the environment. These changes in positions are typically more significant than in myopic strategies, and thus, the impact of costs becomes more critical. Ignoring costs can have a considerable impact on strategies, leading to performance deterioration.    \\
Undoubtedly, the myopic effect has a negative impact on performances. The difference in Sharpe ratios between uni-period and multi-period models is not negligible. Therefore, we advise to use multi-period models, despite their greater complexity. In that case, costs must not be ignored.

\subsection{Same risk aversion level}
In a second experiment, we want to compare the optimal portfolios submitted to transaction costs in uni-period and multi-period settings with the same level of risk. According to Proposition \ref{prop:benchmark}, we can link the risk aversions $\gamma$ and $\gamma_u$ in both models to ensure the same level of risk.
%Indeed, according to Propositions \ref{prop:sharp-gradient} and \ref{prop:multi-sharp-cost}, specifying a level of risk is equivalent to specify a risk aversion parameter $\gamma$. The matching is made possible thanks to the independence between the risk aversion parameter and the sharp ratio of locally optimal multi-period portfolios submitted to transaction costs (see Proposition \ref{prop:multi-sharp-cost}).
This experiment aims to illustrate the comparison in terms of rates of returns. We can also directly compare portfolio trajectories. 

\subsubsection{Performances}
Our objective is to obtain a locally optimal multi-period portfolio that carries the same level of risk as the optimal sequential uni-period portfolio. We adopt the approach described in Proposition \ref{prop:benchmark}.  We use the $\gamma=0.05$ locally optimal multi-period portfolio estimated in the previous section to build another locally optimal multi-period portfolio with the same volatility as the optimal uni-period portfolio. 
With a uni-period risk aversion $\gamma_u=0.05$, we have obtained a volatility of $10.72 \%$ for the optimal uni-period portfolio. The $\gamma_u$ locally optimal multi-period portfolio estimated in the previous experiment has a Sharpe ratio equal to $1.11033$. According to Proposition \ref{prop:benchmark}, we can build another locally optimal multi-period portfolio with the same Sharpe ratio and a volatility of $10.72 \%$. 
To reach this volatility, 
a multi-period risk aversion parameter $\gamma=0.0518$ is estimated by using~(\ref{eq:sharp}). 
The main results can be found in Table \ref{tab:result-comp1}
\begin{table}[H]
    \centering
    \caption{Results}
    \begin{tabular}{|*{6}{c|}}
        \hhline{~*{5}{-}}
        \multicolumn{1}{c|}{} & risk aversion & rate of return($\%$) & Min-Var & Sharpe ratio      \\ \hline
        % use \gradient{} in the cells with a value between \minval and \maxval!
       
Multi-period with costs & {0.0518 } & {11.89 } & {105.9372 } &  {1.11033 }    \\ \hline
 Sequential uni-period with costs & {0.0500 } & {11.00 } & {105.24599 } &  {1.01606 }    \\ \hline
    \end{tabular}
     \label{tab:result-comp1}
\end{table}
The multi-period model is obviously the most performing. The difference between the rates of return is almost 1\%. This difference is important according to a level of risk of 12.72\%. 
\\
After comparing the performances of the optimal portfolios with the same risk aversion, we illustrate their behaviour by showing trajectories from two different situations, A and B. To ensure consistency, we use the same framework as before to match the risk aversions.
%We choose a risk aversion parameter for the sequential uni-period model, $\gamma_u$. We run this model and obtain, the trajectories, the controls and the variance of the optimal portfolio of this aversion.  We deduce numerically the risk aversion parameter for the multi period model that provides the optimal portfolio with the same quantity of risk (of variance). The final step consists on running the multi period model with this risk aversion to obtain trajectories comparable to those produced by the sequential uni periods model. 
\subsubsection{Behaviour on realisation A}
To illustrate the comparison, we present in Figures \ref{fig:asset-A}, one particular sample path of assets.
\begin{figure}[H]
\begin{center}
\includegraphics[scale=0.75]{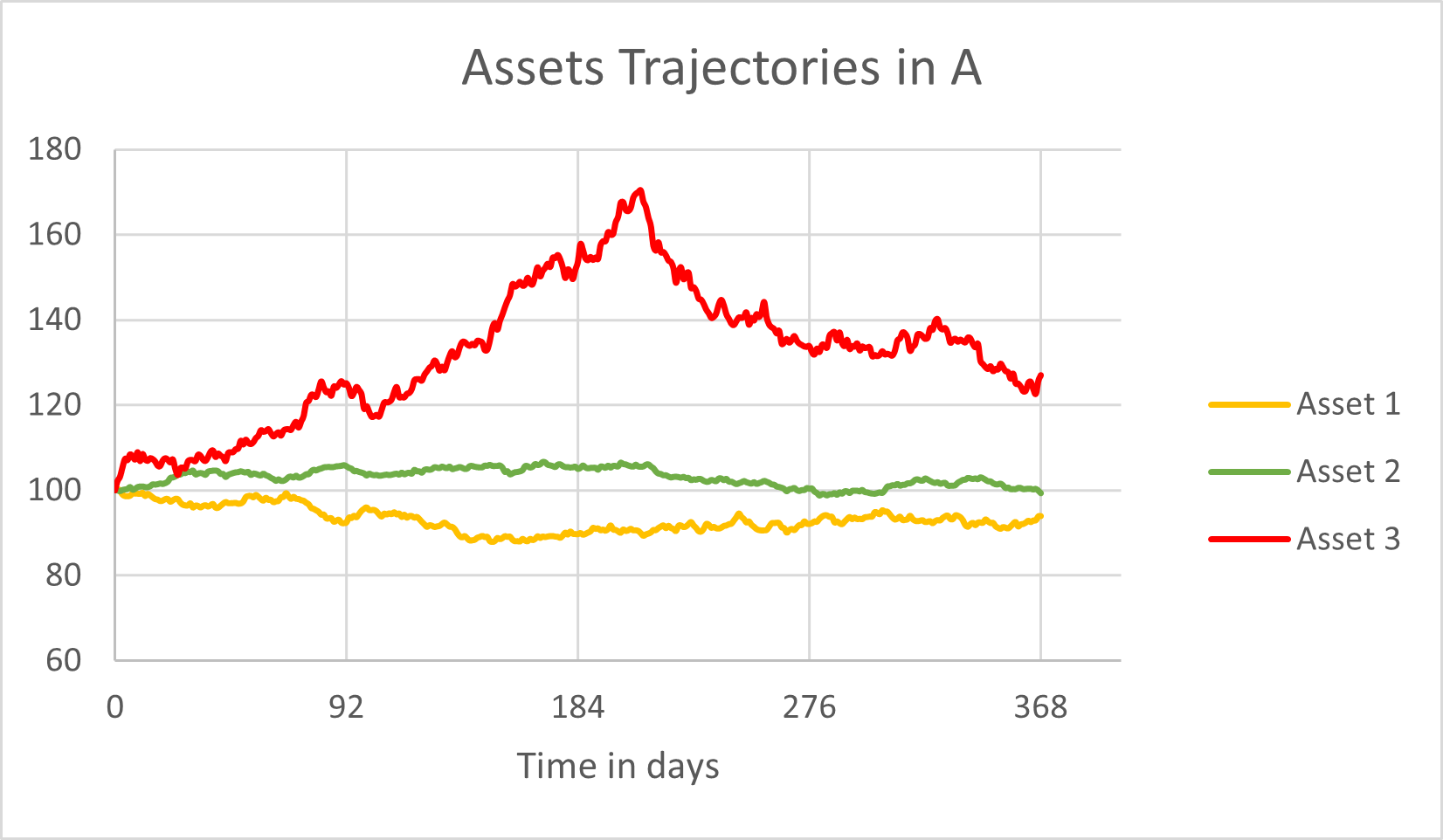}
      \caption{Asset trajectories in A}
      \label{fig:asset-A}
      \end{center}
\end{figure}
It can be observed that the curves for assets 1 and 2 are relatively flat and symmetric. Asset 2 remains above its initial value. On the other hand, asset 1 remains below its initial value. Asset 3 experiences significant growth, reaching a peak of $170$ before collapsing towards the end of the period to finish just above $120$.

We respectively present in Figures \ref{fig:portfolio-A-comp1}, \ref{fig:cost-A-comp1}, \ref{fig:control-A-uni-comp1}, \ref{fig:control-A-multi-comp1}, the portfolios values, the cumulative costs and the controls of the uni-period model considering cost against the multi-period model considering cost, whose performances have been presented in Table \ref{tab:result-comp1}.

\begin{figure}[H]
\begin{minipage}[c]{0.46\linewidth}
      \includegraphics[scale=0.5]{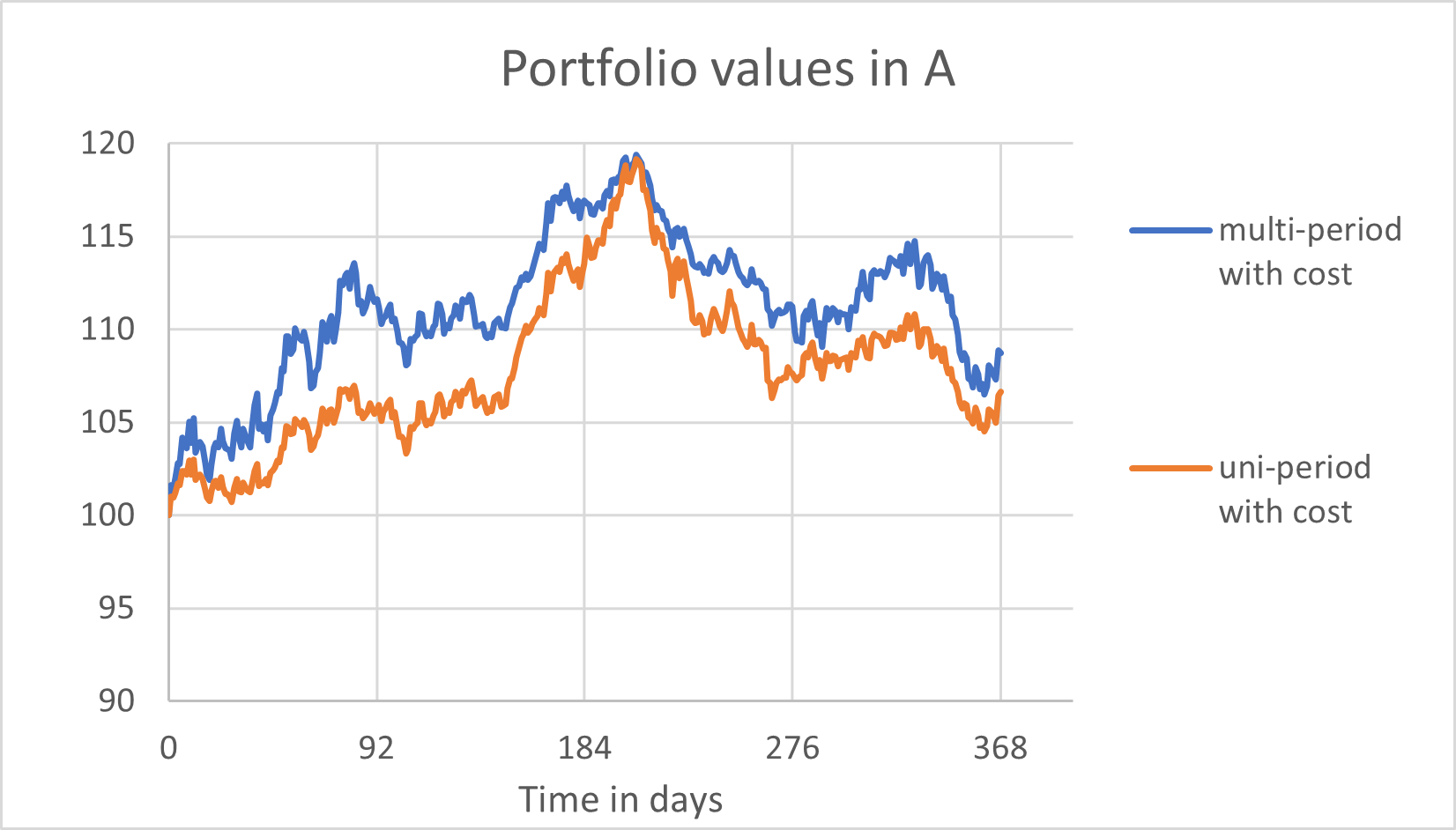} \caption{Portfolio values in A }
      \label{fig:portfolio-A-comp1}
   \end{minipage}\hfill
   \begin{minipage}[c]{0.46\linewidth}
      \includegraphics[scale=0.5]{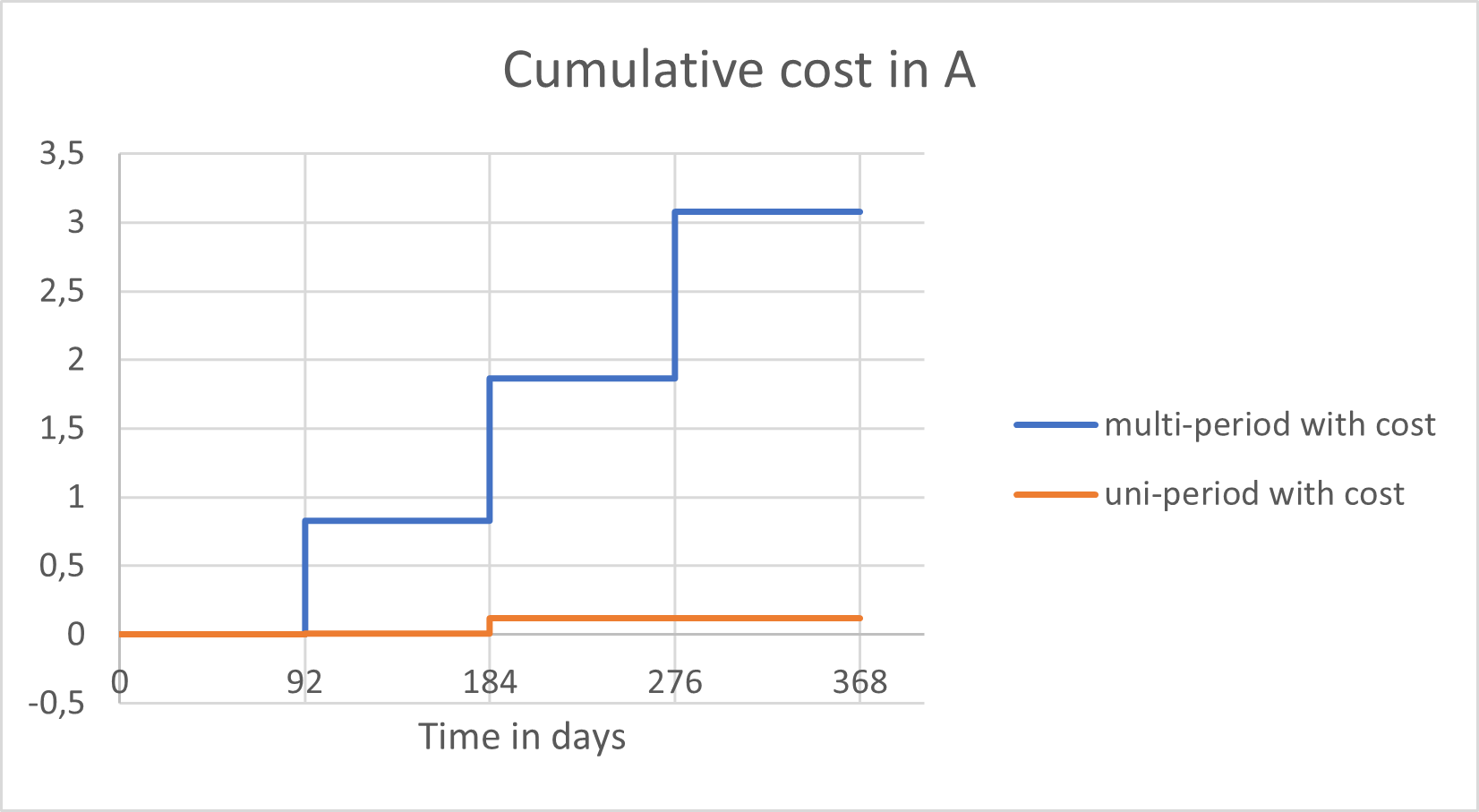}
      \caption{Cumulative cost in A}
      \label{fig:cost-A-comp1}
   \end{minipage} 
 \end{figure}
 
\begin{figure}[H]
   \begin{minipage}[c]{0.46\linewidth}
      \includegraphics[scale=0.5]{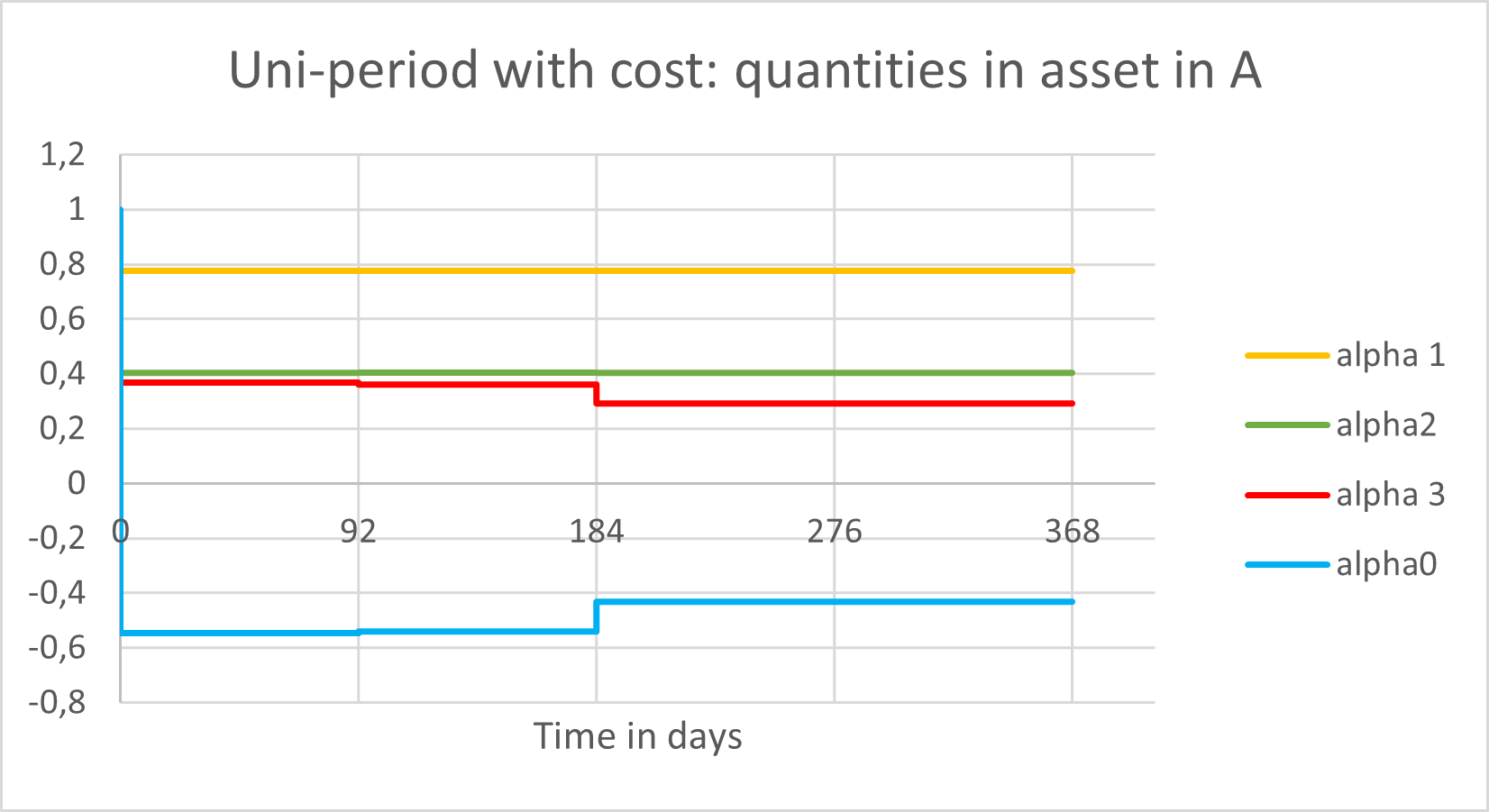}
      \caption{Controls of uni-period model with costs in A}
      \label{fig:control-A-uni-comp1}
   \end{minipage} \hfill
   \begin{minipage}[c]{0.46\linewidth}
      \includegraphics[scale=0.5]{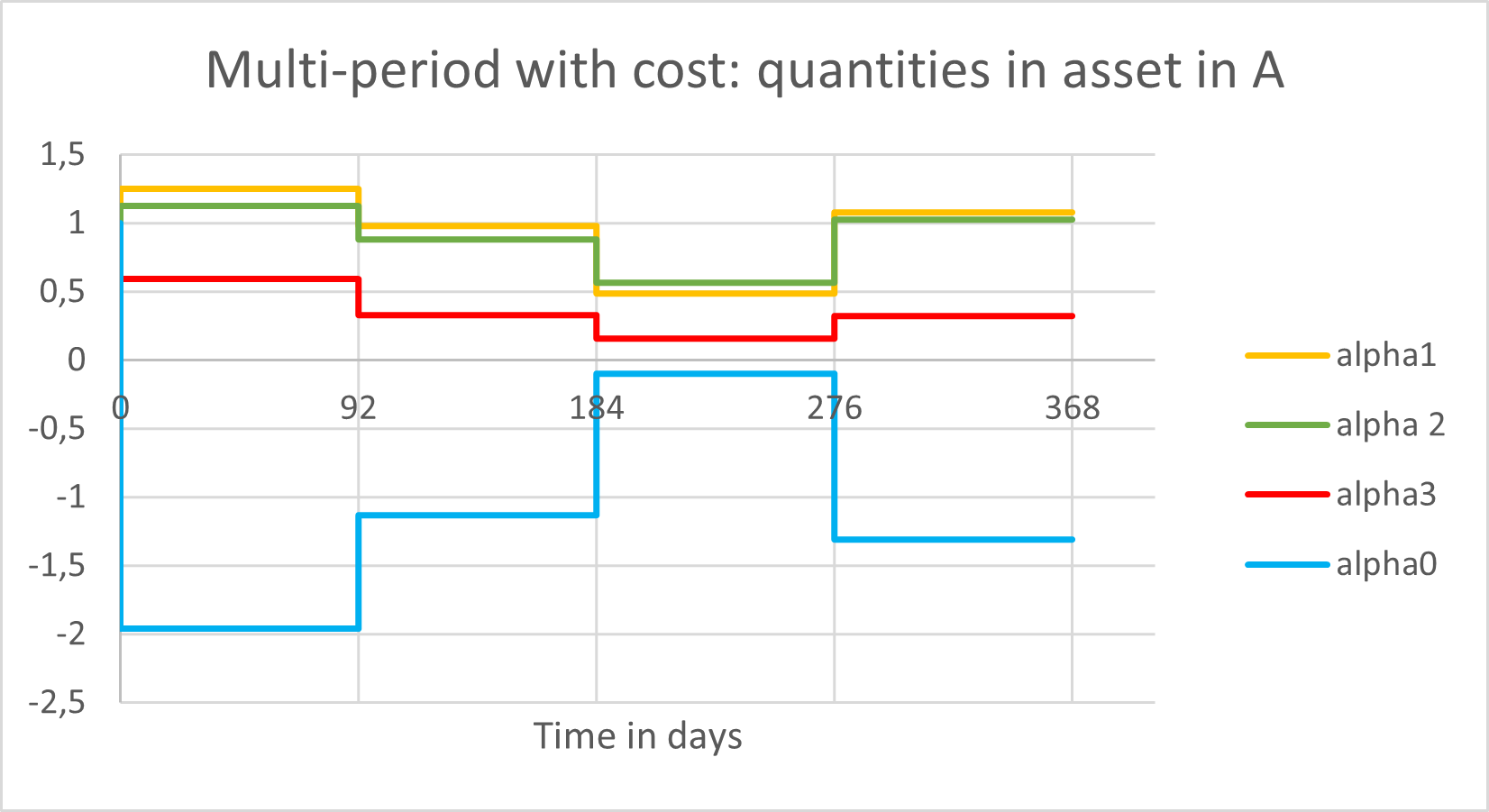} \caption{Controls of multi-period model with costs in A }
      \label{fig:control-A-multi-comp1}
   \end{minipage}
 \end{figure}

By analysing Figure \ref{fig:portfolio-A-comp1}, both portfolio trajectories follow the trend of asset 3. This behavior is not surprising given the flat evolution of assets 1 and 2. The multi-period portfolio outperforms at every moment, for this particular realisation.
\\
Figures \ref{fig:cost-A-comp1} highlights that the uni-period strategy pays few cost compared to the multi-period one. This can be attributed to the myopic vision of the uni-period strategy, which results in a very flat strategy that is not very sensitive to asset and wealth variations. It can not embrace the benefit of sacrificing money in paying costs to anticipate future evolution.   
The high costs required to make a reversal of strategy, further reinforce this inflexibility, which partly explains the low Sharpe ratio estimated in Section \ref{sec:result-with-cost}. The long position of this portfolio explains the high dependence with asset 3 and the decline of its value after the middle of the period.   
\\
The multi-period portfolio follows a completely different policy. While initially adopting a more aggressive long strategy than the uni-period portfolio, the level of risk taken is significantly higher. The portfolio performs well during the growth of asset 3 until day 276, at which point the strategy begins to reverse as assets are gradually sold. Subsequently, in response to the decline of asset 3, the strategy undergoes another shift, with new asset quantities being purchased.

\subsubsection{Behaviour on realisation B}

To illustrate the comparison, we present in Figures \ref{fig:asset-B}, a new realisation of the assets. This scenario looks like the previous one, as Asset 1 and 2 exhibit minimal changes while Asset 3 experiences a large increase without any significant decrease.
\begin{figure}[H]
\begin{center}
\includegraphics[scale=0.75]{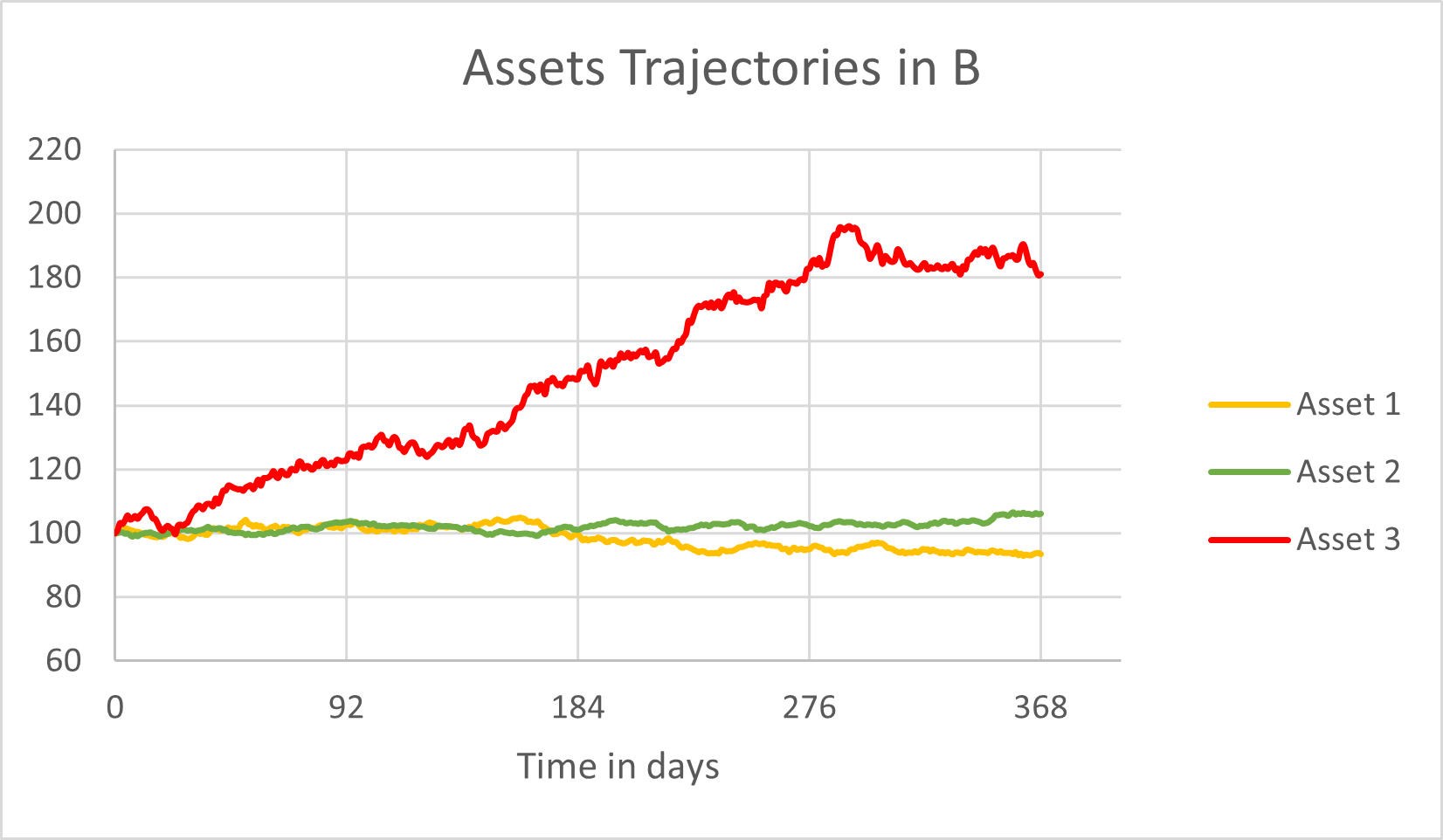}
      \caption{Asset trajectories in B}
      \label{fig:asset-B}
      \end{center}
\end{figure}

We present respectively in Figures \ref{fig:portfolio-B-comp1}, \ref{fig:cost-B-comp1}, \ref{fig:control-B-uni-comp1}, \ref{fig:control-B-multi-comp1}, the portfolios values, the cumulative costs and the controls of the uni-period model against the multi-period model with costs, whose performance results have been presented in Table \ref{tab:result-comp1}.

\begin{figure}[H]
\begin{minipage}[c]{0.46\linewidth}
      \includegraphics[scale=0.5]{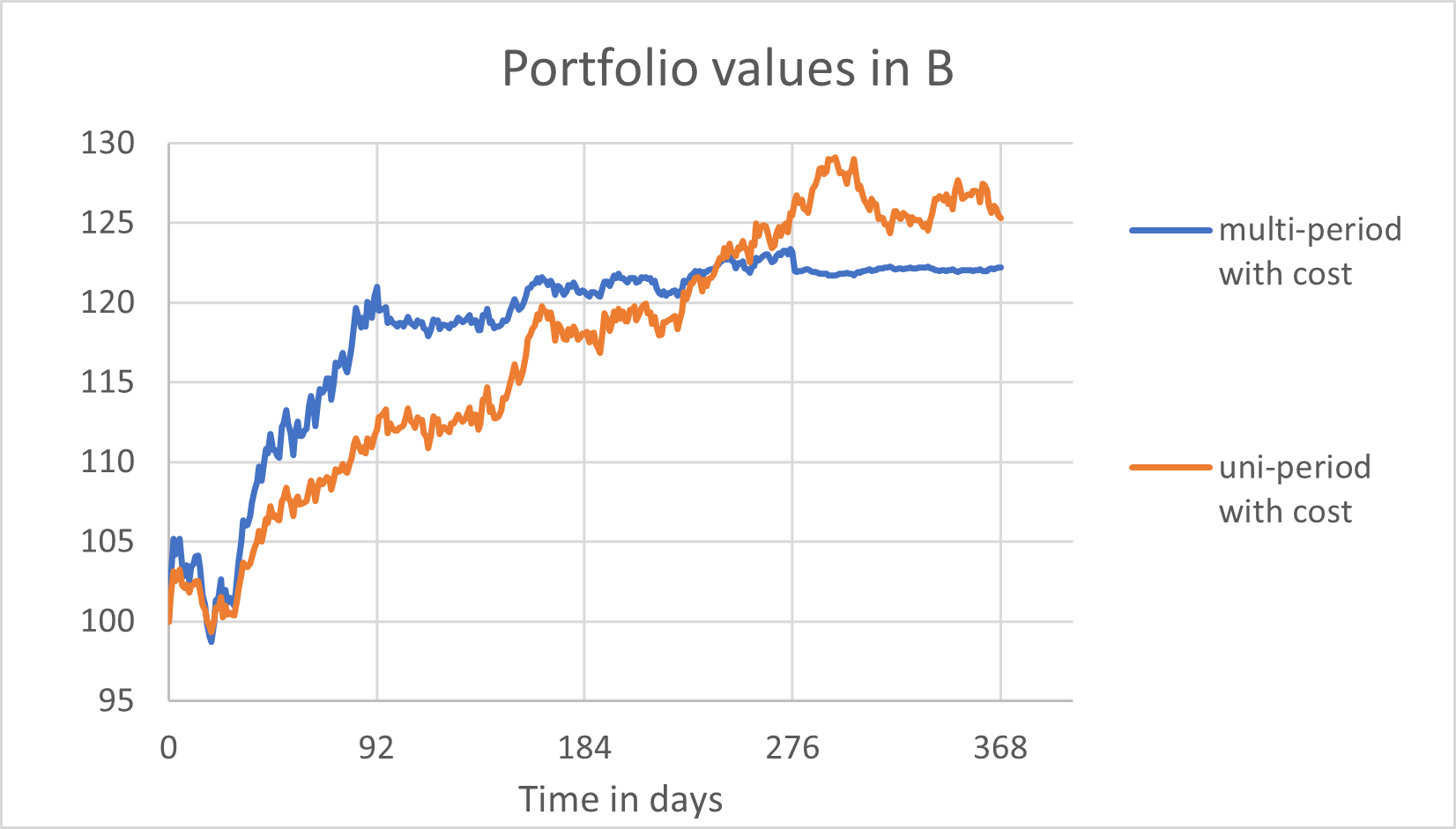} \caption{Portfolio values in B }
      \label{fig:portfolio-B-comp1}
   \end{minipage}\hfill
   \begin{minipage}[c]{0.46\linewidth}
      \includegraphics[scale=0.5]{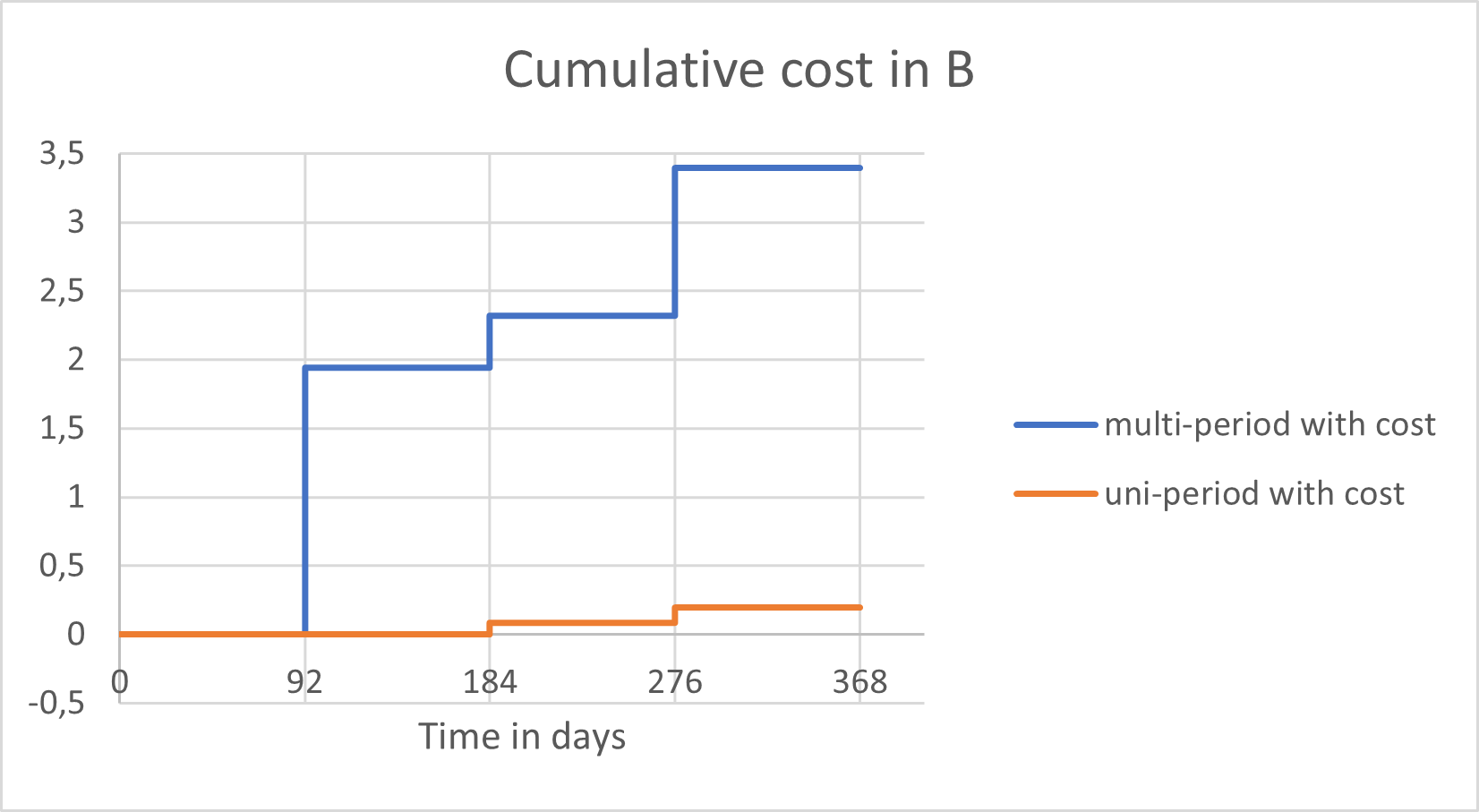}
      \caption{Cumulative cost in B}
      \label{fig:cost-B-comp1}
   \end{minipage} 
 \end{figure}
 
\begin{figure}[H]
   \begin{minipage}[c]{0.46\linewidth}
      \includegraphics[scale=0.5]{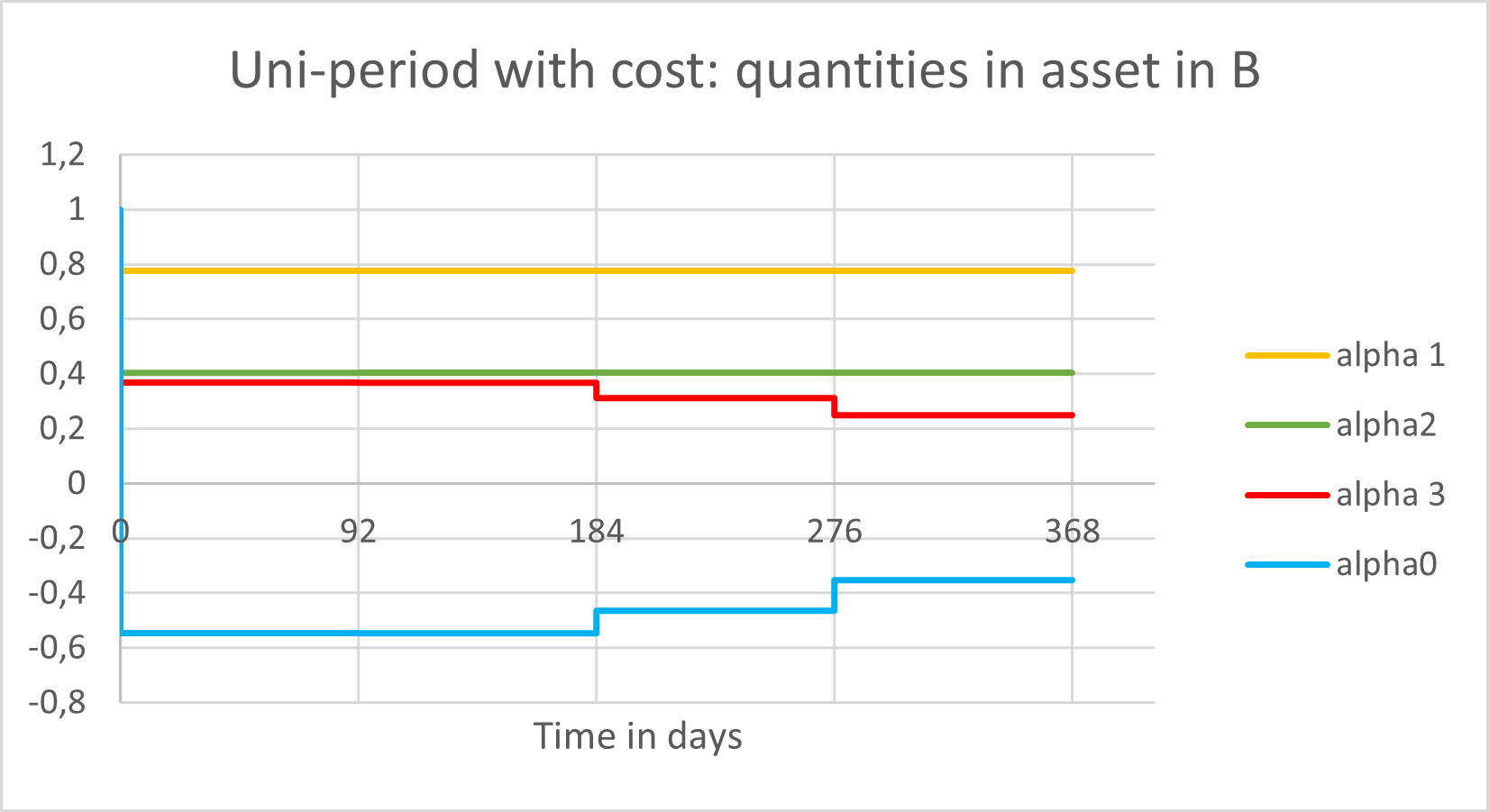}
      \caption{Controls of uni-period model with costs in B}
      \label{fig:control-B-uni-comp1}
   \end{minipage} \hfill
   \begin{minipage}[c]{0.46\linewidth}
      \includegraphics[scale=0.5]{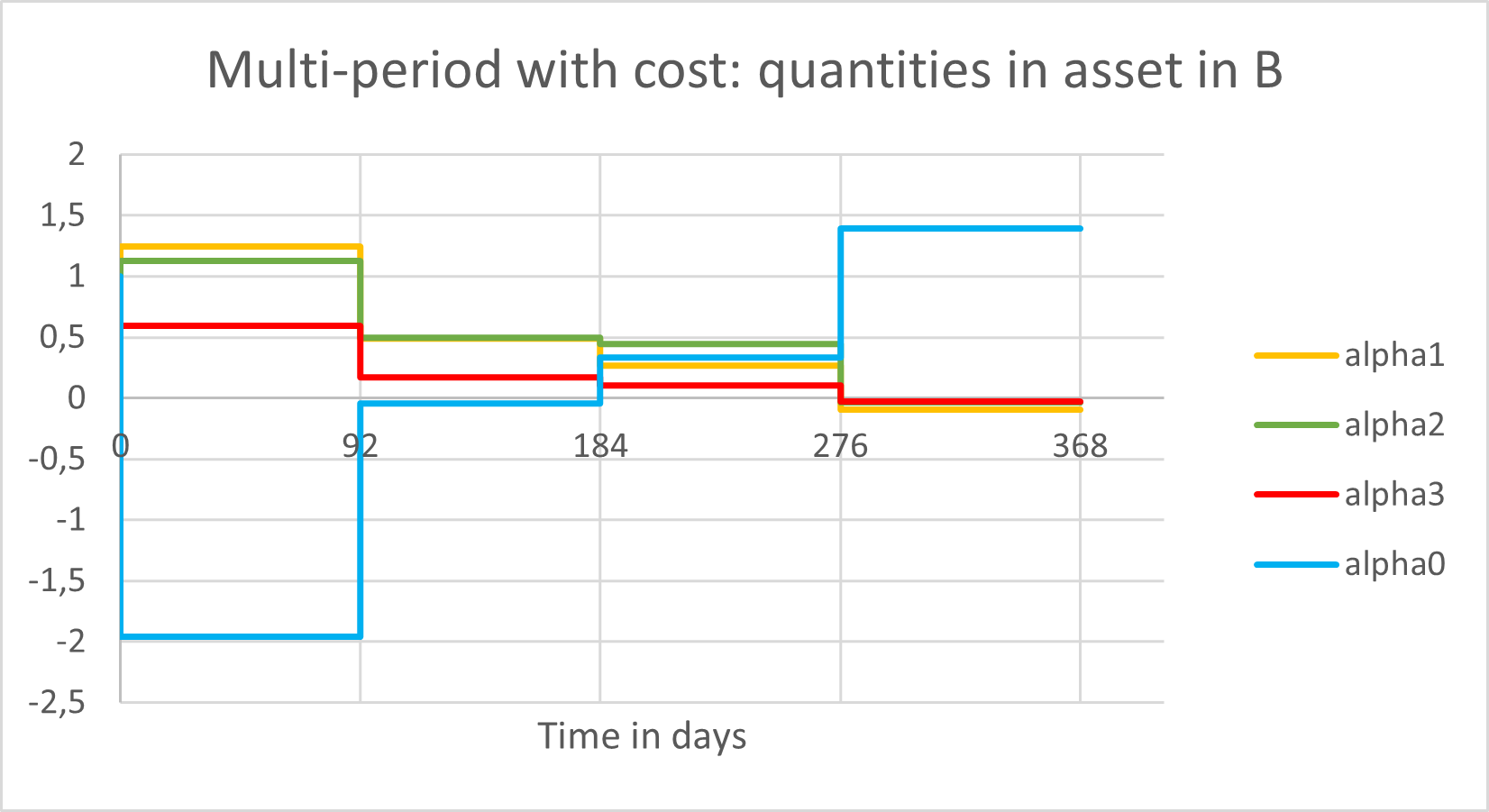} \caption{Controls of multi-period model with costs in B }
      \label{fig:control-B-multi-comp1}
   \end{minipage}
 \end{figure}

Although both portfolios experience growth over the period, the uni-period portfolio with costs outperforms the multi-period portfolio at the end. It is important to note that this outcome is specific to this particular trajectory but cannot be generalised. The uni-period portfolio appears to closely track the trend of asset 3, which continues to perform toward the end of the period. As for the realisation A, the positions of the uni-period strategy maintains predominantly long and flat positions. The multi-period strategy is considerably different and more versatile. Following 276 days of long positions, all assets are sold. This reverse of positions explains the plateau reached. After performing , the portfolio eliminates all risk to insure a positive performance. Even if Asset 3 were to continue its upward trend, this approach remains prudent. The portfolio achieves a higher value than anticipated, justifying its wish to protect its gains.

\section{Conclusion}
In this paper, we have presented an efficient numerical method to solve multi-stage portfolio allocation that involve multiple assets and transaction costs. We applied a stochastic descent gradient algorithm to find the Wiener chaos expansion of an optimal portfolio. The method could be extended to handle realistic constraints as no-shorting and is computationally tractable. 
We explored the link between risk aversion and optimal portfolios subject to transaction costs, with two findings standing out. Firstly, we have proved that the Sharpe ratio of a locally optimal portfolio with transaction costs does not depend on its risk aversion. Secondly, we have established that all optimal portfolios have the same Sharpe ratio. 
We have used this result to compare our approach to a competitive benchmark, based on the sequential uni-period mean-variance strategy.  We have highlighted the efficiency of our approach
and we have showcased our benchmark by analyzing the performance of our models on two selected trajectories. Costs must not be ignored in multi-period setting since reverses of strategy are frequent. On the other hand, considering costs for uni-period models is still a topic of debate. The myopic effect with the consideration of costs may freeze the strategy and negatively impact performance. As expected, the multi-period model is more intricate, but outperforms the uni-period models.

% Future research could explore avenues to further enhance the model, such as the incorporation of fix costs or the inclusion of more sophisticated stochastic dynamics for asset modeling. Additionally, integrating the impact of the economic cycle with a Hidden Markov setting would significantly enhance the model.

%\appendix

%\section*{Acknowledgments}

\appendix
\section{Wiener chaos expansion properties}
\label{sec:chaos}
In this Appendix, we present 
 several fundamental properties on Wiener chaos expansion and Hermite polynomials, use-full in our study. We refer to \cite{akahori2017discrete}, \cite{lelong2018dual}
\cite{nualart2006malliavin} for theoretical details.
\\
\\
Let $H_i$ be the $i$-th Hermite polynomial defined by 
\begin{equation*}
H_0(x)=1: \ H_i(x)=(-1)^i e^{\frac{x^2}{2}}\frac{d^i}{dx^i}(e^{\frac{-x^2}{2}}), \ \text{for} \ i\geq 1.
\end{equation*}
For $\lambda \in \mathbb{N}^n, \ x \ \in \mathbb{R}^n$, we define
$$
H_{\lambda}^{\bigotimes}(x)=\prod_{i=1}^n H_{\lambda_i}(x_i).
$$
We have the following properties
\begin{itemize}
\item
For $i \geq 1$, $H_i^{'}=iH_{i-1}$ with $H_{-1}=0$.
\item
For $x, y \in \mathbb{R}$,
\begin{equation*}
H_i(x+y)=\sum_{r=0}^i \binom{i}{r} x^r H_{i-r}(y).
\end{equation*}
\item
Let $X$, $Y$ be two random variables with joint Gaussian distribution such that $\E(X) = \E(Y ) = 0$ and $\E(X^2) = \E(Y^2) = 1$ . Then $\forall \ i, j\geq 0$, we have
\begin{equation*}
\E\left[H_i(X)H_{j}(Y)\right]=1_{(i=j)}i!(\E\left[XY\right])^i.
\end{equation*}
\item
Let $\lambda$ $\in \mathbb{R}^n \neq 0_{\mathbb{R}^n}$ and $X$ be a n-multivariate i.i.d standard normal vector. Then we have
\begin{equation*}
\E\left[H_\lambda^{\bigotimes}(X)\right]=0.
\end{equation*}
%\item 
%Let $\lambda$, $\lambda^{'}$ $\in \mathbb{R}^n$ and $X, Y$, be two independent n-multivariate i.i.d standard normal vectors, then
%\begin{equation*}
%\E\left[H_\lambda^{\bigotimes}(X)H_{\lambda^{'}}^{\bigotimes}(Y)\right]=0.
%\end{equation*}
\item 
Let $\lambda$, $\lambda^{'}$ $\in \mathbb{R}^n$ and $X$ a n-multivariate i.i.d standard normal vector, then
\begin{equation*}
\E\left[H_\lambda^{\bigotimes}(X)H_{\lambda^{'}}^{\bigotimes}(X)\right]=1_{(\lambda=\lambda^{'})}\prod_{i=1}^n \lambda_{i} ! \ .
\end{equation*}
\end{itemize}
The following theorem represents the basics of our approach. It provides a discrete-time general representation with a convergent series expansion. 
This expansion is introduced and proved in \cite[Theorem 2.1]{akahori2017discrete}.
\begin{theorem}
\label{chaos}
Let Y  $\in \ L^2(\Omega,\cF_N,\mathbb{Q})$, then the following Wiener chaos expansion holds
\beaa
\begin{aligned}
Y & = \E_{\mathbb{Q}}[Y]+\sum_{\substack{\lambda \in (\mathbb{N}^{{N}})^d}} \beta_{\lambda} \prod_{j=1}^d\prod_{i= 1}^N H_{\lambda_{i}^j}\left(\frac{W^{\mathbb{Q}j}_{t_{i}}-W^{\mathbb{Q}j}_{t_{i-1}}}{\sqrt{t_{i}-t_{i-1}}}\right)\\
& = \E_{\mathbb{Q}}[Y]+\sum_{\substack{\lambda \in (\mathbb{N}^{{N}})^d}} \beta_{\lambda} H_{\lambda}^{\bigotimes}(\Delta \widehat{W}^{\mathbb{Q}}).
\end{aligned}
\eeaa

\end{theorem}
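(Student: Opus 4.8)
The plan is to reduce the claim to the classical completeness of Hermite polynomials in the $L^2$ space of a Gaussian measure, and then transfer it to $L^2(\Omega,\cF_N,\bQ)$ through the coordinates supplied by the normalized Brownian increments. Since $W^\bQ$ has independent increments and $\cF_N=\sigma(W^\bQ_{t_k},\,k\le N)$, we have $\cF_N=\sigma(\Delta\widehat W^\bQ_k,\,1\le k\le N)$, and the family $G:=(\Delta\widehat W^{\bQ,j}_k)_{1\le k\le N,\,1\le j\le d}$ is a vector of $Nd$ i.i.d.\ standard normal variables. Hence any $Y\in L^2(\Omega,\cF_N,\bQ)$ can be written $Y=f(G)$ for a Borel function $f$ with $\int_{\bR^{Nd}}f^2\,d\gamma_{Nd}<\infty$, where $\gamma_{Nd}$ is the standard Gaussian measure on $\bR^{Nd}$. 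It thus suffices to prove that the tensorized Hermite polynomials $\{H^{\bigotimes}_\lambda:\lambda\in\bN^{Nd}\}$ form an orthogonal basis of $L^2(\bR^{Nd},\gamma_{Nd})$: expanding $f$ along this basis and composing with $G$ produces the series, the $\lambda=0$ term being $\E_\bQ[Y]$ since $H_0\equiv1$, and the coefficients being $\beta_\lambda=\E_\bQ[Y\,H^{\bigotimes}_\lambda(G)]/\prod_{i,j}\lambda^j_i!$ by the Hermite orthogonality relations recalled above. The two displayed forms in the statement then coincide by the very definition of $H^{\bigotimes}_\lambda$.

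I would treat the one-dimensional case first. Orthogonality of $\{H_i\}_{i\ge0}$ in $L^2(\bR,\gamma_1)$ and the norms $\norm{H_i}^2=i!$ are exactly the orthogonality property above with $n=1$, so only completeness needs work. It is enough to show that $g\in L^2(\bR,\gamma_1)$ orthogonal to every monomial vanishes a.e. By Cauchy--Schwarz and the super-exponential decay of the Gaussian density, $\E[\,|g(X)|e^{a|X|}\,]\le\norm{g}_{L^2(\gamma_1)}\norm{e^{a|X|}}_{L^2(\gamma_1)}<\infty$ for all $a>0$, so $\varphi(z):=\E[g(X)e^{zX}]$ is well defined and entire on $\bC$, with $\varphi^{(n)}(0)=\E[g(X)X^n]=0$ for every $n$. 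Hence $\varphi\equiv0$; in particular the Fourier transform of the finite signed measure $g\,d\gamma_1$ vanishes, so $g=0$ a.e. Therefore the polynomials, equivalently the $H_i$, are dense.

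Finally, $L^2(\bR^{Nd},\gamma_{Nd})$ is the Hilbert tensor product of $Nd$ copies of $L^2(\bR,\gamma_1)$, and tensor products of the vectors of an orthogonal basis form an orthogonal basis of the tensor product; so $\{H^{\bigotimes}_\lambda\}$ is a complete orthogonal system with $\norm{H^{\bigotimes}_\lambda}^2=\prod_{i,j}\lambda^j_i!$, which completes the reduction and hence proves the theorem, with convergence of the series in $L^2(\bQ)$. The main obstacle is the completeness in the one-dimensional case, that is, the density of polynomials in $L^2(\bR,\gamma_1)$: orthogonality is elementary, but density genuinely requires an argument, and the analytic-continuation route above is the cleanest, its only delicate point being the finiteness of $\E[|g(X)|e^{a|X|}]$, where the lightness of Gaussian tails is used. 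Everything else---the tensorization and the change of variables $Y=f(G)$---is routine.
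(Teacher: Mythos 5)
Your proof is correct. Note, however, that the paper does not actually prove this theorem: it states it and defers entirely to the cited reference (Theorem 2.1 of Akahori et al.), so there is no internal argument to compare against. Your reduction --- identify $\cF_N$ with the $\sigma$-field generated by the $Nd$ i.i.d.\ standard normals $\Delta\widehat W^{\bQ,j}_k$, write $Y=f(G)$ with $f\in L^2(\gamma_{Nd})$ by Doob--Dynkin, and invoke completeness of the tensorized Hermite polynomials in $L^2(\bR^{Nd},\gamma_{Nd})$ --- is the standard route and is essentially the argument underlying the cited result; what it buys is a self-contained proof where the paper has only a citation. You correctly isolate the one genuinely nontrivial step, namely density of polynomials in $L^2(\bR,\gamma_1)$, and your analytic-continuation argument for it is sound: the bound $\E[|g(X)|e^{a|X|}]<\infty$ justifies that $\varphi(z)=\E[g(X)e^{zX}]$ is entire with $\varphi^{(n)}(0)=\E[g(X)X^n]=0$, hence $\varphi\equiv 0$, and evaluating on the imaginary axis kills the finite signed measure $g\,d\gamma_1$ by Fourier uniqueness. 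Two cosmetic points only: the statement's sum over all $\lambda\in(\bN^N)^d$ should be read as excluding $\lambda=0$ (otherwise the constant term is counted twice), which your phrasing implicitly handles; and it is worth saying one line about why products of one-dimensional orthogonal bases are complete for the product measure (Fubini on the orthogonal complement, or the Hilbert tensor-product identification you invoke). Neither affects correctness.
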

We call $\mathscr{C}_K(Y)$ the truncated expansion of order $K$
\[
\mathscr{C}_K(Y)=\E_{\mathbb{Q}}[Y]+\sum_{\substack{\lambda \in (\mathbb{N}^{N})^d \\ |\lambda|_1\leq K}} \beta_{\lambda} H_{\lambda}^{\bigotimes}(\Delta \widehat{W}^{\mathbb{Q}}).
\]
We have $\lim_{K \to +\infty} \E_{\mathbb{Q}}\left[|Y-\mathscr{C}_K(Y)|^2\right]=0$.
%We  also call $\mathscr{C}_K(Y)$ the random variable of the form
%\beaa
%\mathscr{C}_K^\mathbb{Q}(Y)=\E_{\mathbb{Q}}[Y]+\sum_{\substack{\lambda \in (\mathbb{N}^{N})^d \\ |\lambda|_1 \leq K}} \eta_{\lambda}  H_{\lambda}^{\bigotimes}(\Delta \widehat{W}^{\mathbb{Q}}),
%\eeaa
%which has the same law than $\mathscr{C}_K(Y)$ under $\mathbb{Q}$. 
% By abuse of language, we call this variable, the Wiener chaos expansion of $Y$ under $\mathbb{Q}$.
Let us define $m\in\mathbb{N}$ be the number of coefficients $\lambda$ appearing 
 in the chaos expansion. 
We have $m=\#\{\lambda \in (\mathbb{N}^{N})^d \,:\, |\lambda|_1\leq K\}=\binom{Nd+K}{Nd}$.
We slightly abuse the above definition and also write
$$
\mathscr{C}_K(\beta)=\E_{\mathbb{Q}}[Y]+\sum_{\substack{\lambda \in (\mathbb{N}^{N})^d \\ |\lambda|_1\leq K}} \beta_{\lambda}  H_{\lambda}^{\bigotimes}(\Delta \widehat{W}^{\mathbb{Q}}).
$$
\begin{proposition}
\label{prop:tronc}
Let $Y\in \ L^2(\Omega,\cF_N,\mathbb{Q})$, then 
\be
\label{eq:tronc}
\E_{\bQ}\left[\mathscr{C}_K(Y)|\cF_{n}\right]= \mathscr{C}_K(\E_{\bQ}\left[Y|\cF_{n}\right]) = \mathscr{C}_{K,n}(Y),
\ee
where
$$
\mathscr{C}_{K,n}(Y)=\E_{\mathbb{Q}}[Y]+\sum_{\substack{\lambda \in (\mathbb{N}^{N})^d \\ |\lambda|_1\leq K, \\  \forall j\leq d, \forall l \geq n+1: \lambda^j_{l}=0}} \beta_{\lambda} H_{\lambda}^{\bigotimes}(\Delta \widehat{W}^{\mathbb{Q}})=\E_{\mathbb{Q}}[Y]+\sum_{\substack{\lambda \in (\mathbb{N}^{n})^d \\ |\lambda|_1\leq K}} \beta_{\lambda} H_{\lambda}^{\bigotimes}(\Delta \widehat{W}^{\mathbb{Q}}).
$$
The Wiener chaos expansion of the conditional expectation of the variable Y is obtained by truncating the terms which are not $\cF_n$-measurable. 
\end{proposition}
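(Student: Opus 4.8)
The plan is to exploit the orthogonality of the tensorized Hermite polynomials established in properties~4 and~5 of the appendix, together with the fact that a Hermite polynomial of a Brownian increment over $[t_{l-1},t_l]$ is $\cF_n$-measurable if and only if $l\le n$. First I would start from the $L^2(\bQ)$-convergent expansion $Y=\E_\bQ[Y]+\sum_{\lambda}\beta_\lambda H_\lambda^{\bigotimes}(\Delta\widehat W^\bQ)$ given by Theorem~\ref{chaos}. Since conditional expectation is an $L^2(\bQ)$-contraction, it commutes with the convergent series, so $\E_\bQ[Y\mid\cF_n]=\E_\bQ[Y]+\sum_\lambda\beta_\lambda\,\E_\bQ\bigl[H_\lambda^{\bigotimes}(\Delta\widehat W^\bQ)\mid\cF_n\bigr]$, and the whole computation reduces to identifying each conditional expectation $\E_\bQ[H_\lambda^{\bigotimes}(\Delta\widehat W^\bQ)\mid\cF_n]$.

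Next I would split the product $H_\lambda^{\bigotimes}(\Delta\widehat W^\bQ)=\prod_{j=1}^d\prod_{i=1}^N H_{\lambda_i^j}(\Delta\widehat W^{\bQ,j}_i)$ into the factors with $i\le n$, which are $\cF_n$-measurable, and those with $i\ge n+1$, which involve increments independent of $\cF_n$ under $\bQ$. Pulling the measurable part out and using independence gives
\[
\E_\bQ\bigl[H_\lambda^{\bigotimes}(\Delta\widehat W^\bQ)\mid\cF_n\bigr]
=\Bigl(\prod_{j=1}^d\prod_{i=1}^n H_{\lambda_i^j}(\Delta\widehat W^{\bQ,j}_i)\Bigr)\prod_{j=1}^d\prod_{i=n+1}^N \E_\bQ\bigl[H_{\lambda_i^j}(\Delta\widehat W^{\bQ,j}_i)\bigr].
\]
By the last property of the appendix list, $\E_\bQ[H_p(Z)]=0$ for any standard normal $Z$ whenever $p\ge1$, while $H_0\equiv1$. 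Hence the product over $i\ge n+1$ vanishes unless $\lambda_i^j=0$ for all $j\le d$ and all $i\ge n+1$, in which case it equals $1$. Therefore only the multi-indices $\lambda$ supported on the first $n$ time steps survive, and for those the conditional expectation is exactly $H_\lambda^{\bigotimes}(\Delta\widehat W^\bQ)$ itself. Substituting back yields
\[
\E_\bQ[Y\mid\cF_n]=\E_\bQ[Y]+\sum_{\substack{\lambda\in(\mathbb N^N)^d,\ |\lambda|_1\le K,\\ \lambda_l^j=0\ \forall j\le d,\ \forall l\ge n+1}}\beta_\lambda H_\lambda^{\bigotimes}(\Delta\widehat W^\bQ)\qu(\text{as an }L^2\text{ series}),
\]
which is precisely $C_{K,n}^\bQ(Y)$; identifying this with $C_K^\bQ(\E_\bQ[Y\mid\cF_n])$ then amounts to noting that the displayed sum already has only terms of order $\le K$, so truncating at $K$ changes nothing, and the surviving $\beta_\lambda$ are by uniqueness of the chaos decomposition the chaos coefficients of $\E_\bQ[Y\mid\cF_n]$.

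I do not expect a genuine obstacle here; the one point requiring a little care is the justification that conditional expectation may be exchanged with the infinite chaos series — this follows from $L^2(\bQ)$-continuity of $\E_\bQ[\cdot\mid\cF_n]$ applied to the partial sums $C_K^\bQ(Y)\to Y$ — and the bookkeeping that the $\cF_n$-measurable factors are exactly those indexed by $i\le n$, which is immediate from $\cF_n=\sigma(W^\bQ_{t_k},k\le n)$ and the definition of the normalized increments. The equivalence of the two index sets written in the statement (multi-indices in $(\mathbb N^n)^d$ versus multi-indices in $(\mathbb N^N)^d$ with trailing zeros) is purely notational.
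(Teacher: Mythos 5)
Your argument is correct, and it is the standard one: the paper itself states Proposition~\ref{prop:tronc} without proof (it is quoted as a known property, with references to the Wiener-chaos literature), so there is no in-paper proof to compare against. The three ingredients you use are exactly the right ones: $\cF_n$-measurability of the Hermite factors indexed by $i\le n$, independence from $\cF_n$ of the normalized increments with $i\ge n+1$, and the fact that $\E_\bQ[H_p(Z)]=0$ for a standard Gaussian $Z$ and $p\ge 1$ while $H_0\equiv 1$; together these show that conditioning kills precisely the multi-indices with a nonzero entry beyond time $n$ and leaves the others untouched. The identification of the surviving coefficients with the chaos coefficients of $\E_\bQ[Y\mid\cF_n]$ by uniqueness of the expansion then gives the equality with $C_K^\bQ(\E_\bQ[Y\mid\cF_n])$, as you say.

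One small bookkeeping slip to fix: in your final display the left-hand side reads $\E_\bQ[Y\mid\cF_n]$ while the index set carries the constraint $\abs{\lambda}_1\le K$; with that constraint the right-hand side is $C_{K,n}^\bQ(Y)=\E_\bQ[C_K^\bQ(Y)\mid\cF_n]$, not the full conditional expectation. You in fact need both versions of the computation --- applied to the finite sum $C_K^\bQ(Y)$ (where no interchange of limit and conditional expectation is even needed) to get the first equality of the statement, and applied to the full series (where your $L^2$-contraction argument justifies the interchange) to identify the chaos coefficients of $\E_\bQ[Y\mid\cF_n]$ and hence get the second equality. Your text makes clear you understand this, so it is a labelling issue rather than a gap.
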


\section{Differentiability}
\label{sec:differentiability}
In this appendix, we study the differentiability of $(\beta,\theta)\rightarrow\E_{\Prob}[F^{\gamma}(\beta,\theta)]=\E_{\Prob}\left[\mathcal{R}(\beta)S_N^0-\gamma\left(\left(\mathcal{R}(\beta)-\theta\right)S_N^0\right)^2\right]$. 
In our resolution framework, we assume that $\sigma$ is deterministic.

%\begin{remark}
%\label{rem:control_chaos}

%\end{remark}

\begin{proposition}
\label{prop_R_differentiable}
$\mathcal{R}$ is almost surely differentiable on $\mathcal{Z}$.
\end{proposition}
\begin{proof}
Let us prove that the $\mathbb{Q}(\alpha_{n+1}^i=\alpha_{n}^i)=0$, $\forall \ 1\leq i\leq d,\ 0\leq n\leq N-1$. We recall that 
\beaa
\alpha_{n+1}(\beta) = \E_\bQ\left[\Delta \tilde{S}_{n+1} \Delta (\tilde{S}_{n+1})^T|\cF_{n}\right]^{-1}\E_\bQ\left[\mathscr{C}_K(\Delta {Z}_{n+1})\Delta \tilde{S}_{n+1}|\cF_{n}\right].
\eeaa   
Let $\Diag$ be the operator which associates to a vector $(d^1,\dots,d^k)$ the matrix $\begin{bmatrix}d_{1} & & \\ & \ddots & \\ & & d_{k}\end{bmatrix}$. 
\\
We define 
$$
Y_{n}=\E_\bQ\left[ \left(\frac{\tilde{S}_{n+1}}{\tilde{S}_n}-1\right)\left(\frac{\tilde{S}_{n+1}}{\tilde{S}_n}-1\right) ^T|\cF_{n}\right]=\left[e^{\sigma_n^k\cdot\sigma_n^j(t_{n+1}-t_n)}-1\right]_{k,j}.
$$
Note that $Y_n$ is deterministic and invertible.
We slightly abuse of the notation $\mathscr{C}_K$ to denote the vector $\E_{\bQ}\left[\mathscr{C}_K(\Delta {Z}_{n+1})\left( \frac{S_{n+1}}{S_n}-1\right)|\cF_{n}\right]=\left(E_{\bQ}\left[\mathscr{C}_K(\Delta {Z}_{n+1})\left( \frac{\tilde{S}_{n+1}^j}{\tilde{S}_{n}^j}-1\right)|\cF_{n}\right]\right)_{j\in\{1,\ldots,d\}}$.
We have
$$ \forall \beta\in\mathcal{Z}, \ 
\alpha_{n+1}(\beta)=\Diag(\tilde{S}_n)^{-1}Y_n^{-1}\E_{\bQ}\left[\mathscr{C}_K(\Delta {Z}_{n+1})\left( \frac{\tilde{S}_{n+1}}{\tilde{S}_{n}}-1\right)|\cF_{n}\right].
$$
The term $\E_\bQ\left[\mathscr{C}_K(\Delta {Z}_{n+1})\left( \frac{\tilde{S}_{n+1}^j}{\tilde{S}_{n}^j}-1\right)|\cF_{n}\right]$ can be expressed as a linear combination of the chaos coefficients of $Z_N$ such that
\be
\label{eq:expectation}
\E_\bQ\left[\mathscr{C}_K(\Delta {Z}_{n+1})\left( \frac{\tilde{S}_{n+1}^j}{\tilde{S}_{n}^j}-1\right)|\cF_{n}\right]=\sum_{\substack{\lambda \in (\mathbb{N}^{n+1})^d \\ |\lambda|_1\leq K, \\  \lambda_{n+1} \neq 0}} \beta_{\lambda} H_{\lambda_{1:n}}^{\bigotimes}(\Delta \widehat{W}_{1:n}^{\mathbb{Q}})\E_\bQ\left[H_{\lambda_{n+1}}(\Delta \widehat{W}_{n+1}^{\bQ})(\frac{\tilde{S}_{n+1}^j}{\tilde{S}_{n}^j}-1)\right].
\ee
For $i\in\{1,\ldots,d\}$, we obtain
$$
\bQ(\alpha_{n+1}^i(\beta)=0)\iff\bQ\left(\left[Y_n^{-1}\E_{\bQ}\left[\mathscr{C}_K(\Delta {Z}_{n+1})\left( \frac{\tilde{S}_{n+1}}{\tilde{S}_{n}}-1\right)|\cF_{n}\right]\right]_i=0\right).
$$
Note that $ \left[Y_n^{-1}\E_{\bQ}\left[\mathscr{C}_K(\Delta {Z}_{n+1})\left( \frac{\tilde{S}_{n+1}}{\tilde{S}_{n}}-1\right)|\cF_{n}\right]\right]_i$ 
can be written as
\be
\label{eq:pstar}
\sum_{j=1}^d (Y_n^{-1})_{ij}\sum_{k^{'}\leq K}\sum_{\substack{\lambda_{1:n} \in (\mathbb{N}^{n})^d \\ |\lambda_{1:n}|_1= k^{'}}}\left(\sum_{\substack{\lambda_{n+1} \in \mathbb{N}^d \\ 1\leq |\lambda_{n+1}|_1\leq K-k^{'}}}\beta_{\lambda}\E_\bQ\left[H_{\lambda_{n+1}}(\Delta \widehat{W}_{n+1}^{\bQ})(\frac{\tilde{S}_{n+1}^j}{\tilde{S}_{n}^j}-1)\right]\right)H_{\lambda_{1:n}}^{\bigotimes}(\Delta \widehat{W}_{1:n}^{\mathbb{Q}}).
\ee
With the definition of $\mathcal{Z}$, we deduce that $\forall \beta\in\mathcal{Z},\ \bQ(\alpha_{n+1}^i(\beta)=0)=0$.
\\
Let us introduce the following two spaces for $n\leq 1$,
$$
\mathcal{P}_n=\left\{\sum_{k=1}^K b_k H_k(\Delta W_n), \ (b_k)_k \ \cF_{n-1}-\text{measurable}, \ K\in\mathbb{N}^{*}, \ \exists k\in\{1,..K\}, \ \mathbb{Q}(b_k=0)=0\right\},
$$
$$
\mathcal{E}_n=\left\{\sum_{l=0}^L a_l e^{c_l\Delta W_n}, \ (a_l)_l, \ (c_l)_l  \ \cF_{n-1}-\text{measurable}, \ L\in\mathbb{N}^{*}, \ \exists l\in\{1,..L\}, \ \mathbb{Q}(a_l=0)=\mathbb{Q}(c_l=0)=0\right\}.
$$
We can deduce from (\ref{eq:pstar}) that $\exists \ (p_n,e_n)\in\mathcal{P}_n\times\mathcal{E}_n$ such that $\alpha_{n+1}^i(\beta)=p_{n}e_n$. 
Finally, we conclude that $\forall \beta\in \mathcal{Z}$, $\mathbb{Q}\left(\alpha_{n+1}^i(\beta)=\alpha_{n}^i(\beta)\right)=\E_{\mathbb{Q}}\left[\E_{\mathbb{Q}}\left[\mathds{1}_{[\alpha_{n+1}^i(\beta)=\alpha_{n}^i(\beta)]}|\cF_{n-1}\right]\right]=0$. 
\end{proof}

\begin{proposition}
\label{prop:differentiability}
The function $$(\beta, \theta)\longmapsto\E_{\Prob}[F^{\gamma}(\beta,\theta)]=\E_{\Prob}\left[\mathcal{R}(\beta)S_N^0-\gamma\left(\left(\mathcal{R}(\beta)-\theta\right)S_N^0\right)^2\right]$$ is differentiable on $\mathcal{Z}\times \mathbb{R}$.
\end{proposition}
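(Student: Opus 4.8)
The plan is to deduce the differentiability of $(\beta,\theta)\mapsto\E_{\Prob}[G_\gamma(\beta,\theta)]$ on $\mathcal{Z}\times\mathbb{R}$ from the almost sure differentiability of $\mathcal{R}$ on $\mathcal{Z}$ (Proposition~\ref{prop_R_differentiable}) by a dominated-convergence argument for differentiation under the expectation sign. First I would recall from Proposition~\ref{prop:R} the explicit decomposition $\mathcal{R}(\beta)=V_0+D\cdot\beta-\sum_{i=1}^d\sum_{n=0}^{N-1}\nu|K_n^i\cdot\beta|$, which makes clear that, for fixed $\omega$, $\beta\mapsto\mathcal{R}(\beta)(\omega)$ is a piecewise affine function of $\beta$, hence locally Lipschitz with gradient bounded by $|D|+\nu\sum_{i,n}|K_n^i|$, and differentiable at every $\beta$ for which $K_n^i\cdot\beta\neq0$ for all $i,n$. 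Proposition~\ref{prop_R_differentiable} guarantees precisely that for $\beta\in\mathcal{Z}$ this non-degeneracy holds almost surely, so $\mathcal{R}$ is a.s. differentiable at each such $\beta$ with $\nabla\mathcal{R}(\beta)=D-\sum_{i=1}^d\sum_{n=0}^{N-1}\nu\,\sign(K_n^i\cdot\beta)\,K_n^i$.

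Next I would fix $(\beta_0,\theta_0)\in\mathcal{Z}\times\mathbb{R}$ and a compact neighbourhood $\mathcal{K}$ of it contained in $\mathcal{Z}\times\mathbb{R}$ (possible since $\mathcal{Z}\times\mathbb{R}$ is open), and write $G_\gamma(\beta,\theta)=\mathcal{R}(\beta)S_N^0-\gamma\big((\mathcal{R}(\beta)-\theta)S_N^0\big)^2$. For each $\omega$ outside the null set where $\mathcal{R}(\cdot)(\omega)$ fails to be differentiable at any point of a countable dense subset, the map $(\beta,\theta)\mapsto G_\gamma(\beta,\theta)(\omega)$ is differentiable on a neighbourhood of $(\beta_0,\theta_0)$, with partial derivatives $\nabla_\beta G_\gamma=\nabla\mathcal{R}(\beta)S_N^0\big(1-2\gamma S_N^0(\mathcal{R}(\beta)-\theta)\big)$ and $\partial_\theta G_\gamma=2\gamma(S_N^0)^2(\mathcal{R}(\beta)-\theta)$. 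On $\mathcal{K}$ these are dominated by an integrable envelope: $|\nabla\mathcal{R}(\beta)|\le|D|+\nu\sum_{i,n}|K_n^i|$ and $|\mathcal{R}(\beta)|\le|V_0|+\mathrm{diam}(\mathcal{K})(|D|+\nu\sum_{i,n}|K_n^i|)$, while $S_N^0$ is deterministic and bounded; hence the envelope is a polynomial in $|D|$, $|K_n^i|$, all of which lie in $L^p(\Prob)$ for every $p\ge1$ by Proposition~\ref{prop:R}, so the product is in $L^1(\Prob)$. The standard theorem on differentiation under the integral sign (via the mean value inequality and dominated convergence) then yields that $(\beta,\theta)\mapsto\E_{\Prob}[G_\gamma(\beta,\theta)]$ is differentiable on $\mathrm{int}(\mathcal{K})$, and since $(\beta_0,\theta_0)$ was arbitrary, on all of $\mathcal{Z}\times\mathbb{R}$; moreover $\nabla_{\beta,\theta}\E_{\Prob}[G_\gamma]=\E_{\Prob}[\Psi_\gamma(\beta,\theta)]$.

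The main obstacle is the a.s. non-differentiability of $\mathcal{R}$ coming from the absolute-value terms $|K_n^i\cdot\beta|$: differentiability under the expectation requires that the (random) set of "bad" $\beta$ where some $K_n^i\cdot\beta$ vanishes be controlled uniformly enough. This is exactly what Proposition~\ref{prop_R_differentiable} buys us — for each fixed $\beta\in\mathcal{Z}$ the bad event has probability zero — and the remaining care is to pass from "differentiable at each fixed $\beta$, a.s." to "a.s.\ differentiable on a neighbourhood", which works because the kink set $\{\beta : K_n^i\cdot\beta=0\}$ is, for a.e.\ $\omega$, a finite union of hyperplanes, so its complement is open and dense, and the difference quotients converge with an integrable dominating function. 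Everything else is the routine verification of the integrability of the envelope, which follows directly from the $L^p$ bounds already established in Proposition~\ref{prop:R}.
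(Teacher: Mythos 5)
Your proposal is correct and follows essentially the same route as the paper: use the decomposition of Proposition~\ref{prop:R} to dominate $\mathcal{R}$, $\nabla\mathcal{R}$ and hence $G_\gamma$ and its partial derivatives on bounded sets by envelopes built from $|D|$ and $|K_n^i|$, invoke the a.s.\ differentiability from Proposition~\ref{prop_R_differentiable}, and conclude by Lebesgue's theorem on differentiation under the expectation, obtaining $\nabla\E_{\Prob}[G_\gamma]=\E_{\Prob}[\Psi_\gamma]$. Your extra remark on passing from pointwise a.s.\ differentiability to the validity of the difference-quotient argument (via the hyperplane structure of the kink set and the integrable Lipschitz envelope) is a welcome clarification of a step the paper leaves implicit, but it is not a different method.
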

\begin{proof}
Let $r>0,\ r_{\theta}>0$ and $\beta\in\mathcal{Z}$, $\theta\in\mathbb{R}$ such that $|\beta|\leq r$, $|\theta|\leq r_{\theta}$. 
According to Lemma \ref{prop:R}, we have
$$
|\mathcal{R}(\beta)| \leq V_0+|D| |\beta| +\sum_{i=1}^d\sum_{n=0}^{N-1}\nu|K_n^i||\beta|. 
$$
Since we have $\forall \ i \in \{1,\dots,d\}, \ n\in\{0,\dots,N-1\}, \ \forall \ p\geq 1$, $\E_{\Prob}\left[|K_n^i|^p\right] + \E_{\Prob}\left[|D|^p\right] <\infty$,  we have $\E_{\Prob}\left[\underset{|\beta|\leq r}{\sup} \left|\mathcal{R}(\beta)\right|\right]<\infty$ and $\E_{\Prob}\left[\underset{|\beta|\leq r}{\sup} \left|\mathcal{R}(\beta)\right|^2\right]<\infty$.
We bound
\beaa
\label{eq:differentiable}
\begin{aligned}
|F^{\gamma}(\beta,\theta)| & \leq |\mathcal{R}(\beta)| +  
2\gamma|\mathcal{R}(\beta)|^2+ 2\gamma\theta^2 \\
 & \leq \underset{|\beta|\leq r}{\sup} \left|\mathcal{R}(\beta)\right| +
 2\gamma\underset{|\beta|\leq r}{\sup} \left|\mathcal{R}(\beta) \right|^2+2\gamma r_{\theta}^2.
\end{aligned}
\eeaa 
So $\E_{\Prob}\left[|F^{\gamma}(\beta,\theta)|\right]<\infty$. According to Proposition \ref{prop_R_differentiable}, $\mathcal{R}$ is 
 a.s. differentiable on $\mathcal{Z}$, then $F$ is also a.s. differentiable on $\mathcal{Z}\times \mathbb{R}$. 
 We also have
 $$|\nabla \mathcal{R}(\beta)|=|D-\sum_{i=1}^d\sum_{n=0}^{N-1}\nu \times \sign(K_n^i\cdot\beta)K_n^i|\leq  |D|+\sum_{i=1}^d\sum_{n=0}^{N-1}\nu|K_n^i|.
 $$
Following the same arguments, we prove that $\forall j \in\{1,\dots,m\}$
\beaa
\begin{aligned}
|\nabla_{\beta_j} F^{\gamma}(\beta,\theta)| & \leq |\nabla \mathcal{R}(\beta)|_j + 2\gamma |\nabla \mathcal{R}(\beta)|_j|\mathcal{R}(\beta)-\theta|\\
& \leq |D_j|+\sum_{i=1}^d\sum_{n=0}^{N-1}\nu|(K_n^i)_j|+2\gamma \left(|D_j|+\sum_{i=1}^d\sum_{n=0}^{N-1}\nu|(K_n^i)_j|\right)\left(\underset{|\beta|\leq r}{\sup} \left|\mathcal{R}(\beta)\right| +r_{\theta}\right).
\end{aligned}
\eeaa
We conclude using Cauchy Schwartz' inequality that $\E_{\Prob}[\sup_{|\beta| \le r, |\theta|\le r_\theta} |\nabla_{\beta_j} F^{\gamma}(\beta,\theta)|]<\infty$. Similarly, we bound 
\beaa
\begin{aligned}
|\nabla_{\theta} F^{\gamma}(\beta,\theta)| & \leq 2\gamma |\mathcal{R}(\beta)-\theta|\leq 2\gamma \underset{|\beta|\leq r}{\sup} \left|\mathcal{R}(\beta)\right|+r_{\theta}.
\end{aligned}
\eeaa
Finally, we apply Lebesgue's theorem to conclude that $(\beta,\theta)\rightarrow\E_{\Prob}[F^{\gamma}(\beta,\theta)]$ is differentiable on $\mathcal{Z}\times\mathbb{R}$ and $\nabla \left(\E_{\Prob}\left[F^{\gamma}(\beta,\theta)\right]\right)=\E_{\Prob}\left[\nabla F^{\gamma}(\beta,\theta)\right].$
\end{proof}

\section{Benchmark models}
\label{sec:benchmark}
Here, we present different approaches used to benchmark and challenge the multi-period one, described in Section \ref{sec:optimal-solution}. A first element of comparison is the multi-period approach itself but which ignores transaction costs. A second comparison is done with the sequential uni-period Markowitz framework. We present two versions of this approach, when cost are ignored and considered. We also study the link of risk aversion on optimal solutions for those benchmark models. 
\subsection{Benchmark: Multi-period allocation Ignoring transaction costs}
\label{sec:chaos-ignoring-cost}
A natural an simplified approach consists in applying the framework described in Section \ref{sec:optimal-solution}, while ignoring costs. Costs are removed from the portfolio value but do not impact the strategy. Theoretically, we look for an optimal strategy whose costs are refunded. The control $(\alpha_n^{\star})_n$, are solution to \eqref{sys:cost2} by temporally setting $\nu=0$ during the resolution. 
Formally, we solve
\be
\label{system_ignored-cost}
\tag{$E^{\gamma}_0$}
\begin{aligned}
& \sup_{(\alpha_n)_{n\in \{1,N\}}}
& & \E_{\Prob}\left[U(\tilde{X}_{N}S_N^0)\right] \\
& \text{s.t.}
& & \tilde{X}_0=V_0, \ (\alpha_n)_n \ \bfF-Pred\\
& & &  \tilde{X}_{n+1}=\tilde{X}_{n}+\alpha_{n+1}\cdot\Delta \tilde{S}_{n+1}
\end{aligned}
\ee
Then, the real quantity in risk free asset at stake is deduced with the auto-financing relation and by removing costs generated by the real value of $\nu$. Formally, after maximizing \eqref{system_ignored-cost}, the generated cost $\mathcal{C}_N$ is computed and removed to obtain the real portfolio value $\tilde{V}_N=\tilde{X}_N-\mathcal{C}_N$. This strategy is implemented in order to observe the effect of costs on optimal strategies.

\subsection{Benchmark: Sequential Uni-period Markowitz portfolio allocation}
\label{sec:uni}
In this section, we present an alternative approach, intended to serve as a benchmark. 
We would like to compare the performance of our method to a method traditionally applied by asset managers. We propose to implement the sequential uni-period mean–variance Markowitz framework. This approach also corresponds to the one-time-step Model Predictive Control (MPC) described in \cite{li2022multi}. 
%\tom{
%We assume to be in the same environment described in Section \ref{sec:environment}, but we assume here that the drift parameters $\mu$ and volatility matrices $\sigma$, of the risky assets are deterministic.}
In order to maximise the rate of return of a portfolio while controlling its volatility at time $T$, an agent consecutively maximizes at time $0=t_0<\cdots<t_N=T$, its uni-period mean-var objective function. We assume that the agent has an uni-period risk aversion parameter $\gamma_u$. The agent has to consecutively solve for $n \in \{0,\cdots,N-1\}$,
\be
\label{system_uniperiod_ap2}
\tag{$Y^{\gamma_u}_{n}$}
% \label{system_uniperiod_ap}
% \tag{$D^{\gamma_u}_{n}$}
\begin{aligned}
\ & \sup_{\alpha_{n+1}\in\mathbb{R}^d, \alpha_{n+1}^0\in\mathbb{R}}
& & \E_{\Prob}\left[{V}_{n+1}|\cF_{n}\right]-\gamma_u \Var_\bP\left[V_{n+1}|\cF_{n}\right] \\
& \text{s.t.}
& & \Delta V_{n+1}=\alpha_{n+1}\cdot \Delta S_{n+1}+\alpha_{n+1}^0 \Delta S_{n+1}^0 -\sum_{i=1}^d\nu|\alpha_{n+1}^i-\alpha_{n}^i|S_{n}^i
\end{aligned}
\ee
We present two versions of this model. A naive version, where the asset manager pays the cost but does not take the amount into account in its strategy (by ignoring them), is presented in Section \ref{sec:uni-ignoring}. In a second version, described in Section \ref{seq:minVarCost}, the investor consecutively solves~\eqref{system_uniperiod_ap2} and bases his strategy according to the cost he will have to pay. 

\subsubsection{Ignoring costs}
\label{sec:uni-ignoring}
We aim to propose here, a simplified version of the initial sequential uni-period problem~\eqref{system_uniperiod_ap2}. The underlying idea is the same as the multi-period version described in Section~\ref{sec:chaos-ignoring-cost}. At each time step, the agent maximizes its mean-var utility function as though costs are refunded. Costs do not impact the strategy and are just removed from the portfolio value at the end. 
The agent has to consecutively solve for $n \in \{0,\dots,N-1\}$,
\be
\label{system_uniperiod_0}
\tag{$\mathcal{Y}^{\gamma_{u}}_{n}$}
\begin{aligned}
\ & \sup_{\alpha_{n+1}\in\mathbb{R}^d,\alpha_{n+1}^0\in\mathbb{R}}
& & \E_{\Prob}\left[{X}_{n+1}|\cF_{n}\right]-\gamma_u \Var_\bP\left[X_{n+1}|\cF_{n}\right] \\
& \text{s.t.}
& & \Delta X_{n+1}=\alpha_{n+1}\cdot \Delta S_{n+1}+\alpha_{n+1}^0 \Delta S_{n}^0 
\end{aligned}
\ee
Let $(\alpha_{n+1}^{\star},\alpha_{n+1}^{0 \ \star})$ be a solution and $(X_n^{\star})$, the associated portfolio. 
The real portfolio is computed after removing the generated costs such that  $V_{n+1}^{\star}=X_{n+1}^{\star}-\mathcal{C}_{n+1} S_{n+1}^0.$
According to the dynamics of the assets, we have for $n\in\{0,\dots,N-1\}$, 
$$
\left\{
    \begin{array}{ll}
    \E_{\Prob}\left[S_{n+1}^i|\cF_{n}\right]=S_{n}^i e^{\mu_n^i(t_{n+1}-t_n)}, \ \forall \ 1\leq i\leq d \ , \\
        \Cov\left[S_{n+1}^i,S_{n+1}^j|\cF_{n}\right]=S_{n}^i S_{n}^j e^{(\mu_n^i+\mu_n^j)(t_{n+1}-t_n)}(e^{(\sigma_n^i)^T\sigma_n^j(t_{n+1}-t_n)}-1), \ \forall \ 1\leq i,j\leq d \ .
    \end{array}
\right.
$$
By calling $A_n=\Diag( (e^{\mu_n^i(t_{n+1}-t_n)})_{1 \le i \le d})$
and $B_n=[e^{(\mu_n^i+\mu_n^j)(t_{n+1}-t_n)}(e^{(\sigma_n^i)^T\sigma_n^j(t_{n+1}-t_n)}-1)]_{ij}$, 
We have
\beaa
\begin{aligned}
& \alpha_{n+1}^{\star}= \arg\sup_{\alpha_{n+1}\in\mathbb{R}^d}
& &  \alpha_{n+1}^T A_n S_{n} +(X_{n}-\alpha_{n+1}^T S_{n})\frac{S_{n+1}^0}{S_{n}^0} -\gamma_u \alpha_{n+1}^{T}S_{n}^T B_n S_{n}\alpha_{n+1}\\
\end{aligned}
\eeaa
\be
\begin{aligned}
\label{eq:uni-withoutcost}
  & \Leftrightarrow A_n S_{n}-S_{n}\frac{S_{n+1}^0}{S_{n}^0}-2\gamma_u(S_{n}^T B_n S_{n})\alpha_{n+1}^{\star}=0
& \Leftrightarrow \alpha_{n+1}^{\star}=\frac{A_nS_{n}-S_{n}e^{r_n(t_{n+1}-t_n)}}{2\gamma_u S_{n}^T B_n S_{n}}.
 \end{aligned}
\ee
\begin{remark}
It is equivalent to remove costs at each time step or to remove them at the end because the quantity in the risk free asset does not impact the strategy. 
\end{remark}
Comparing the multi-period strategy with this framework is a difficult task. The risk aversion parameters $\gamma$ and $\gamma_u$ in the two models, do not refer to the same risk aversion for the agent. We need to find a correct matching, or a measure of performance, independent from risk aversion. Obviously, this measure is the Sharpe ratio. 
\begin{proposition}
\label{prop:uni-sharp}
The Sharpe ratio of the sequential uni-period Markowitz strategy which ignores costs, does not depend on the risk aversion parameter $\gamma_u$. 
\end{proposition}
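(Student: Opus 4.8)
The plan is to leverage the closed form of the optimal uni-period control and the fact that it depends on $\gamma_u$ only through a multiplicative factor $1/\gamma_u$. First I would recall from \eqref{eq:uni-withoutcost} that, at each date $n$, the solution of \eqref{system_uniperiod_0} is $\alpha_{n+1}^{\star}=\frac{1}{2\gamma_u}\psi_n$, where $\psi_n=\frac{A_nS_n-S_n e^{r_n(t_{n+1}-t_n)}}{S_n^T B_n S_n}$ is an $\cF_n$-measurable vector that involves neither $\gamma_u$ nor the current wealth $X_n$; in particular, as is visible directly from \eqref{eq:uni-withoutcost} and as noted in the Remark following it, the cash position never feeds back into the risky positions. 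Since $\alpha_0=0$, an immediate induction then shows that the whole sequence $(\alpha_n^{\star})_{0\le n\le N}$ is of the form $\frac{1}{2\gamma_u}(\phi_n)_{0\le n\le N}$ for some $\bfF$-adapted process $(\phi_n)_n$ built only from the assets and independent of $\gamma_u$.

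Next I would substitute this into the realised discounted terminal wealth. Writing $\tilde V_N^{\star}=\tilde X_N^{\star}-\mathcal{C}_N^{\star}$ with $\tilde X_N^{\star}=V_0+\sum_{n=0}^{N-1}\alpha_{n+1}^{\star}\cdot\Delta\tilde S_{n+1}$ and $\mathcal{C}_N^{\star}=\sum_{n=0}^{N-1}\sum_{i=1}^d\nu\frac{|\alpha_{n+1}^{i\star}-\alpha_n^{i\star}|S_n^i}{S_n^0}$, the stochastic-integral term is linear in $(\alpha_n^{\star})_n$ and the cost term is positively homogeneous of degree one in $(\alpha_n^{\star})_n$, so both carry the factor $\frac{1}{2\gamma_u}$. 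Hence $\tilde V_N^{\star}-V_0=\frac{1}{2\gamma_u}Z$ for a single random variable $Z$ that does not depend on $\gamma_u$, with $Z\in L^2(\Prob)$ by the $L^p$-integrability of $S$. (If instead one reads ``ignoring costs'' as the costs being fully refunded, one simply drops $\mathcal{C}_N^{\star}$ and the same conclusion holds.)

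Finally, since $S_N^0$ is deterministic it cancels in numerator and denominator, and
$$
\sharpe(V_N^{\star})=\frac{\E_{\Prob}[\tilde V_N^{\star}]-V_0}{\Var[\tilde V_N^{\star}]^{1/2}}=\frac{\frac{1}{2\gamma_u}\E_{\Prob}[Z]}{\frac{1}{2\gamma_u}\Var[Z]^{1/2}}=\frac{\E_{\Prob}[Z]}{\Var[Z]^{1/2}},
$$
which no longer involves $\gamma_u$; market completeness guarantees $\Var[Z]>0$, so the ratio is well defined. The computation itself is routine: the only point deserving care is the verification that $\alpha_{n+1}^{\star}$ carries no residual $\gamma_u$-dependence through the wealth $X_n$, i.e. that the uni-period mean--variance control is wealth-independent (the Hessian in $\alpha_{n+1}$ of the objective does not involve $X_n$), which is exactly what the explicit formula \eqref{eq:uni-withoutcost} and the Remark after it record.
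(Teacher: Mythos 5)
Your proposal is correct and follows essentially the same route as the paper: both arguments rest on the explicit formula \eqref{eq:uni-withoutcost}, which shows that the optimal control carries $\gamma_u$ only through the factor $\tfrac{1}{2\gamma_u}$ and does not feed back through the wealth, so that $\tilde V_N^{\star}-V_0$ is $\tfrac{1}{2\gamma_u}$ times a $\gamma_u$-free random variable and the Sharpe ratio cancels. The paper phrases this as an induction on $V_n^{\star}=V_0S_n^0+\tfrac{\mathcal{X}_n}{2\gamma_u}$, but the content is identical to your homogeneity argument on the controls and the cost term.
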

\begin{proof}
We prove by recurrence $\forall \ 0\leq n\leq N$, $\mathcal{H}_n: \exists\  \text{a $\cF_n-$measurable random variable}\ \mathcal{X}_n \  $ \\
$\text{not function of} \ \gamma_u \ \text{such that} \ 
V_n^{\star}=V_0S_n^0+\frac{\mathcal{X}_n}{2\gamma_u}.$
\\
$\mathcal{H}_0$ is true with $\mathcal{X}_0=0$. We assume $\mathcal{H}_n$ true for $n>0$. We have
\beaa
\begin{aligned}
\tilde{V}_{n+1}^{\star}
&= \tilde{V}_n^{\star}+\alpha_{n+1}^{\star}\cdot\Delta \tilde{S}_{n+1} -\sum_{i=1}^d \nu\frac{|\alpha_{n+1}^{i \ \star}-\alpha_n^{i \ \star}|S_{n}^i}{S_{n}^0}\\
&=\frac{1}{S_{n}^0}\left(V_0S_n^0+\frac{\mathcal{X}_n}{2\gamma_u}\right) +(\alpha_{n+1}^{\star})^T\Delta \tilde{S}_{n+1}-\sum_{i=1}^d \nu\frac{|\alpha_{n+1}^{i \ \star}-\alpha_n^{i \ \star}|S_{n}^i}{S_{n}^0}.\\
\end{aligned}
\eeaa
By using that $\tilde{Rd}_n=\Diag(\frac{\tilde{S}_{n}^i}{\tilde{S}_{n-1}^i})$, and the form of the solution in (\ref{eq:uni-withoutcost}), we rewrite
\beaa
\begin{aligned}
(\alpha_{n+1}^{\star})^T\Delta \tilde{S}_{n+1}
& =(\alpha_{n+1}^{\star})^T(\tilde{Rd}_{n+1}-Id)\tilde{S}_n = \tilde{S}_n^T(\tilde{Rd}_{n+1}-Id)\alpha_{n+1}^{\star}\\
& = \frac{1}{S_n^0}\frac{S_n^T(\tilde{Rd}_{n+1}-Id)(A_nS_n-S_ne^{r_n(t_{n+1}-t_n)})}{2\gamma_u S_n^T B_n S_n}.
\end{aligned}
\eeaa
\beaa
\begin{aligned}
\text{and} \ \sum_{i=1}^d \nu\left(\frac{|\alpha_{n+1}^{i \ \star}-\alpha_n^{i \ \star}|S_{n}^i}{S_{n}^0}\right)=\sum_{i=1}^d {\nu}\left|\frac{A_n^iS_{n}^i-S_{n}^ie^{r_n(t_{n+1}-t_n)}}{2\gamma_u S_{n}^T B_n S_{n}}-\frac{A_{n-1}^iS_{n-1}^i-S_{n-1}^ie^{r_{n-1}(t_{n}-t_{n-1})}}{2\gamma_u S_{n-1}^T B_{n-1} S_{n-1}}\right|\frac{S_{n}^i}{S_{n}^0}.
\end{aligned}
\eeaa

We prove $\mathcal{H}_{n+1}$ by denoting 
\beaa
\begin{aligned}
\mathcal{X}_{n+1} &=\frac{S_{n+1}^0}{S_{n}^0}\left(\mathcal{X}_n +\frac{S_n^T(Rd_{n+1}-Id)(A_nS_n-S_n e^{r_n(t_{n+1}-t_n)})}{ S_n^T B_n S_n}\right)\\
& -\frac{S_{n+1}^0}{S_{n}^0}\sum_{i=1}^d {\nu}\left|\frac{A_n^iS_{n}^i-S_{n}^ie^{r_n(t_{n+1}-t_n)}}{ S_{n}^T B_n S_{n}}-\frac{A_{n-1}^iS_{n-1}^i-S_{n-1}^ie^{r_{n-1}(t_{n}-t_{n-1})}}{ S_{n-1}^T B_{n-1} S_{n-1}}\right|S_{n}^i.
\end{aligned}
\eeaa
Using this form we deduce that $\E_{\Prob}[V_N^{\star}]-V_0S_N^0=\frac{\E_{\Prob}[\mathcal{X}_N]}{2\gamma_u}, \ \text{and} \ 
\Var[V_N^{\star}]=\frac{\Var[\mathcal{X}_N]}{4\gamma_u^2}.$ Finally
$$
\sharpe\left[V_N^{\star}\right]=\frac{\frac{\E_{\Prob}[V_N^{\star}]-V_0}{V_0}-\frac{V_0S_N^0-V_0}{V_0}}{\Var[\frac{V_N^{\star}-V_0}{V_0}]^{\frac{1}{2}}}=\frac{\E_{\Prob}[\mathcal{X}_N]}{\Var[\mathcal{X}_N]^{\frac{1}{2}}},
$$
which does not depend on $\gamma_u$.
\end{proof}
\subsubsection{Taking costs into account}
\label{seq:minVarCost}
In this version, costs are taken into account in the objective function but the strategy remains myopic. 
We recall that the agent has to consecutively solve \eqref{system_uniperiod_ap2}. 
In the presence of costs, we do not provide an explicit solution of \eqref{system_uniperiod_ap2}.
\\
\\
In order to use this model as a benchmark, let us prove an analog result to Proposition \ref{prop:uni-sharp}.
\begin{proposition}
\label{prop:uni-sharp-cost}
The Sharpe ratio of the sequential uni-period Markowitz strategy considering costs, does not depend on the risk aversion parameter $\gamma_u$. 
\end{proposition}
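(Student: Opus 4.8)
The plan is to imitate the scaling argument used for the multi-period model in Proposition~\ref{prop:beta_gamma-zero} and Corollary~\ref{prop:beta_gamma}, adapted to the recursive (myopic) structure of~\eqref{system_uniperiod_ap2}, rather than the homogeneity-in-$\gamma_u$ induction of Proposition~\ref{prop:uni-sharp}. Fix two risk aversions $\gamma_u,\gamma_u'\in\mathbb{R}^\star$ and set $c=\gamma_u/\gamma_u'$. Denote by $(\alpha_n^{\star,\gamma_u})_n$, $(V_n^{\star,\gamma_u})_n$ (resp. with $\gamma_u'$) the optimal sequential strategy and the associated portfolio. I would prove by induction on $n$ the statement $\mathcal{H}_n$: $\alpha_n^{\star,\gamma_u'}=c\,\alpha_n^{\star,\gamma_u}$ and $\tilde{V}_n^{\star,\gamma_u'}-V_0=c\,(\tilde{V}_n^{\star,\gamma_u}-V_0)$.

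First I would rewrite the one-step problem~\eqref{system_uniperiod_ap2} after eliminating $\alpha_{n+1}^0$ via the self-financing constraint: the investor maximizes over $\alpha_{n+1}$ the quantity $\tfrac{S_{n+1}^0}{S_n^0}V_n+\alpha_{n+1}\cdot\mu_n^S-\tfrac{S_{n+1}^0}{S_n^0}\sum_i\nu|\alpha_{n+1}^i-\alpha_n^i|S_n^i-\gamma_u\,\alpha_{n+1}^T\Sigma_n^S\alpha_{n+1}$, where $\mu_n^S=\E_{\Prob}[S_{n+1}|\cF_n]-\tfrac{S_{n+1}^0}{S_n^0}S_n$ and $\Sigma_n^S=\Cov_{\Prob}[S_{n+1}|\cF_n]$ are $\cF_n$-measurable and independent of the risk aversion. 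Since the market is complete, $\Sigma_n^S$ is a.s. positive definite, so this objective is a.s. strictly concave in $\alpha_{n+1}$ and its maximizer is unique (the cost term is convex and hence does not destroy strict concavity coming from the quadratic term); this is what makes ``the'' optimal strategy well defined.

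The heart of the induction is the substitution $\alpha_{n+1}=c\beta$ in the $\gamma_u'$-objective, using the previous position $\alpha_n^{\star,\gamma_u'}=c\,\alpha_n^{\star,\gamma_u}$ supplied by $\mathcal{H}_n$: after factoring out $c$ and exploiting the cancellation $\gamma_u'\cdot c=\gamma_u$, the $\gamma_u'$-objective evaluated at $c\beta$ equals an additive $\cF_n$-measurable constant plus $c$ times the $\gamma_u$-objective evaluated at $\beta$. Hence the $\gamma_u'$-maximizer is $c$ times the $\gamma_u$-maximizer, i.e. $\alpha_{n+1}^{\star,\gamma_u'}=c\,\alpha_{n+1}^{\star,\gamma_u}$. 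Plugging this into the discounted self-financing recursion $\tilde{V}_{n+1}-V_0=(\tilde{V}_n-V_0)+\alpha_{n+1}\cdot\Delta\tilde{S}_{n+1}-\sum_i\nu|\alpha_{n+1}^i-\alpha_n^i|S_n^i/S_n^0$, which is positively homogeneous jointly in $(\alpha_\bullet,\tilde{V}_n-V_0)$, yields $\tilde{V}_{n+1}^{\star,\gamma_u'}-V_0=c\,(\tilde{V}_{n+1}^{\star,\gamma_u}-V_0)$, closing $\mathcal{H}_{n+1}$; the base case $n=0$ holds since $\tilde{V}_0=V_0$ in both models.

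Finally, $\mathcal{H}_N$ gives $V_N^{\star,\gamma_u'}-V_0S_N^0=c\,(V_N^{\star,\gamma_u}-V_0S_N^0)$, so $\E_{\Prob}[V_N^{\star,\gamma_u'}]-V_0S_N^0=c\,(\E_{\Prob}[V_N^{\star,\gamma_u}]-V_0S_N^0)$ and $\Var[V_N^{\star,\gamma_u'}]=c^2\Var[V_N^{\star,\gamma_u}]$, whence $\sharpe[V_N^{\star,\gamma_u'}]=\sharpe[V_N^{\star,\gamma_u}]$, exactly as at the end of the proof of Proposition~\ref{prop:uni-sharp}. The main obstacle, beyond keeping the bookkeeping of the induction clean, is making the one-step optimizer well defined and unique (so that the scaled strategy is genuinely \emph{the} $\gamma_u'$-optimizer, not merely a stationary point); this is precisely where completeness of the market, hence positive definiteness of $\Sigma_n^S$, enters.
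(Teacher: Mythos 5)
Your proof is correct, and it reaches the paper's conclusion by a genuinely different route. The paper also argues by induction on $n$, but its inductive step writes down the subdifferential optimality condition $0\in\partial f_n(\alpha_{n+1}^{\star})$ explicitly and extracts from it the representation $\alpha_{n+1}^{\star}=\mathcal{X}_{n+1}/(2\gamma_u)$, where $\mathcal{X}_{n+1}$ is defined through a self-consistent (fixed-point) relation involving $\sign(\mathcal{X}_{n+1}^i-\mathcal{X}_n^i)$ that does not feature $\gamma_u$; the Sharpe ratio is then computed from the explicit $\gamma_u$-free process $(\mathcal{X}_n)_n$. You instead compare two risk aversions directly: the substitution $\alpha_{n+1}=c\beta$ with $c=\gamma_u/\gamma_u'$ turns the $\gamma_u'$-objective into an $\cF_n$-measurable constant plus $c$ times the $\gamma_u$-objective (the linear and cost terms are positively homogeneous of degree one, and $\gamma_u' c^2=c\gamma_u$ handles the quadratic term), so uniqueness of the maximizer forces $\alpha_{n+1}^{\star,\gamma_u'}=c\,\alpha_{n+1}^{\star,\gamma_u}$. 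This is essentially the transfer to the myopic setting of the homogeneity argument of Proposition~\ref{prop:beta_gamma-zero}, and it buys you a cleaner inductive step: you never need to solve or even write the first-order condition with the absolute-value cost, and you sidestep the slightly delicate point in the paper's proof that the implicit equation for $\mathcal{Z}$ determines it independently of $\gamma_u$. What the paper's version buys in exchange is an explicit closed-form-like description of the optimal control, which it reuses to write the Sharpe ratio as an explicit functional of $(\mathcal{X}_n)_n$. Two small points to make explicit in your write-up: you need $c>0$ (i.e.\ both risk aversions of the same sign, as the paper implicitly assumes) so that $|c\,x|=c\,|x|$ and $\sqrt{c^2}=c$ in the final variance computation; and you should note that the one-step maximizer depends on $\cF_n$ only through $(S_n,\alpha_n^{\star})$ and not through $V_n$, which enters the objective only as an additive constant --- this is what lets the induction on $\alpha$ close without circularity.
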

\begin{proof}
With the notation of the previous section, \eqref{system_uniperiod_ap2} can be rewritten as 
\beaa
\begin{aligned}
\ & \sup_{\alpha_{n+1}\in\mathbb{R}^d}
& & \alpha_{n+1}^T A_n S_{n} +\left(V_{n}-\alpha_{n+1}^T S_{n}-\sum_{i=1}^d-\nu(|\alpha_{n+1}^i-\alpha_{n}^i|S_{n}^i)\right)\frac{S_{n+1}^0}{S_{n}^0} -\gamma_u \alpha_{n+1}^{T}S_{n}^T B_n S_{n}\alpha_{n+1} \\
& \text{subject to}
& & V_{n+1}=\alpha_{n+1} S_{n+1}+\alpha_{n+1}^0 S_{n+1}^0
\end{aligned}
\eeaa
Let us consider the objective functions 
$$\forall\ n\leq N-1, \ f_n(\alpha)=\alpha^T A_n S_{n} +\left(V_{n}-\alpha^T S_{n}-\sum_{i=1}^d-\nu(|\alpha^i-\alpha_{n}^i|S_{n}^i)\right)\frac{S_{n+1}^0}{S_{n}^0} -\gamma_u \alpha^{T}S_{n}^T B_n S_{n}\alpha.
$$
$f_n$ is strictly concave and $\lim\limits_{\substack{|\alpha| \to +\infty}} f_n(\alpha)=-\infty$, then \eqref{system_uniperiod_ap2} has an unique solution. We call $\alpha_{n+1}^{\star}$ the solution of \eqref{system_uniperiod_ap2}. 
$f_n$ is differentiable on $\mathbb{R}^d\setminus \mathcal{O}_{n}$, where $\mathcal{O}_{n}=\{\alpha\in \mathbb{R}^d, \exists i \in\{1,\dots,d\} \ \text{such that}\ \alpha^i=\alpha_{n}^i \}$. $f_n$ admits a sub-differential $\partial f_n$, at any point $\alpha\in \mathbb{R}^d$
$$
\partial f_n(\alpha)= \left\{
    \begin{array}{ll}
        A_n S_{n}-2\gamma_u(S_{n}^T B_n S_{n})\alpha-\Diag(1+\nu \epsilon)S_{n}\frac{S_{n+1}^0}{S_{n}^0}, \
       \text{with} \\ \epsilon^i=\sign(\alpha^i-\alpha_{n}^i)  \ \text{if} \ \alpha^i\neq\alpha_{n}^i, \ \epsilon^i \in [-1,1] \ \text{otherwise}
    \end{array}
\right\}.
$$
We have $0\in \partial f_n( \alpha_{n+1}^{\star})$. 
Let us show by recurrence $\forall \ 0\leq n\leq N$, $\mathcal{H}_n: \exists \ \text{a $\cF_n-$measurable random variable} \  \ \mathcal{X}_n$ \\
$\text{not function of} \ \gamma_u \ \text{such that} \ 
\alpha_n^{\star}=\frac{\mathcal{X}_n}{2\gamma_u}.$
\\
$\mathcal{H}_0$ is true with $\mathcal{X}_0=0$. 
We assume $\mathcal{H}_n$ true for $n>1$. If $\exists \ i \in\{1,\dots,d\}$ such that ${\alpha}_{n+1}^{i\star}={\alpha}_{n}^{i\star}$ then it is sufficient to choose $\mathcal{X}_{n+1}^i=\mathcal{X}_{n}^i$. 
If ${\alpha}_{n+1}^{i\star}\neq{\alpha}_{n}^{i\star}$, then
$$
\alpha_{n+1}^{i\star}=\frac{A_n^{ii}S_{n}^i-(1-\nu \sign(\alpha_{n+1}^{i\star}-\alpha_{n}^{i\star}))S_{n}^ie^{r}}{2\gamma_u S_{n}^T B_n S_{n}}.
$$
Let us define $\mathcal{Z}$ a $\cF_n$-measurable random variable such that $\alpha_{n+1}^i=\frac{\mathcal{Z}}{2\gamma_u}$. Then with the form of $\alpha_n^{i\star}$, we have
$$
\mathcal{Z}=\frac{A_n^{ii}S_{n}^i-(1-\nu \sign(\mathcal{Z}-\mathcal{X}_n^i))S_{n}^ie^{r}} {S_{n}^T B_n S_{n}}.
$$
We can deduce that $\mathcal{Z}$ is not a function of $\gamma_u$. Therefore $\mathcal{Z}$ is the chosen candidate to be $\mathcal{X}_{n+1}$. $\mathcal{H}_{n+1}$ is then true, and $\mathcal{H}_{n}$ true for all $n \geq 0$. 
Then is is easy to check that 
\beaa
\begin{aligned}
\sharpe(V_N^{\star}) &=\frac{\E_{\Prob}\left[S_N^{0}\left(V_0+\sum_{n=0}^{N}(\alpha_{n+1}^{\star})^T\tilde{S}_n-\sum_{i=1}^d\nu|\alpha_{n+1}^{i\star}-\alpha_{n}^{i\star}|\tilde{S}_n^i\right)\right]-V_0S_N^0}{\Var\left[S_N^{0}\left(V_0+\sum_{n=0}^{N}(\alpha_{n+1}^{\star})^T\tilde{S}_n-\sum_{i=1}^d\nu|\alpha_{n+1}^{i\star}-\alpha_{n}^{i\star}|\tilde{S}_n^i\right)\right]^{\frac{1}{2}}}\\
&=\frac{\E_{\Prob}\left[\left(\sum_{n=0}^{N}(\mathcal{X}_{n+1})^T\tilde{S}_n-\sum_{i=1}^d\nu|\mathcal{X}_{n+1}^{i}-\mathcal{X}_{n}^{i}|\tilde{S}_n^i\right)\right]}{\Var\left[\left(\sum_{n=0}^{N}\mathcal{X}_{n+1}^T\tilde{S}_n-\sum_{i=1}^d\nu|\mathcal{X}_{n+1}^{i}-\mathcal{X}_{n}^{i}|\tilde{S}_n^i\right)\right]^{\frac{1}{2}}},
\end{aligned}
\eeaa
which is not a function of $\gamma_u$.
\end{proof}

\bibliographystyle{siam}
\bibliography{references}

\begin{thebibliography}{10}

\bibitem{akahori2017discrete}
{\sc J.~Akahori, T.~Amaba, and K.~Okuma}, {\em A discrete-time clark--ocone
  formula and its application to an error analysis}, Journal of Theoretical
  Probability, 30 (2017), pp.~932--960.

\bibitem{al2020multi}
{\sc M.~Al-Nator, S.~Al-Nator, and Y.~F. Kasimov}, {\em Multi-period markowitz
  model and optimal self-financing strategy with commission}, Journal of
  Mathematical Sciences, 248 (2020), pp.~33--45.

\bibitem{bachouch2021deep}
{\sc A.~Bachouch, C.~Hur{\'e}, N.~Langren{\'e}, and H.~Pham}, {\em Deep neural
  networks algorithms for stochastic control problems on finite horizon:
  numerical applications}, Methodology and Computing in Applied Probability,
  (2021), pp.~1--36.

\bibitem{bertsimas2008robust}
{\sc D.~Bertsimas and D.~Pachamanova}, {\em Robust multiperiod portfolio
  management in the presence of transaction costs}, Computers \& Operations
  Research, 35 (2008), pp.~3--17.

\bibitem{best2007algorithm}
{\sc M.~J. Best and J.~Hlouskova}, {\em An algorithm for portfolio optimization
  with variable transaction costs, part 1: theory}, Journal of optimization
  theory and applications, 135 (2007), pp.~563--581.

\bibitem{boyd2013performance}
{\sc S.~Boyd, M.~T. Mueller, B.~O’Donoghue, Y.~Wang, et~al.}, {\em
  Performance bounds and suboptimal policies for multi--period investment},
  Foundations and Trends{\textregistered} in Optimization, 1 (2013), pp.~1--72.

\bibitem{brown2011dynamic}
{\sc D.~B. Brown and J.~E. Smith}, {\em Dynamic portfolio optimization with
  transaction costs: Heuristics and dual bounds}, Management Science, 57
  (2011), pp.~1752--1770.

\bibitem{cai2013numerical}
{\sc Y.~Cai, K.~L. Judd, and R.~Xu}, {\em Numerical solution of dynamic
  portfolio optimization with transaction costs}, tech. rep., National Bureau
  of Economic Research, 2013.

\bibitem{calafiore2008multi}
{\sc G.~C. Calafiore}, {\em Multi-period portfolio optimization with linear
  control policies}, Automatica, 44 (2008), pp.~2463--2473.

\bibitem{cong2016multi}
{\sc F.~Cong and C.~W. Oosterlee}, {\em Multi-period mean--variance portfolio
  optimization based on monte-carlo simulation}, Journal of Economic Dynamics
  and Control, 64 (2016), pp.~23--38.

\bibitem{constantinides1979multiperiod}
{\sc G.~M. Constantinides}, {\em Multiperiod consumption and investment
  behavior with convex transactions costs}, Management Science, 25 (1979),
  pp.~1127--1137.

\bibitem{cui2014optimal}
{\sc X.~Cui, J.~Gao, X.~Li, and D.~Li}, {\em Optimal multi-period
  mean--variance policy under no-shorting constraint}, European Journal of
  Operational Research, 234 (2014), pp.~459--468.

\bibitem{dai2010continuous}
{\sc M.~Dai, Z.~Q. Xu, and X.~Y. Zhou}, {\em Continuous-time markowitz's model
  with transaction costs}, SIAM Journal on Financial Mathematics, 1 (2010),
  pp.~96--125.

\bibitem{dai2008penalty}
{\sc M.~Dai and Y.~Zhong}, {\em Penalty methods for continuous-time portfolio
  selection with proportional transaction costs}, Available at SSRN 1210105,
  (2008).

\bibitem{davis1990portfolio}
{\sc M.~H. Davis and A.~R. Norman}, {\em Portfolio selection with transaction
  costs}, Mathematics of operations research, 15 (1990), pp.~676--713.

\bibitem{draviam2002generalized}
{\sc T.~Draviam and T.~Chellathurai}, {\em Generalized markowitz mean--variance
  principles for multi--period portfolio--selection problems}, Proceedings of
  the Royal Society of London. Series A: Mathematical, Physical and Engineering
  Sciences, 458 (2002), pp.~2571--2607.

\bibitem{dumas1991exact}
{\sc B.~Dumas and E.~Luciano}, {\em An exact solution to a dynamic portfolio
  choice problem under transactions costs}, The Journal of Finance, 46 (1991),
  pp.~577--595.

\bibitem{gennotte1994investment}
{\sc G.~Gennotte and A.~Jung}, {\em Investment strategies under transaction
  costs: The finite horizon case}, Management Science, 40 (1994), pp.~385--404.

\bibitem{holden2013optimal}
{\sc H.~Holden and L.~Holden}, {\em Optimal rebalancing of portfolios with
  transaction costs}, Stochastics An International Journal of Probability and
  Stochastic Processes, 85 (2013), pp.~371--394.

\bibitem{lelong2018dual}
{\sc J.~Lelong}, {\em Dual pricing of american options by wiener chaos
  expansion}, SIAM Journal on Financial Mathematics, 9 (2018), pp.~493--519.

\bibitem{li2000optimal}
{\sc D.~Li and W.-L. Ng}, {\em Optimal dynamic portfolio selection: Multiperiod
  mean-variance formulation}, Mathematical finance, 10 (2000), pp.~387--406.

\bibitem{li2022multi}
{\sc X.~Li, A.~S. Uysal, and J.~M. Mulvey}, {\em Multi-period portfolio
  optimization using model predictive control with mean-variance and risk
  parity frameworks}, European Journal of Operational Research, 299 (2022),
  pp.~1158--1176.

\bibitem{li2002dynamic}
{\sc X.~Li, X.~Y. Zhou, and A.~E. Lim}, {\em Dynamic mean-variance portfolio
  selection with no-shorting constraints}, SIAM Journal on Control and
  Optimization, 40 (2002), pp.~1540--1555.

\bibitem{lobo2007portfolio}
{\sc M.~S. Lobo, M.~Fazel, and S.~Boyd}, {\em Portfolio optimization with
  linear and fixed transaction costs}, Annals of Operations Research, 152
  (2007), pp.~341--365.

\bibitem{markowits1952portfolio}
{\sc H.~M. Markowits}, {\em Portfolio selection}, Journal of finance, 7 (1952),
  pp.~71--91.

\bibitem{merton1969lifetime}
{\sc R.~C. Merton}, {\em Lifetime portfolio selection under uncertainty: The
  continuous-time case}, The review of Economics and Statistics,  (1969),
  pp.~247--257.

\bibitem{morton1995optimal}
{\sc A.~J. Morton and S.~R. Pliska}, {\em Optimal portfolio management with
  fixed transaction costs}, Mathematical Finance, 5 (1995), pp.~337--356.

\bibitem{muthuraman2006multidimensional}
{\sc K.~Muthuraman and S.~Kumar}, {\em Multidimensional portfolio optimization
  with proportional transaction costs}, Mathematical Finance: An International
  Journal of Mathematics, Statistics and Financial Economics, 16 (2006),
  pp.~301--335.

\bibitem{nualart2006malliavin}
{\sc D.~Nualart}, {\em The Malliavin calculus and related topics}, vol.~1995,
  Springer, 2006.

\bibitem{peng2011new}
{\sc H.~Peng, G.~Kitagawa, M.~Gan, and X.~Chen}, {\em A new optimal portfolio
  selection strategy based on a quadratic form mean--variance model with
  transaction costs}, Optimal Control Applications and Methods, 32 (2011),
  pp.~127--138.

\bibitem{pogue1970extension}
{\sc G.~A. Pogue}, {\em An extension of the markowitz portfolio selection model
  to include variable transactions' costs, short sales, leverage policies and
  taxes}, The Journal of Finance, 25 (1970), pp.~1005--1027.

\bibitem{pun2022optimal}
{\sc C.~S. Pun and Z.~Ye}, {\em Optimal dynamic mean--variance portfolio
  subject to proportional transaction costs and no-shorting constraint},
  Automatica, 135 (2022), p.~109986.

\bibitem{samuelson1975lifetime}
{\sc P.~A. Samuelson}, {\em Lifetime portfolio selection by dynamic stochastic
  programming}, Stochastic optimization models in finance,  (1975),
  pp.~517--524.

\bibitem{steinbach1999markowitz}
{\sc M.~C. Steinbach}, {\em Markowitz revisited: mean-variance models in
  financial portfolio analysis}, SIAM Review, 43 (2001), pp.~31--85.

\bibitem{topcu2008multistage}
{\sc U.~Topcu, G.~Calafiore, and L.~El~Ghaoui}, {\em Multistage investments
  with recourse: A single-asset case with transaction costs}, in 2008 47th IEEE
  Conference on Decision and Control, IEEE, 2008, pp.~2398--2403.

\bibitem{wang2013multi}
{\sc Z.~Wang and S.~Liu}, {\em Multi-period mean-variance portfolio selection
  with fixed and proportional transaction costs}, Journal of Industrial \&
  Management Optimization, 9 (2013), p.~643.

\bibitem{xue2006mean}
{\sc H.-G. Xue, C.-X. Xu, and Z.-X. Feng}, {\em Mean--variance portfolio
  optimal problem under concave transaction cost}, Applied Mathematics and
  Computation, 174 (2006), pp.~1--12.

\bibitem{yoshimoto1996mean}
{\sc A.~Yoshimoto}, {\em The mean-variance approach to portfolio optimization
  subject to transaction costs}, Journal of the Operations Research Society of
  Japan, 39 (1996), pp.~99--117.

\bibitem{zhou2000continuous}
{\sc X.~Y. Zhou and D.~Li}, {\em Continuous-time mean-variance portfolio
  selection: A stochastic lq framework}, Applied Mathematics and Optimization,
  42 (2000), pp.~19--33.

\end{thebibliography}

\end{document}